\newtheorem{lemma}{Lemma}
\newtheorem{definition}[lemma]{Definition}
\newtheorem{example}[lemma]{Example}
\newtheorem{exercise}[lemma]{Exercise}
\newtheorem{theorem}{Theorem}
\newtheorem{corollary}[lemma]{Corollary}
\newtheorem{observation}[lemma]{Observation}
\newtheorem{conjecture}[theorem]{Conjecture}
\newcommand{\UC}{\mathbb{C}_\FactSet}
\newcommand{\Islandh}{h_{M_\FactSetAux}}
\newcommand{\Islands}[0]{\mathcal{I_\FactSet}}
\newcommand{\Globalh}{\bar{h}_\FactSet}
\newcommand{\Tuple}[1]{\ensuremath{\langle #1 \rangle}\xspace}
\newcommand{\FormatEntitySet}[1]{\ensuremath{\mathbf{#1}}\xspace}
\newcommand{\FormatFormulaSet}[1]{\ensuremath{\mathcal{#1}}\xspace}
\newcommand{\mappings}[1]{\mathcal{H}om({#1})}
\newcommand{\Predicates}{\FormatEntitySet{\Sigma}}
\newcommand{\Rule}{\ensuremath{\rho}\xspace}
\newcommand{\BCQ}{\ensuremath{\phi(\bar y)}\xspace}
\newcommand{\Body}{\ensuremath{\beta}\xspace}
\newcommand{\FactSet}{\ensuremath{\mathbb{D}}\xspace}
\newcommand{\FactSeta}{\ensuremath{\mathbb{F}}\xspace}
\newcommand{\factseta}{\ensuremath{\mathbb{F}}\xspace}
\newcommand{\factset}{\FactSet}
\newcommand{\FactSetAux}{\factseta}
\newcommand{\RuleSet}{\ensuremath{\FormatFormulaSet{T}}\xspace}
\newcommand{\ruleset}{\RuleSet}
\newcommand{\NormRuleSet}{\FormatFormulaSet{T}_{NF}}
\newcommand{\DLRuleSet}{\FormatFormulaSet{T}_{DL}}
\newcommand{\EXRuleSet}{\FormatFormulaSet{T}_{\exists}}
\newcommand{\Step}[2]{\ensuremath{\textit{Ch}_{#1}(#2)}\xspace}
\newcommand{\Chase}[1]{\ensuremath{\textit{Ch}(#1)}\xspace}
\newcommand{\sh}[1]{\textit{sh}(#1)}
\newcommand{\rsize}{r\hspace{-0.5mm}s}
\newcommand{\aaaddd}[1]{\bar a\in dom(#1)^{|\bar y|}}
\newcommand{\aaaddda}[1]{\bar a\in dom(#1)^{|\bar a|}}
\newcommand{\nic}[1]{\text{\tiny #1}}
\newcommand{\ChaseEX}[1]{\ensuremath{\textit{Ch}_{\exists}(#1)}\xspace}
\newcommand{\ChaseEXi}[2]{\ensuremath{\textit{Ch}_{#2, \exists}(#1)}\xspace}
\newcommand{\Core}[1]{Core(#1)}
\newcommand{\Schema}{\Sigma}
\newcommand{\Skeleton}{S}
\newcommand{\cparent}{cpar}
\newcommand{\cancestor}{canc}
\newcommand{\set}[1]{\{#1\}}
\newcommand{\pair}[1]{\langle#1\rangle}
\newcommand{\RuleSetD}{\ruleset_d}
\newcommand{\rulesetd}{\ruleset_d}
\newcommand{\Signature}{\Schema}
\newcommand{\ParentFunction}{par}
\newcommand{\AncestorFunction}{anc}
\newcommand{\tgdl}[2]{\makebox[0.5cm][r]{}\makebox[1.8cm][l]{#1} \makebox[2cm]{$\rightarrow$} \makebox[2cm][l]{#2}}
\newcommand{\M}{Q}
\newcommand{\setM}{\mathcal{Q}}
\newcommand{\hypenoperator}[2]{\operatorname{#1-#2}}
\newcommand{\cutred}{\hypenoperator{cut}{red}}
\newcommand{\cutgreen}{\hypenoperator{cut}{green}}
\newcommand{\mergered}{\hypenoperator{fuse}{red}}
\newcommand{\mergegreen}{\hypenoperator{fuse}{green}}
\newcommand{\reduce}{\operatorname{reduce}}
\newcommand{\GmJ}{G^{\text{\tiny -1}}}
\newcommand{\RmJ}{R^{\text{\tiny -1}}}
\begin{document}
\title{A Journey to the Frontiers of Query Rewritability}
%
 \author[1]{Piotr Ostropolski-Nalewaja}
 \author[1]{Jerzy Marcinkowski}
 \author[2]{David Carral}
 \author[3]{Sebastian Rudolph}
 \affil[1]{Institute of Computer Science, University of Wrocław}
 \affil[2]{TU Dresden,  Knowledge-Based Systems Group, Germany}
 \affil[3]{TU Dresden, Computational Logic Group, Germany}

\maketitle
\begin{abstract}
This paper is about (first order) query rewritability in the context of theory-mediated query answering. The starting point of our journey is the FUS/FES conjecture, saying that if a theory is core-terminating (FES) and admits query rewriting (BDD, FUS)
then it is uniformly bounded. We show that this conjecture is true for a wide class of ``local'' BDD theories. Then we 
ask how non-local can a BDD theory actually be and we discover phenomena which we think are quite counter-intuitive.

\end{abstract}

\section{Introduction}

The scenario we consider in this paper  has been studied extensively both in the database theory and in the context of description logics: there 
is a database {\em instance} \factset (also called the {\em fact set}, or {\em structure}) and the {\em  theory} \ruleset (or {\em the rule set}), which is a set of {\em tuple generating dependencies}  (or {\em rules}).
For a given  conjunctive query $\phi$ we want to know\footnote{This is sometimes called {\em ontology mediated query answering.}} whether \factset and \ruleset jointly entail (logically imply) $\phi$ (denoted as
$\factset, \ruleset\models \phi$).\\

\noindent
{\bf  Chase and Core Termination.} The notion of {\em Chase} is fundamental in this context. $Ch(\factset,\ruleset)$ is a  structure  constructed, from 
\factset, by the {\em chase procedure}, that is by adding, to the current structure, new terms and atoms, witnessing that 
the constraints from \ruleset are satisfied, and doing it until a fixpoint is reached. 

The structure $Ch(\factset,\ruleset)$ (which is also called {\em Chase}) is constructed in steps: $Ch_0(\factset,\ruleset)$ is defined as \factset. Then, 
for given  $Ch_i(\factset,\ruleset)$ the structure  $Ch_{i+1}(\factset,\ruleset)$ is constructed by adding (in parallel) to 
 $Ch_i(\factset,\ruleset)$ all atoms\footnote{This may involve creating new elements of the structure, called in database theory {\em terms}.} required by rules from \ruleset for elements of  $Ch_i(\factset,\ruleset)$. 
 Then, obviously,  $Ch(\factset,\ruleset)$ is defined as $\bigcup_{i\in \mathbb N} Ch_i(\factset,\ruleset)$. 
 
 It is well known \cite{chaserevisited} that $\factset, \ruleset\models \phi$ if and only if $\phi$ is true in $Ch(\factset,\ruleset)$:
 {
\[ \forall \ruleset \; \forall \factset \; \forall \phi \;\;\; ( Ch(\factset,\ruleset)\models \phi \;  \Leftrightarrow \;  \factset,\ruleset \models \phi  \ ) \] }
We say that \ruleset is {\em Core Terminating} (or that it has the FES property, where FES stands for Finite Expansion Set \cite{fusfesdef}) if, regardless of \factset,
all conjunctive queries satisfied in $Ch(\factset,\ruleset)$ are already satisfied in some  $Ch_i(\factset,\ruleset)$. More precisely, \ruleset is Core Terminating if:
{
\[ \forall \factset \; \exists i\in {\mathbb N} \; \forall \phi \;\;( Ch(\factset,\ruleset)\models \phi \;  \Leftrightarrow \;  Ch_i(\factset,\ruleset) \models \phi ) \;
\tag{\text{\footnotesize CT}  }\]}
This is an important property, since $Ch(\factset,\ruleset)$ is typically an infinite structure, only existing as an  abstract mathematical object, and impossible to query,  while $Ch_i(\factset,\ruleset)$ is  always finite and so in principle it can be constructed and queried.\\

 \noindent
{\bf The Bounded Derivation Depth property.} But {\bf the} most important property a theory  can possibly enjoy in this context is the {\em Bounded Derivation Depth Property} (or {\em Finite Unification Set Property}, FUS). We say that \ruleset has the { Bounded Derivation Depth} property (or ``is BDD'') if conjunctive queries always rewrite: for each $\phi$ there exists a query $\phi_\ruleset$, being a union of conjunctive
queries\footnote{This is known to be equivalent to the existence of first order rewriting.}, such that for every \factset we have that 
 $Ch(\factset,\ruleset)\models \phi $ if and only if $\factset\models \phi_\ruleset$. See how extremely  useful it is: instead of querying  $Ch(\factset,\ruleset)$, 
 an elusive infinite structure, we can equivalently  query \factset, the only structure we really have access to. No wonder that BDD/FUS property has been considered in literally hundreds of papers (or maybe more than that). 
 
 Important classes of BDD theories have been identified, and intensively studied, among them  decidable subclasses like:\\
 \textbullet~ linear theories, where rules only can have at most one atom in the body;\\
 \textbullet~ guarded BDD theories -- while not all guarded theories are BDD, as \cite{BBLP18} and \cite{CR15} show it is decidable whether a guarded theory is BDD, 
 so guarded BDD theories are a decidable subclass, generalizing linear theories;\\
 \textbullet~ sticky theories, building on the idea that it is  unrestricted join that makes theories non-BDD, and defined by a 
 reasonably natural syntactic restriction on the use of join \cite{CGP10}.
 
 Apart from the decidable subclasses of BDD there are some well known and natural undecidable subclasses:\\
 \textbullet~ bounded Datalog theories, already studied decades before the class BDD itself  has been discovered \cite{GMSV87};\\
 \textbullet~ binary BDD theories, where the arity of relation symbols is at most 2, studied in the context of description logics.

Another interesting, but less well known (undecidable) subclass of BDD (and superclass of sticky theories) are:\\
 \textbullet~ backward shy theories, from 
\cite{T13}.

As we explain in this paper, despite all this effort, we only understand very little about the deeper mathematical properties of BDD theories. 
 In particular we show that the intuitive understanding of BDD as theories which are ``local'', ``only depending on the small pieces of \factset''
 and ``unable to look too far''  is (more or less) correct for all aforementioned classes of  BDD theories but incorrect for the BDD theories in general.\\

 \noindent
 {\bf The FUS/FES conjecture.} As it turns out\footnote{See \cref{thm:bdd-th}.} (and is not terribly hard to prove \cite{CGL09}) theory \ruleset is BDD if and only if\footnote{In Section \ref{sec-bdd} we use this characterization as the definition of the class BDD.}:
 {
 \small
 \[ \forall \phi\; \exists i\in{\mathbb N}\; \forall \factset \;\;(Ch(\factset,\ruleset)\models \phi \; \Leftrightarrow \;  Ch_i(\factset,\ruleset) \models \phi )\;  \tag{\text{\footnotesize BDD}  }\]}
 Which means that, in order to evaluate $\phi$ is is always enough to run only first $i$ steps of the chase, with $i$  depending on $\phi$ but not on $\factset$. This is useful. But would be 
 even  more useful if $i$ could be chosen in a uniform way, independently of $\phi$.
 
 There is a striking similarity between the formula (BDD) and (CT). And a natural conjecture,
 which we call the FUS/FES conjecture,
 is that for a BDD theory which is, at the same time,
 Core Terminating, the number $i$ can be indeed chosen in a {\bf uniform way}. In other words the conjecture, 
 says that if \ruleset is both BDD and Core Terminating then:
 {
 \[ \exists i\in{\mathbb N}\; \forall \phi \; \forall \factset \; Ch(\factset,\ruleset)\models \phi \; \Leftrightarrow \; Ch_i(\factset,\ruleset) \models \phi \;  \tag{\text{\footnotesize UBDD}  }\]}
 This conjecture was earlier studied in  \cite{LMU16} where a proof was proposed, which is however incorrect and was later withdrawn, and in \cite{BLMTU19} where it is proved that it would hold true if the assumption that \ruleset is Core-Terminating was replaced by a (much stronger) assumption that it is All-Instances Terminating. \\

 \noindent
 {\bf This paper.} Our main technical result  is that the FUS/FES conjecture holds for a wide class of {\em local} (as we call them) BDD theories. 
 This class includes most of the aforementioned subclasses of BDD (apart from sticky and backward shy), in particular all BDD theories over binary signatures\footnote{This is how this paper originated: we attempted to attack the FUS/FES conjecture, we managed to prove it for the binary case, and we tried to understand how far our proof can generalize, and why it does not seem to generalize to the entire BDD class.}.

 But what we find at least as interesting is the discovery of theories which are very much non-local (and not even what we call {\em bounded-degree local}). The existence of such theories not only defies all the intuitions that have been built 
 about the BDD class but also shows that so far we have probably been able to barely scratch the surface of the BDD class, and that there is  a lot of room for new decidable/syntactic classes of BDD theories, richer than all that was considered so far\footnote{But, as we also explain in this paper, the unexpected and 
 counter-intuitive phenomena only may appear for theories of arity higher than 2. This is probably exactly why they are counter-intuitive: theories of arity 2 are much easier to imagine and they are mainly responsible for shaping our intuitions. }.
 
 To be more precise, in this paper:
 
 \noindent
 (1) We define the notion of {\em local} theories and show that each local theory is BDD. We also notice that linear theories are local (this is all easy, Section \ref{local-theories})  and that,
 if \ruleset is local then the size of each disjuncts of the rewriting $\phi_\ruleset$ of a CQ $\phi$ is linear in the size of $\phi$.
 
  \noindent
 (2) We prove that the FUS/FES conjecture holds for local theories -- every local theory that satisfies (CT) also satisfies (UBDD) (this is not that easy,  Section \ref{conjecture-for-local}). 
 
 \noindent
 (3) We show that all BDD theories over binary signatures are local, and remark  that this also holds true for all guarded BDD theories (not totally easy,  Sections \ref{local-theories} and \ref{proof-for-binary}). This means that the FUS/FES conjecture holds for such theories.
 
 \noindent
 (4) We notice that if infinite theories are allowed then the FUS/FES conjecture fails, even for binary signatures (very easy,  Section \ref{fus-fes-conjecture}).
 
 \noindent
 (5) We notice (Section \ref{slightly-beyond}) that sticky theories, while BDD, are not always local. We define another, weaker, notion, of {\em bounded degree local} theories (or {\em bd-local}). We notice that sticky theories 
 are always bd-local\footnote{For arity above 2 we only consider connected theories, whose all rules have connected bodies.}, so that all the 
 theories from previously studied decidable BDD classes are (at least) bd-local. We also notice that 
 if  attention is restricted to database instances  of bounded degree  then the two notions of locality coincide (and,
 in such case, the FUS/FES conjecture also holds for sticky theories).
 
 \noindent
 (6) We show a BDD theory which is not bounded degree local (easy, once you know it, Section \ref{slightly-beyond}).
 
 \noindent (7) In Section \ref{far-beyond} we examine the intuition that ``BDD theories are unable to look too far''. We define the notion of {\em   distancing  } theories, and 
 show that if a theory is  local then it is also distancing (easy). We also notice that  backward shy theories are obviously distancing, so that all the previously known examples of BDD theories are indeed   distancing. We show however, that there exists a BDD theory $\rulesetd$ which is not distancing (this is quite complicated, Sections \ref{far-beyond}-\ref{far-beyond-dwa} and \ref{app-2}).
 As a corollary we get that, for this BDD theory, the rewriting $\phi_{\rulesetd}$ (which is a disjunction of conjunctive queries) of a query $\phi$ can require 
 disjuncts of exponential size with respect to the size of $\phi$. In view of (1) this is in stark contrast to the previously known BDD theories\footnote{There is a bit of confusion here. A folklore belief is that existence of theories which require 
 unbounded size of rewritings is a consequence of the fact that BDD is an undecidable property of the theory (see for example the 
 stackexchange post \cite{K17}). But we never saw a detailed argument based on this proof idea. And this is probably due to the fact 
 that such argument does not exist. Why? Because being BDD is already an undecidable property for theories with binary signature, and still 
 such theories, if BDD, are local and thus admit rewritings of linear disjunct size.
 
 What we write may also sound confusing for readers who know \cite{rewriting-size-A} (and also \cite{rewriting-size-B}) where exponential lower bound is shown for 
 the size of rewritings. But \ruleset is a part of the input in both \cite{rewriting-size-A} and \cite{rewriting-size-B}, while for us  the size of rewritings 
 with respect to (a fixed) \ruleset is a  measure of the complexity of \ruleset.}
 (since also all backward shy theories admit rewritings of linear disjunct size).
 
 \noindent (8) Then, in Section \ref{ojojoj},  we go even one (huge) step further, showing (by a generalization of the techniques developed in Sections \ref{far-beyond}-\ref{far-beyond-dwa})
 that for each$K\in \mathbb N$ there exists a BDD theory $\rulesetd^K$ and a CQ $\phi(\bar y)$, such that the rewriting $\phi_{\rulesetd^K}$ has
 disjuncts of size $K$-fold exponential in the size of $\phi$.

 The rest of the paper is organized as follows. In Section \ref{section:preliminaries} we  recall the standard database theory notions which are used in this paper. Nothing surprising  happens there. Then,
 in Sections \ref{sec-skolem}--\ref{sec-core-termination} we recall definitions of semi-oblivious chase,
 and of the fundamental properties of theories: BDD and Core-Termination property. Our presentation is sometimes slightly non-standard, and tailored for the needs of the sections that follow, so it maybe would not be the best idea to skip them. Then, in Sections \ref{fus-fes-conjecture}--\ref{far-beyond-dwa} 
 our new findings are presented (with some proofs deferred to  the Appendix).
 
 
 \section{Preliminaries} 
\label{section:preliminaries}

\noindent
{\bf Queries and TGDs.}
A {\em conjunctive query} (CQ) is a  formula $\psi(\bar y) = \exists \bar x \; \Body(\bar x, \bar y)$ with \Body being a non-empty conjunction of atomic formulas over some {\em signature} (or {\em schema}) \Predicates (which is a finite set of relation symbols)
and over some set of  variables and set of constants. So, for example 
$\exists x\; Siblings(Abel,x), Female(x)$ is\footnote{Assuming that $Siblings,Female\in \Predicates$ and  $Abel$ is a constant.} a CQ.

A CQ is \emph{Boolean} (or BCQ) if all variables are quantified (as in the above example). We refer to \Body as the \emph{body} of  $\psi(\bar y)$.
By a \emph{union of conjunctive queries} (UCQ) we mean a formula being a disjunction of CQs. By the \emph{size} of a CQ, denoted $|\psi(\bar y)|$, we  mean the number of atomic formulas it is built of. 

\noindent
A {\em theory} or a \emph{rule set} is a finite\footnote{With one exception in Example \ref{infinite} where an infinite theory is considered, with an infinite signature.} set of \emph{Tuple Generating Dependencies}   (or \emph{rules}).  A  TGDs is a first-order logic  formula of the form: \vspace{-3.5mm}
$$
\forall \bar x, \bar y \; \big(\Body(\bar x, \bar y) \Rightarrow  \exists \bar w \; \alpha(\bar y, \bar w)\big)\vspace{-1mm}
$$
where $\bar x$, $\bar y$  and $\bar w$ are pairwise disjoint lists of variables, $\Body(\bar x, \bar y$) (the rule's \emph{body}) is a conjunction of atomic formulas and 
$ \alpha(\bar y, \bar w)$ (the rule's \emph{head}) is an atomic formula\footnote{Using database theory terminology, our rules are ``single head'' Tuple Generating Dependencies. This is the only reasonable choice in this context, since we want to talk about theories over binary signature: if we allowed multi-head rules, with heads comprising several atoms, then rules with predicates of any arity could be easily simulated using only arity 2 predicates.}. 
The \emph{frontier} $\bar y$ of a rule, denoted $\mathit{fr}(\rho)$, is the set of all variables that occur both in the body and the head of the rule.
We omit universal quantifiers when writing rules and treat conjunctions of atoms, such as \Body above, as atom sets.

\noindent
{\bf Structures and entailment.}
A database {\em instance}  
(or {\em structure}\footnote{One usually thinks that instances and fact sets are finite, 
while structures can also be infinite.} or a {\em fact set}) is a set of 
 facts -- atomic formulas over \Predicates . For a structure $\FactSeta$ over \Predicates by $dom(\FactSeta)$ we denote its {\em active domain} -- the set of all 
terms which appear in the facts of  $\FactSeta$. For $c,c'\in dom(\FactSeta)$ by $dist_\FactSeta(c,c')$ we mean the 
distance between $c$ and $c'$ in the Gaifman graph of $\FactSeta$: the vertices of this graph are elements of $dom(\FactSeta)$ and 
two vertices are connected with an edge if and only if they appear in the same fact.

  We  write $\factset\models \ruleset$ to say that $\factset$ is a model of $\ruleset$ -- all the TGDs from $\ruleset$ are satisfied in $\factset$.
For a pair $\RuleSet, \FactSet$, a CQ $\phi(\bar y)$, and a tuple $\aaaddd{\factset}$  we write $\RuleSet, \FactSet\models \phi(\bar a)$ to indicate that \Tuple{\RuleSet, \FactSet} entails 
$\phi(\bar a)$,  which means that $\phi(\bar a)$ is true in each structure $\factseta$ such that $\factseta\models \ruleset$ and 
$\FactSet\subseteq \factseta$.

\begin{example}\label{abel}
Imagine theory $\ruleset_a$ consisting of two rules:

\noindent
\hspace*{2mm} $Human(y)\Rightarrow \exists z\; Mother(y,z)$\\
\hspace*{2mm} $Mother(x,y) \Rightarrow Human(y)$

\noindent
and  an instance $\factset_a =\{Human(Abel)\}$. Then:\\ 
\hspace*{2mm}$\RuleSet_a, \FactSet_a \models \exists y,z\; Mother(Abel,y), Mother(y,z)$.
\end{example}

\noindent
{\bf Homomorphisms and Query Containment.}
For two structures $\factset$ and $\factseta$, a homomorphism from $\factset$ to $\factseta$ is a function
$h: dom(\factset) \to dom(\factseta)$ such that for each $\alpha\in \factset$ there is $h(\alpha)\in \factseta$.
Notice that using this notation we slightly abuse types since, formally speaking, $\alpha\not\in Dom(h)$. 

%


For two CQs $\phi(\bar y)$ and  $\psi(\bar y)$, with the same set of free variables, we say that $\phi(\bar y)$ contains $\psi(\bar y)$
if for every structure $\factset$ and for every tuple $\aaaddd{\factset}$ if $\factset\models \phi(\bar a)$ than 
also $\factset\models \psi(\bar a)$. It is well known that $\phi(\bar y)$ contains $\psi(\bar y)$ if and only if
there is a homomorphism\footnote{Queries $\phi(\bar y)$ and $\psi(\bar y)$
are seen as structures here: active domains of these structures are sets of the variables of  $\phi(\bar y)$ and $\psi(\bar y)$. }
from $\phi(\bar y)$ to $\psi(\bar y)$ which is the identity on variables from $\bar y$.

\noindent
{\bf Connected queries, rules and theories. }
For CQ one can in a natural way define its Gaifman graph. Variables are the vertices of this graph and two variables are connected by 
an edge if and only if they both appear in the same atomic formula. A conjunctive query is \emph{connected} if its Gaifman graph
is connected. A TGD is \emph{connected} if its body is. A theory is \emph{connected} if each of its rules is.

{\bf \small All the theories we consider in this paper are connected} (with the important {\bf \small exception}  for theories 
over a binary signature). This does not reduce the expressive power of such theories due to the following trivial trick: add a fresh variable
as an additional, first variable in all the  atoms appearing in the rules of the theory. Not only this 
will make the theory connected,
but it will obviously preserve its BDD and Core Termination status. But it will increase the arity -- so if we care about the arity we do not 
get connectivity for free. 

Notice that after  applying  the trivial trick to an instance $\FactSeta$  the distance between each $c$ and $c'$ from 
$dom(\FactSeta)$ will be at most 2. Also, applying this trick turns any instance with  Gaifman graph of a low (bounded) degree into one 
with high degree Gaifman graph.

\section{The Skolem Chase.}\label{sec-skolem}

The Chase is a standard algorithm, studied in a plethora of papers. It  semi-decides whether 
$\ruleset,\factset \models \phi(\bar a)$ for given theory \ruleset, instance \factset, CQ $\phi(\bar y)$ and tuple $\aaaddd{\factset}$. The Chase comes in many variants and flavors.  The best way to present our results is by using the {\bf semi-oblivious Skolem chase}\footnote{Or, to be more precise, ``Semi-oblivious chase with the Skolem naming convention''.}, 
which we define in this Section. 


\begin{definition}\label{ftau}
For a CQ $\phi(\bar y)$, with  body  consisting of a single atom, possibly preceded by some existential quantifiers,  define $\tau(\BCQ)$ as the isomorphism type of $\BCQ$. 
\end{definition}

We mean here that  $\tau(\BCQ)$ depends on the relation symbol of $\phi(\bar y)$, on the equalities between its variables and on the set of positions in $\phi(\bar y)$ which are occupied by variables\footnote{The type $\tau(\BCQ)$ should also depend on the constants occurring in $\BCQ$  but we will only apply this notion to queries without constants.} quantified  in $\BCQ$, but not on the names of the variables. 

%

For each possible isomorphism type $\tau=\tau(\phi(\bar y))$, for some conjunctive query $\phi(\bar y)$, as in Definition \ref{ftau}, and for each 
natural number $1\leq i\leq arity(E)$, where $E$ is the relation of 
$\phi(\bar y)$, let $f_i^\tau$ be a function symbol,
with arity equal to $|\bar y|$, that is the number of free variables in $\phi(\bar y)$. 

\begin{definition}[Skolemization]\label{skolemization}
For given TGD \Rule, of the form 
$\Body(\bar x, \bar y ) \Rightarrow \exists \bar w \; \alpha(\bar x, \bar w)$
by $\sh{\Rule}$ we denote the \emph{skolemization of the head of } \Rule, which is 
the atom $\alpha(\bar x, \bar w)$, with each variable $w\in \bar w$
replaced by the  term $f_i^\tau(\bar x)$, where $i$ is the earliest 
position in $\alpha(\bar x, \bar w)$ where the variable $w$ occurs.
\end{definition}

\noindent
Let, for example $\Rule$ be $E(x,y,z), P(x)\Rightarrow \exists v\; R(y,v,z,v)$. Then 
$\sh{\Rule}$ will be the atom $R(y,f_2^\tau(y,z),z,f_2^\tau(y,z)) $ where $\tau$ is the isomorphism type of
$\exists v\; R(y,v,z,v)$. Notice that $\sh{\Rule}$ does not depend on the body of $\Rule$, only on its head. In particular
it does not depend on the non-frontier variables\footnote{Including non-frontier variables as arguments of the functions
$f_i^\tau$ (like $x$ in the current example) would lead to Oblivious chase.} of the body of $\Rule$. 

Now we can define the procedure of Rule Application. Parameters of this procedure are an instance $\FactSeta$, a rule $\Rule$, and a mapping 
$\sigma$ assigning elements of the active domain of $\FactSet$ to the variables which occur in the body of $\Rule$:

\begin{definition}[Rule Application]\label{application}
Let \Rule be a rule of the form 
$\Body(\bar x, \bar y ) \Rightarrow \exists \bar w \; \alpha(\bar y, \bar w)$,
and let \FactSeta be a fact set.

\begin{itemize}
\item
 Define $\mathcal{H}om(\Rule,\FactSeta)$
as the set of all
mappings $\sigma$ from variables in $\bar x \cup \bar y $ to $dom(\factseta)$
such that 
$\sigma(\Body(\bar x, \bar y ))\subseteq\FactSeta$ (which means that all the atoms from \Body are in \FactSeta after we apply $\sigma$ to them).

\item 
For $\sigma\in \mathcal{H}om(\Rule, \FactSeta)$ define 
 $appl(\Rule, \sigma)= \sigma(\sh{\Rule})$.
 
 \end{itemize}
\end{definition}

Which, in human language, means that  $appl(\Rule, \sigma)$ is the atom that we expect to see in all structures satisfying \Rule to which the 
body of \Rule can be mapped via a homomorphism $\sigma$.

The chase procedure can now be defined. It produces,  for given  instance $\FactSet$ and theory $\RuleSet$, a sequence $\{\Step{i}{\RuleSet, \FactSet}\}_{i\in \mathbb N} $ of instances and the structure \Chase{\RuleSet, \FactSet}:

\begin{definition}[Semi-Oblivious Skolem chase procedure]\label{skolem-chase}
~\\
\textbullet~ Define  $\Step{0}{\RuleSet, \FactSet}=\FactSet$.\vspace{1.5mm} \\
\noindent
\textbullet~ Assume $\Step{i}{\RuleSet, \FactSet}$ is defined. Then:~
$\Step{i+1}{\RuleSet, \FactSet}=$ \vspace{1mm} \\
\mbox{\hspace{-1mm}$\Step{i}{\RuleSet, \FactSet}\cup \{ appl(\Rule, \sigma): \Rule\in \ruleset, \sigma\in \mathcal{H}om(\Rule,\Step{i}{\RuleSet, \FactSet})  \}.$}\vspace{1.5mm}\\
\noindent
\textbullet~  Define $\Chase{\RuleSet, \FactSet}= \bigcup_{i\in \mathbb N} \Step{i}{\RuleSet, \FactSet}$.
\end{definition}

It is well known, that \Chase{\RuleSet, \FactSet} is 
 a universal model for $\RuleSet$ and $\FactSet$ (i.e., a model that can be homomorphically mapped into any other model).
Therefore, this structure can be  used to solve CQ entailment: for any theory $\ruleset$, CQ $\phi(\bar y)$, instance $\factset$  and $\aaaddd{\factset}$ it holds that:
 \vspace{-1mm}
$$ Ch(\ruleset, \factset)\models \phi(\bar a) \;  \Leftrightarrow \;  \factset,\ruleset \models \phi(\bar a)$$   
\begin{example} Let $\ruleset_a$ and $\factset_a$ be as in Example 
\ref{abel}. 
Then we have  $\Step{0}{\RuleSet_a, \FactSet_a}=\{Human(Abel)\}$ and\\
$\Step{1}{\RuleSet_a, \FactSet_a}=\{Mother(Abel, mum(Abel))\}\cup$ $\Step{0}{\RuleSet_a, \FactSet_a}$. Then 
$\Step{2}{\RuleSet_a, \FactSet_a}=\Step{1}{\RuleSet_a, \FactSet_a}\cup$  $\{ Mother(mum(Abel), mum(mum(Abel)))\}$.  And so on
(we use the function symbol ``mum''  as an alias for
the ugly Skolem function symbol from Definition \ref{skolemization}). 
\end{example}

Now notice that there is nothing in Definition \ref{skolem-chase} that could prevent us from taking $\factset=\Step{2}{\RuleSet_a, \FactSet_a}$ and running 
chase for such $\factset$. It is easy to see that in such case we get $\Chase{\RuleSet_a, \FactSet}=\Chase{\RuleSet_a, \FactSet_a}$.
This leads to an easy:

\begin{observation}\label{posrodku}
If $\factset \subseteq \FactSeta \subseteq \Chase{\RuleSet, \FactSet}$ then $\Chase{\RuleSet, \FactSeta}=\Chase{\RuleSet, \FactSet}$.
\end{observation}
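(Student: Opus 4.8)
The plan is to reduce everything to two properties of the semi-oblivious Skolem chase, \emph{monotonicity} and \emph{idempotence}, both of which hinge on the Skolem naming convention of \cref{skolemization}. The crucial observation is that $appl(\Rule,\Subs)=\Subs(\sh{\Rule})$ depends only on $\Rule$ and on the restriction of $\Subs$ to the frontier of $\Rule$, and not at all on the instance or on the chase step in which the application happens. Consequently, ``the same'' rule application carried out inside two different instances produces the \emph{literally identical} atom (the same Skolem term), so chases of different instances can be compared by ordinary set inclusion.

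First I would establish monotonicity: for all instances $\Instance\subseteq\Instance'$ we have $\Chase{\RuleSet,\Instance}\subseteq\Chase{\RuleSet,\Instance'}$. This follows by an easy induction on $i$ showing $\Step{i}{\RuleSet,\Instance}\subseteq\Step{i}{\RuleSet,\Instance'}$. The base case is the hypothesis $\Instance\subseteq\Instance'$. For the step, if $\Step{i}{\RuleSet,\Instance}\subseteq\Step{i}{\RuleSet,\Instance'}$ then every $\Subs\in\mappings{\Rule,\Step{i}{\RuleSet,\Instance}}$ lies also in $\mappings{\Rule,\Step{i}{\RuleSet,\Instance'}}$ (its image is contained in the larger instance), and it contributes the same atom $appl(\Rule,\Subs)$ to both chases; hence $\Step{i+1}{\RuleSet,\Instance}\subseteq\Step{i+1}{\RuleSet,\Instance'}$. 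Taking unions over $i$ gives the claim.

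Next I would prove idempotence, $\Chase{\RuleSet,\Chase{\RuleSet,\factset}}=\Chase{\RuleSet,\factset}$. For this I first note that $\Chase{\RuleSet,\factset}$ is \emph{chase-saturated}: whenever $\Subs\in\mappings{\Rule,\Chase{\RuleSet,\factset}}$, finiteness of the body $\Body$ forces $\Subs(\Body)\subseteq\Step{i}{\RuleSet,\factset}$ for some $i$, so $appl(\Rule,\Subs)\in\Step{i+1}{\RuleSet,\factset}\subseteq\Chase{\RuleSet,\factset}$. For any saturated instance $\Instance$ a trivial induction then shows $\Step{i}{\RuleSet,\Instance}=\Instance$ for every $i$ (each rule application yields an atom already present), whence $\Chase{\RuleSet,\Instance}=\Instance$; applying this to $\Instance=\Chase{\RuleSet,\factset}$ gives idempotence.

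Finally I would combine the pieces. From $\factset\subseteq\FactSeta$ and monotonicity, $\Chase{\RuleSet,\factset}\subseteq\Chase{\RuleSet,\FactSeta}$. From $\FactSeta\subseteq\Chase{\RuleSet,\factset}$, together with monotonicity and idempotence, $\Chase{\RuleSet,\FactSeta}\subseteq\Chase{\RuleSet,\Chase{\RuleSet,\factset}}=\Chase{\RuleSet,\factset}$. The two inclusions yield $\Chase{\RuleSet,\FactSeta}=\Chase{\RuleSet,\FactSet}$, as required. I expect no serious obstacle: the one point that genuinely needs care is making explicit that the Skolem convention renders generated atoms context-independent, since it is exactly this determinism that lets the monotonicity induction go through by plain set inclusion --- for a chase that invents fresh nulls the identical argument would break, because the ``same'' application in two instances could produce atoms over differently-named nulls.
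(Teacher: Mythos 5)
Your proof is correct and matches the paper's (implicit) reasoning: the paper offers no detailed proof of this observation, treating it as immediate from the definitions and the worked example of re-running the chase on $\Step{2}{\RuleSet_a, \FactSet_a}$, and stressing only that the Skolem naming convention is what makes the equality hold literally rather than up to isomorphism. Your decomposition into monotonicity and idempotence, both resting on the context-independence of $appl(\Rule, \sigma)$ under the Skolem convention, is exactly a careful write-up of that mechanism.
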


It is important for us, and constitutes the main reason why we decided to use the Skolem naming convention, that the equality in  Observation \ref{posrodku}
is to be understood literally (rather than  ``up to isomorphism'').


\vspace{1mm}
\noindent
{\bf The frontier and the birth atoms.}
Let $\alpha$ be an atom from $ Ch(\ruleset, \factset)\setminus \factset$, created as $appl(\rho, \sigma)$ for some $\rho\in \ruleset$ 
and $\sigma\in {H}om(\Rule,\FactSet)$.
By frontier of $\alpha$ (denoted $\mathit{fr}(\alpha)$) we will mean the set of terms $\sigma(\mathit{fr}(\rho))$. Notice that there may be more than one 
rule application creating  the same atom $\alpha$, but:

\begin{observation}
If $appl(\rho, \sigma)=appl(\rho',\sigma')$ then $\mathit{fr}(\rho)=\mathit{fr}(\rho')$ and $\sigma$ and $\sigma'$ agree on $\mathit{fr}(\rho)$.
\end{observation}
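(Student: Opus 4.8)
The plan is to recover the provenance of a created atom from its own syntax, exploiting that under the Skolem naming convention the function symbols and their arguments record exactly which head and which frontier image produced the atom. Fix a creating application $\alpha=appl(\rho,\sigma)=\sigma(\sh{\rho})$; recall that $\sh{\rho}$ replaces each existential head variable $w$, first occurring at position $i$, with $f_i^\tau(\bar y)$, where $\tau$ is the isomorphism type of the head and $\bar y=\mathit{fr}(\rho)$. So every existential position of $\alpha$ carries a functional term $f_i^\tau(\sigma(\bar y))$, while every frontier position carries a plain image $\sigma(y)$. The whole argument splits on whether $\alpha$ contains any such functional (Skolem) term.

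If it does, say $f_i^\tau(\bar c)$ occurs in $\alpha$ with $\bar c=\sigma(\bar y)$, then its top symbol $f_i^\tau$ already determines the head type $\tau$ — hence the relation symbol, the equalities, and precisely which positions are existential and which are frontier — and its argument tuple reveals $\bar c$. Since both $\rho$ and $\rho'$ produce the identical atom $\alpha$, I would like to conclude that $\rho'$ too has an existential of type $\tau$ at that position, so that $\tau'=\tau$ and $\sigma'(\bar y')=\bar c$, giving the claim. The obstacle is that this is not forced purely syntactically: because $\sigma'$ may map a frontier variable of $\rho'$ onto a Skolem term already present in the domain, a \emph{flat} rule $\rho'$ (one whose head has no existential) could reproduce the functional term $f_i^\tau(\bar c)$ merely by substitution, and its frontier image would then be strictly larger than $\bar c$.

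Ruling this out is where the literal-equality behaviour of the Skolem chase emphasised after Observation~\ref{posrodku} does the real work, and is the step I expect to be hardest to phrase carefully. The point is that $f_i^\tau(\bar c)$ can enter the active domain only simultaneously with $\alpha$: every rule whose head has type $\tau$, applied with frontier image $\bar c$, yields one and the same atom $\alpha$, and no atom carries $f_i^\tau(\bar c)$ before that atom is produced. Hence any application that refers to $f_i^\tau(\bar c)$ inside its frontier can fire only after $\alpha$ already exists, so it never \emph{creates} $\alpha$; every genuine creation of $\alpha$ is therefore an existential application of type $\tau$ with frontier image $\bar c$, and these all agree on the frontier matching fixed by $\tau$. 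In the remaining case $\alpha$ contains no Skolem symbol at all; then no existential application can produce it (such an application would introduce a functional term), so every creating application is flat, each of their frontier images equals the set of terms occurring in $\alpha$, and agreement is immediate. Combining the two cases yields the observation, with $\mathit{fr}(\alpha)=\sigma(\mathit{fr}(\rho))=\sigma'(\mathit{fr}(\rho'))$ well defined.
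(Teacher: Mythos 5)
Your reading of the statement is the right one: taken literally, for arbitrary rule applications, the claim is false (your own ``flat rule reproducing a Skolem term'' scenario is a counterexample), and the surrounding text shows the paper intends it only for applications that \emph{create} $\alpha$. The paper states this observation with no proof at all, so your argument --- resting on the fact that $f_i^\tau(\bar c)$ can enter the active domain only simultaneously with the unique atom $\alpha$ that the pair $(\tau,\bar c)$ determines --- supplies exactly the justification the paper omits, and that simultaneity claim is the correct key ingredient.

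There is, however, a hole in your case analysis. The two cases you actually treat are (1) $\alpha$ contains a Skolem term of the form $f_i^\tau(\sigma(\bar y))$ introduced by $\rho$ (i.e., $\rho$ is an existential rule), and (2) $\alpha$ contains no Skolem symbol at all; these are not exhaustive. The missed configuration is real and is essentially the one your own ``obstacle'' paragraph describes: $\rho$ is a Datalog rule, yet $\alpha$ does mention Skolem terms, namely \emph{old} ones inherited through $\sigma$ from the body. For instance, with rules $B(x)\Rightarrow\exists w\,R(x,w)$ and $R(x,y)\Rightarrow S(x,y)$ and instance $\{B(c)\}$, the atom $S(c,f(c))$ is created by a flat rule but carries the Skolem term $f(c)$, so it falls under neither of your cases. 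The repair needs no new idea, only a reorganization of the split into ``some creating application of $\alpha$ is existential'' versus ``all creating applications of $\alpha$ are flat.'' In the second case your computation for flat rules (the frontier of a Datalog rule is the full set of head variables, hence the frontier image is the full term set of $\alpha$) nowhere uses the hypothesis that $\alpha$ is Skolem-free, so it applies verbatim. In the first case, your existing argument, applied to whichever application is existential, shows that \emph{every} creating application of $\alpha$ must be an existential application of the same head type $\tau$ with the same frontier tuple $\bar c$; this both rules out a mixed flat/existential pair and gives the agreement $\sigma(\mathit{fr}(\rho))=\sigma'(\mathit{fr}(\rho'))$. With that restructuring your proof is complete and correct.
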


Clearly, for each $t\in dom(Ch(\ruleset, \factset))$ either there is $t\in dom(\factset)$ or $t$ was created by chase as a Skolem term:

\begin{observation}\label{lem:termparentatom}
Suppose $t\in dom(Ch(\ruleset, \factset))\setminus dom(\factset)$. Then there exists exactly one atom $\alpha\in Ch(\ruleset, \factset)$ such that 
$t$ appears in $\alpha$, but $t\not\in \mathit{fr}(\alpha)$.
\end{observation}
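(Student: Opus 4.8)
The plan is to use the defining feature of the Skolem naming convention: every term $t$ the chase invents is a syntactic object $f_i^\tau(\bar c)$ that records both the isomorphism type $\tau$ of the head of the rule that created it and the frontier tuple $\bar c$ that was plugged in. Since, as I will argue, the atom in which $t$ is ``born'' is completely determined by this pair $(\tau,\bar c)$ — and the pair is in turn completely determined by $t$ — both existence and uniqueness of the witnessing atom follow almost mechanically.

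First I would record the shape of an arbitrary atom of $\Chase{\ruleset,\factset}$ that contains $t$. Because $t\notin dom(\factset)$, the term $t$ occurs in no fact of $\factset$, so every atom $\alpha$ containing $t$ lies in $\Chase{\ruleset,\factset}\setminus\factset$ and hence equals $appl(\rho,\sigma)=\sigma(\sh{\rho})$ for some rule $\rho$ and some $\sigma$. By Definition \ref{skolemization}, the terms occurring in such an atom are exactly the frontier terms $\sigma(\mathit{fr}(\rho))$, which by definition constitute $\mathit{fr}(\alpha)$, together with the freshly created Skolem terms $f_k^\tau(\sigma(\mathit{fr}(\rho)))$, where $\tau$ is the type of the head of $\rho$. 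Consequently $t$ occurs in $\alpha$ with $t\notin\mathit{fr}(\alpha)$ if and only if $t$ is one of those fresh terms, i.e. $t=f_i^\tau(\bar c)$ with $\bar c=\sigma(\mathit{fr}(\rho))$.

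Existence is then the easy half: I take the rule application that first produced $t$. It introduces $t=f_i^\tau(\bar c)$ into the atom $\beta=appl(\rho,\sigma)$ with $\mathit{fr}(\beta)=\bar c$, placing $t$ in an existential position. Since no term equals one of its own proper subterms, $t\neq c_k$ for every component $c_k$ of $\bar c$, so $t\notin\mathit{fr}(\beta)=\bar c$ and $\beta$ is a genuine witness.

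Uniqueness is the real content. Suppose $\alpha\in\Chase{\ruleset,\factset}$ also satisfies $t\in\alpha$ and $t\notin\mathit{fr}(\alpha)$. By the structural description above, $\alpha=appl(\rho',\sigma')$ and $t=f_j^{\tau'}(\bar c')$ with $\bar c'=\sigma'(\mathit{fr}(\rho'))$ and $\tau'$ the type of the head of $\rho'$. Here I invoke exactly the reason the paper adopts the Skolem convention — the remark after Observation \ref{posrodku} that the equality is to be read \emph{literally}: Skolem terms are free syntactic objects, so $f_i^\tau(\bar c)=f_j^{\tau'}(\bar c')$ forces $\tau=\tau'$, $i=j$ and $\bar c=\bar c'$ componentwise. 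Finally, because $\sh{\rho'}$ depends only on the head of $\rho'$, equivalently on $\tau'$, the atom $appl(\rho',\sigma')=\sigma'(\sh{\rho'})$ is a function of the pair $(\tau',\bar c')$ alone; since $(\tau',\bar c')=(\tau,\bar c)$, we get $\alpha=\beta$. I expect the only delicate point to be this determinism claim: one must be comfortable that, under the Skolem naming convention, the born atom both determines and is determined by $(\tau,\bar c)$, which is precisely what makes the equality in Observation \ref{posrodku} hold on the nose rather than merely up to isomorphism.
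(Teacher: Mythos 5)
Your proof is correct, and it is precisely the justification the paper leaves implicit: the statement appears only as an unproven Observation, intended to follow from the fact that under the Skolem naming convention a created term $t=f_i^\tau(\bar c)$ syntactically encodes the pair $(\tau,\bar c)$, which in turn determines the atom $\sigma(\sh{\rho})$ in which $t$ sits in an existential position. Your decomposition into the structural description of chase atoms, the no-proper-subterm argument for existence, and the freeness-of-Skolem-terms argument for uniqueness is exactly the intended reading, so there is nothing to add.
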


Such atom will be called the {\em birth atom} of $t$.


 
 \section{The Bounded Derivation Depth Property}\label{sec-bdd}

For an instance \factset, a natural number $n$ and a query $\phi(y)$ we will write\footnote{``it is {\em enough} to run $n$ steps of \ruleset-chase
on \factset to 
see if $\phi$ holds'' } $Enough(n,\phi(y),\factset,\ruleset)$ as a shorthand for:
 \vspace{-1mm}
$$ \forall \aaaddd{\factset}\;\; (Ch(\ruleset, \factset)\models \phi(\bar a)\Leftrightarrow Ch_n(\ruleset, \factset)\models \phi(\bar a))$$
 \begin{definition}\label{bdd-def}
A theory \RuleSet has the {\em Bounded Derivation Depth property}\footnote{Or {\em the Finite Unification Set property}, FUS, \cite{BLMS11}.} (in short ``is BDD'')
if:
 \vspace{-2mm}
$$ \forall \phi(y)\in CQ\;\; \exists n_\phi\in {\mathbb N}\;\; \forall \factset \;\;\; Enough(n_\phi,\phi(y),\factset,\ruleset)  $$
%
%
\end{definition}
 
A very important characterization of BDD~\cite{CGL09} is that it is the class of theories which admit query rewriting: 
 
\begin{theorem}\label{thm:bdd-th}
A theory \RuleSet is BDD if and only if for each conjunctive query $\psi(\bar y)$ there exists a finite set $rew(\psi(\bar y))$ of CQs, each of them of the form $\phi(\bar y)$,  such that:

\noindent
\textbullet~ For each instance \FactSet and each tuple $\aaaddd{\FactSet}$ there is $ Ch(\RuleSet,\FactSet)\models \psi(\bar a)$ if and only if there exists 
    $\phi(\bar y ) \in rew(\psi(\bar y ))$ such that  $\FactSet\models \phi(\bar a) $.\\
    \textbullet~ The set $rew(\psi(\bar y))$ is minimal: if $\phi(\bar y)\neq \phi'(\bar y)$ are two  elements of $rew(\psi(\bar y))$ then $\phi(\bar y)$ is not contained in $\phi'(\bar y)$.

 \end{theorem}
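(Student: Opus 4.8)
The plan is to prove both implications using two elementary properties of the semi-oblivious Skolem chase. The first is step-wise \emph{monotonicity}: if $\factset'\subseteq\factset$ then $Ch_i(\ruleset,\factset')\subseteq Ch_i(\ruleset,\factset)$, which holds \emph{literally} (not merely up to isomorphism) thanks to the Skolem naming convention, in the spirit of Observation \ref{posrodku}. The second is step-wise \emph{functoriality}: any homomorphism $g\colon\factset_1\to\factset_2$ lifts, by induction on $i$, to a homomorphism $\bar g_i\colon Ch_i(\ruleset,\factset_1)\to Ch_i(\ruleset,\factset_2)$, where a Skolem term is sent by $\bar g_i(f_k^\tau(\bar t))=f_k^\tau(\bar g_{i-1}(\bar t))$. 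This is well defined because a Skolem term depends only on the frontier image (the observation that $appl(\rho,\sigma)=appl(\rho',\sigma')$ forces agreement on the frontier), and it is a homomorphism since $appl(\rho,\sigma)$ maps to $appl(\rho,\bar g_{i-1}\circ\sigma)\in Ch_i(\ruleset,\factset_2)$. Both facts are routine inductions on the chase construction.

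For the direction \textbf{BDD $\Rightarrow$ rewriting}, fix $\psi(\bar y)$ and let $n:=n_\psi$ be the bound from Definition \ref{bdd-def}, so $Ch(\ruleset,\factset)\models\psi(\bar a)$ iff $Ch_n(\ruleset,\factset)\models\psi(\bar a)$ for all $\factset,\bar a$. The key step is a \emph{bounded support} argument: whenever $Ch_n(\ruleset,\factset)\models\psi(\bar a)$, a witnessing match uses at most $|\psi(\bar y)|$ atoms, and tracing each matched atom back through its $\le n$ generating rule applications (each rule body having at most $b$ atoms, $b$ the maximal body size in $\ruleset$) exhibits a derivation DAG of depth $\le n$ and branching $\le b$, hence at most $b^{\,n}$ leaves in $\factset$. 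Collecting the $\factset$-atoms at these leaves, together with one $\factset$-atom covering each $a_i$, yields $\factset'\subseteq\factset$ with at most $N:=|\psi(\bar y)|\cdot b^{\,n}+|\bar y|$ atoms; by monotonicity the entire DAG, hence the match, already lives in $Ch_n(\ruleset,\factset')$, so $Ch_n(\ruleset,\factset')\models\psi(\bar a)$. I then let $rew_0(\psi(\bar y))$ contain one CQ $\phi(\bar y)$ per isomorphism type of a pair $\langle D,\bar c\rangle$ with $|D|\le N$ and $Ch_n(\ruleset,D)\models\psi(\bar c)$, obtained by freezing $\bar c$ to the free variables $\bar y$ and existentially quantifying the rest of $dom(D)$; over a finite signature there are only finitely many such types. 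Completeness of $rew_0$ is exactly the support argument ($\factset'$ is isomorphic to some such $D$, and $\factset'\hookrightarrow\factset$ witnesses $\factset\models\phi(\bar a)$), while soundness uses functoriality (a homomorphism $D\to\factset$ with $\bar c\mapsto\bar a$ lifts to $Ch(\ruleset,D)\to Ch(\ruleset,\factset)$, carrying $Ch(\ruleset,D)\models\psi(\bar c)$ to $Ch(\ruleset,\factset)\models\psi(\bar a)$). Finally I obtain $rew(\psi(\bar y))$ by discarding each CQ contained in another (one per containment class); since a contained disjunct is semantically redundant, this preserves the characterization and yields the required minimal set.

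For the \textbf{converse}, assume a finite correct rewriting $rew(\psi(\bar y))=\{\phi_1,\dots,\phi_k\}$. For each $j$ let $D_j$ be the canonical frozen instance of $\phi_j$, with free variables turned into distinguished elements $\bar c$; then $D_j\models\phi_j(\bar c)$ trivially, so correctness gives $Ch(\ruleset,D_j)\models\psi(\bar c)$. Since $Ch(\ruleset,D_j)=\bigcup_i Ch_i(\ruleset,D_j)$ and a CQ match uses finitely many atoms, there is $m_j$ with $Ch_{m_j}(\ruleset,D_j)\models\psi(\bar c)$; set $n:=\max_j m_j$. I claim $Enough(n,\psi(\bar y),\factset,\ruleset)$ for every $\factset$. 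The implication $Ch_n(\ruleset,\factset)\models\psi(\bar a)\Rightarrow Ch(\ruleset,\factset)\models\psi(\bar a)$ is immediate from $Ch_n\subseteq Ch$. Conversely, if $Ch(\ruleset,\factset)\models\psi(\bar a)$ then correctness gives $\factset\models\phi_j(\bar a)$ for some $j$, i.e.\ a homomorphism $g\colon D_j\to\factset$ with $g(\bar c)=\bar a$; lifting it step-wise produces $\bar g_n\colon Ch_n(\ruleset,D_j)\to Ch_n(\ruleset,\factset)$ with $\bar g_n(\bar c)=\bar a$, which carries $Ch_n(\ruleset,D_j)\models\psi(\bar c)$ to $Ch_n(\ruleset,\factset)\models\psi(\bar a)$. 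Hence $\ruleset$ is BDD.

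I expect the main obstacle to be the completeness/bounded-support step of the first direction: making precise that only $n_\psi$ chase rounds and finitely many facts of $\factset$ are ``responsible'' for any match, so that the candidate disjuncts form a finite set enumerable up to isomorphism. This is exactly where the literal (rather than up-to-isomorphism) behaviour of the Skolem chase, emphasised after Observation \ref{posrodku}, does the real work, since it is what lets the replayed derivations over $\factset'$ reproduce the match atoms verbatim; by contrast, the functoriality lemma underpinning soundness and the converse direction is comparatively routine.
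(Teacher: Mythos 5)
The paper never proves Theorem \ref{thm:bdd-th} --- it cites \cite{CGL09} --- so your attempt can only be measured against the standard literature argument, and that is essentially what you reconstruct: monotonicity and homomorphism-lifting for the Skolem chase, a bounded-support (derivation-DAG) argument plus enumeration of isomorphism types for the forward direction, and canonical frozen instances plus lifting for the converse. Your converse direction is correct as written. In the forward direction the main steps are right, but note one misattribution: replaying the derivation DAG inside $Ch_n(\ruleset,\FactSet')$ is \emph{not} given by monotonicity (monotonicity yields the inclusion $Ch_i(\ruleset,\FactSet')\subseteq Ch_i(\ruleset,\FactSet)$, i.e.\ the opposite direction); it is a separate induction on the DAG, in which the Skolem naming convention guarantees that the replayed rule applications recreate literally the same atoms. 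Your closing paragraph shows you know this, but the proof body invokes the wrong lemma. Also, if $\psi$ contains constants (the paper's CQs may), your isomorphism types must treat those constants as rigid.

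The genuine gap is the treatment of answer tuples with repeated entries, and it sits exactly where your construction is vague: ``freezing $\bar c$ to the free variables $\bar y$.'' When the support pair $\langle D,\bar c\rangle$ has $c_i=c_j$ with $i\neq j$, no equality-free CQ with the distinct free variables $\bar y$ represents it --- a CQ $\phi(\bar y)$ cannot force two of its free variables to be evaluated at the same element --- so $rew_0$ has no disjunct for such pairs and completeness fails for exactly those answers. Worse, this cannot be patched, because the statement read literally is false. Take $\ruleset=\{E(x,y)\Rightarrow R(x,x)\}$ (a Datalog theory with $Ch=Ch_1$, hence BDD) and $\psi(y_1,y_2)=R(y_1,y_2)$. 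For $\FactSet=\{E(a,b)\}$ we have $Ch(\ruleset,\FactSet)\models\psi(a,a)$, so a correct rewriting needs some disjunct $\phi_j$ with $\FactSet\models\phi_j(a,a)$. Now evaluate $\phi_j$ over its own frozen instance $D_j$, in which $y_1\neq y_2$ are distinct elements: trivially $D_j\models\phi_j(y_1,y_2)$, so soundness of the rewriting gives $Ch(\ruleset,D_j)\models R(y_1,y_2)$; since the rule creates only atoms of the form $R(t,t)$, the atom $R(y_1,y_2)$ must already occur in the body of $\phi_j$, and then $\FactSet\models\phi_j(a,a)$ would force $R(a,a)\in\FactSet$ --- a contradiction. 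The fix, which is how the result is actually stated in the rewriting literature, is to allow disjuncts whose answer tuple may repeat variables (equivalently, equality atoms among answer variables): in the example, the missing disjunct is $\exists z\, E(y,z)$ with answer tuple $(y,y)$. With that amendment your construction goes through verbatim, pairs with repeated $\bar c$ freezing to disjuncts with repeated answer variables. So either prove the theorem in that amended form, or restrict your freezing step (and the statement) to Boolean queries or to tuples with pairwise distinct entries; as it stands, your proof is incomplete on a point where the paper's own formulation is also imprecise.
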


\noindent
{\bf The BDD property -- exercises.}
 Now we would like to encourage the  Reader  to solve a few exercises. And
well, we confess that some of the exercises  will be later used as lemmas.
 
\begin{exercise}\label{bdd1}
Consider the theory ${\mathcal T}_p$ consisting of just one rule
$E(x,y) \Rightarrow \exists z\; E(y,z)$.
Notice that this theory is BDD.
\end{exercise}

\noindent{\em Hint:} For given $\factset$ the structure $Ch(\RuleSet,\FactSet)$ comprises an forward-infinite $E$-path starting at each element of the active domain of $\factset$. Notice that if a query with $k$ variables is satisfied in $Ch(\RuleSet,\FactSet)$  then one can satisfy it using only 
terms of $Ch(\RuleSet,\FactSet)$ which are not further than in distance $k$ from the original structure $\factset$.

\noindent{\em Comment:}  This can be easily generalized: it is well known that  {\bf all linear theories are BDD} (a theory is linear if each rule only has one atom in its body).

In Exercises \ref{ex-distancing}--\ref{soon}  $\RuleSet $ is assumed to be  a BDD theory. 
 
 \begin{exercise}\label{ex-distancing}
 If \ruleset is connected then there exists  $d\in \mathbb N$ such that for each \factset and for each two terms $c,c'$ of $dom(\factset)$, if 
 $dist_{Ch(\RuleSet,\FactSet)}(c,c')=1$ then  $dist_\FactSet\leq d$.
 \end{exercise}
 
 \noindent
{\em Comment:} the hidden goal of Exercise \ref{ex-distancing} is to create the intuition of BDD as a ``local'' property: if terms from
$dom(\factset)$ appear in one atom somewhere in $Ch(\RuleSet,\FactSet)$ then they could not possibly be far away from each other already in \factset.
 
\begin{exercise}\label{cw-unique}
For any CQ $\phi(\bar y)$ the set $rew(\phi(\bar y))$ is  unique.
\end{exercise}
\noindent
{\em Hint:} Assume, towards contradiction, that $rew(\phi(\bar y))$ is not unique and use \cref{thm:bdd-th} for a CQ  being in one of such rewritings but not in some other rewriting. 
%
%
%
%
%
%
 

\begin{exercise}\label{soon}
There exists a natural number $n_{at}$ (which depends only on \ruleset) such that for any instance $\factset$, for any  $i\in \mathbb N$,
for any $R\in\Predicates$ and for any tuple $\bar t \in dom(Ch_i(\ruleset,\factset))^{\text{\tiny arity(R)}}$ if $Ch(\ruleset,\factset)\models R(\bar t)$ then 
$Ch_{i+n_{at}}(\ruleset,\factset)\models R(\bar t)$.
\end{exercise}

\noindent
Which means that facts about terms are produced by chase soon after the terms are created (with a constant delay).

  \section{The Core Termination Property}\label{sec-core-termination}
 
 \begin{definition}\label{core-t-1}
 Theory \RuleSet {\em ~Core Terminates}\footnote{
  Or has the  {\em Core Termination Property}, or has the {\em Finite Expansion Set Property}, FES.
 } 
 if:
 \vspace{-2mm}
 $$\forall \factset \;\;  \exists n\in {\mathbb N}\;\;  \forall \phi(y)\in CQ \;\;\; Enough(n,\phi(y),\factset,\ruleset)  $$
 %
%
   \end{definition}

 It is easy to see that one could equivalently define the  Core Termination Property  as:
 
 \begin{definition}\label{core-t-2}
 Theory \RuleSet   Core Terminates   if for each fact set
 \FactSet there is $n\in \mathbb N$ and 
 a homomorphism $h$ from $Ch(\RuleSet,\FactSet)$ to $Ch_n(\RuleSet,\FactSet)$, such that 
 $h$ is the identity on $dom(\factset)$.
   \end{definition}
   
Notice if $h$ is a homomorphism as in \cref{core-t-2}, the image $h(Ch(\RuleSet, \FactSet))$ is a finite structure.
 
 

 \noindent
 A {\bf \small stronger} notion is the All-Instances Termination Property:
 
 \begin{definition}
 A theory \RuleSet has the  All-Instances Termination Property  if for each instance 
 \FactSet there exists
 a number $n\in \mathbb N$ such that  $Ch(\RuleSet,\FactSet) = Ch_n(\RuleSet,\FactSet)$.
   \end{definition}
   
Intuitively, for a  theory \ruleset to All-Instance Terminate, it is {\bf \small necessary}\footnote{But not sufficient, unless we are talking about restricted chase, which is not the case here.} that for each instance \factset
one of the chase stages  $Ch_n(\ruleset, \factset)$ {\bf \small is}  already a model of \ruleset. And for Core Termination
it is {\bf \small sufficient} that for each instance \factset some stage  $Ch_n(\ruleset, \factset)$ {\bf \small contains} a model of \ruleset.

   \begin{exercise}
   Notice that the BDD theory from Exercise \ref{bdd1} is not Core-Terminating.
   \end{exercise}

\begin{exercise}\label{core1}
Notice that the theory consisting of  two rules:

$E(x,y) \Rightarrow \exists z\; E(y,z)$

$E(x,x'), E(x',x'') \Rightarrow E(x',x')$

\noindent
is Core Terminating but does not All-Instances Terminate.
\end{exercise}

\begin{definition}\label{core-t-345}
Let \ruleset be a Core-Terminating theory and let \factset be an instance. 
 \begin{itemize}
 \item Let $h$ be homomorphism, as in \cref{core-t-2}, of the smallest possible image (if there many homomorphisms whose images are of the same cardinality, then choose any of them). Then  $Core(\ruleset, \factset)$ is defined\footnote{It is known \cite{core-paper} that  whatever $h$ we choose the resulting structure  $Core(\ruleset, \factset)$ will be the same, up to isomorphism. But we are never going to use this fact.} as $h(\Chase{\RuleSet, \FactSet})$. 
 \item By $c_{\ruleset, \factset}$ we will denote the smallest number $n$ such that $Core(\ruleset, \factset) \subseteq Ch_n(\ruleset, \factset)$.

 \end{itemize}
 \end{definition}

  \begin{exercise}\label{core-wlasne}\label{obs:core-square}
 \begin{itemize}
 \item
If $\FactSet\models \ruleset$ then $Core(\ruleset, \factset)=\factset$.

\item
For any instance $\FactSet$ there is $\Core{\ruleset, \Core{\ruleset, \FactSet}} = \Core{\ruleset, \FactSet}$.
\end{itemize}
\end{exercise}

 \section{The FUS/FES Conjecture}\label{fus-fes-conjecture}

  \begin{definition}\label{def:ubdd}
Theory $\RuleSet$ is Uniformly BDD (UBDD) if:
 \vspace{-1mm}
 $$  \exists c_\ruleset\in {\mathbb N}\;\; \forall \factset \;\;  \forall \phi(y)\in CQ \;\;\; Enough(c_\ruleset,\phi(y),\factset,\ruleset)  $$

%
\end{definition}

It is easy to see that:
 
   \begin{observation}\label{fact:ubdd-by-core}
Theory $\RuleSet$ is UBDD if and only if there exists  $c_\RuleSet\in \mathbb N$ such that for any instance $\FactSet$ there is
$\Core{\RuleSet, \FactSet} \subseteq \Step{c_\RuleSet}{\FactSet}$.
\end{observation}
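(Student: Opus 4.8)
The plan is to run both implications through the minimal‑image homomorphism $h_0\colon\Chase{\ruleset,\factset}\to\Core{\ruleset,\factset}$ that defines the core (which is the identity on $dom(\factset)$, by \cref{core-t-345} and \cref{core-t-2}), and to translate the finite structure $\Core{\ruleset,\factset}$ into a conjunctive query so that the $Enough$ predicate (\cref{def:ubdd}) can be applied to it. I first record that UBDD implies Core Termination: UBDD is exactly \cref{core-t-1} with the witness $n$ taken uniformly to be $c_\ruleset$. Hence $\Core{\ruleset,\factset}$ is well defined whenever either side of the claimed equivalence is assumed, and its image is finite. I also note that, since $h_0$ fixes $dom(\factset)$, we have $\factset\subseteq\Core{\ruleset,\factset}$, so $dom(\factset)\subseteq dom(\Core{\ruleset,\factset})$.

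For the direction ``$\Leftarrow$'', assume $\Core{\ruleset,\factset}\subseteq\Step{c}{\ruleset,\factset}$ for all $\factset$; I claim UBDD holds with $c_\ruleset=c$. Fix $\factset$, a CQ $\phi(\bar y)$ and $\aaaddd{\factset}$. Because $\Step{c}{\ruleset,\factset}\subseteq\Chase{\ruleset,\factset}$ and a homomorphism witnessing $\phi(\bar a)$ in the smaller structure also lands in the larger one, $\Step{c}{\ruleset,\factset}\models\phi(\bar a)$ gives $\Chase{\ruleset,\factset}\models\phi(\bar a)$. Conversely, if $\Chase{\ruleset,\factset}\models\phi(\bar a)$, I compose the witnessing homomorphism with $h_0$; as $h_0$ is the identity on $dom(\factset)\supseteq\bar a$, this yields $\Core{\ruleset,\factset}\models\phi(\bar a)$, and the assumed inclusion delivers $\Step{c}{\ruleset,\factset}\models\phi(\bar a)$. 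Thus $Enough(c,\phi(\bar y),\factset,\ruleset)$ for all $\phi$ and $\factset$, i.e.\ UBDD.

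For ``$\Rightarrow$'', assume UBDD with constant $c_\ruleset=c$ and fix $\factset$; I show $c_{\ruleset,\factset}\le c$. Encode $\Core{\ruleset,\factset}$ as a CQ $\psi(\bar y)$ by replacing every term of $dom(\factset)$ occurring in it by a free variable (listed as $\bar a$) and every Skolem term by a distinct existential variable, taking the atoms of the core as the body. Then $\Core{\ruleset,\factset}\models\psi(\bar a)$ via the identity, whence $\Chase{\ruleset,\factset}\models\psi(\bar a)$ since $\Core{\ruleset,\factset}\subseteq\Chase{\ruleset,\factset}$. Applying $Enough(c,\psi(\bar y),\factset,\ruleset)$ gives $\Step{c}{\ruleset,\factset}\models\psi(\bar a)$, that is, a homomorphism $g\colon\Core{\ruleset,\factset}\to\Step{c}{\ruleset,\factset}$ which is the identity on $dom(\factset)$. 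Now $g\circ h_0\colon\Chase{\ruleset,\factset}\to\Step{c}{\ruleset,\factset}$ is a homomorphism of the type in \cref{core-t-2} whose image $g(\Core{\ruleset,\factset})$ has cardinality at most $|\Core{\ruleset,\factset}|$; minimality of the image in \cref{core-t-345} forces equality, so $g$ is injective on the core and $g\circ h_0$ is itself a minimal‑image homomorphism. Its image $g(\Core{\ruleset,\factset})$ is therefore a legitimate value of $\Core{\ruleset,\factset}$ and lies inside $\Step{c}{\ruleset,\factset}$, giving $c_{\ruleset,\factset}\le c$.

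The main obstacle is exactly this final identification: \cref{core-t-345} fixes one arbitrary minimal‑image homomorphism, whereas my construction only exhibits an \emph{isomorphic copy} $g(\Core{\ruleset,\factset})$ of the core inside $\Step{c}{\ruleset,\factset}$, rather than the literal chosen representative. To close this gap cleanly I would either invoke the essential uniqueness of the core (the footnote to \cref{core-t-345}), so that the depth bound is independent of the representative, or simply observe that we are free to take $g\circ h_0$ as the defining homomorphism, since it realises the minimal image. A secondary point to verify is the canonical‑query encoding: keeping the existential variables for distinct Skolem terms distinct ensures that a satisfying homomorphism of $\psi$ is precisely a homomorphism of $\Core{\ruleset,\factset}$ fixing $dom(\factset)$, which is immediate from the definition of CQ satisfaction; note in particular that this route uses Core Termination to obtain $h_0$ and thereby avoids any compactness or finiteness‑of‑$\Step{c}{\ruleset,\factset}$ argument.
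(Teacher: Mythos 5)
The paper offers no proof of this observation (it is labelled ``easy to see''), and your argument is exactly the intended one: the ($\Leftarrow$) direction by composing a satisfying homomorphism with the core homomorphism $h_0$, and the ($\Rightarrow$) direction by applying $Enough(c,\cdot,\cdot,\cdot)$ to the canonical conjunctive query of the finite structure $\Core{\ruleset,\factset}$; using the core's finiteness to sidestep any compactness argument is also the right move. Both directions are sound.

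One correction to your closing discussion: of the two fixes you offer for the representative-dependence of $\Core{\ruleset,\factset}$ (recall that \cref{core-t-345} fixes one arbitrarily chosen minimal-image homomorphism), only the second one actually works. Uniqueness of the core \emph{up to isomorphism} does not control where an isomorphic copy sits inside the chase: an isomorphism fixing $dom(\factset)$ says nothing about the chase step at which the Skolem terms of the other copy were created, so two legitimate representatives of the core could a priori lie at different depths, and invoking the footnote to \cref{core-t-345} does not close the gap. The correct resolution is the one you also state: $g\circ h_0$ is itself a homomorphism of the kind required by \cref{core-t-2}, and since $|g(\Core{\ruleset,\factset})|\leq|\Core{\ruleset,\factset}|$, minimality forces its image to be of minimal cardinality, so $g(\Core{\ruleset,\factset})\subseteq\Step{c}{\ruleset,\factset}$ is a legitimate value of $\Core{\ruleset,\factset}$. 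Read with this (harmless) freedom in the choice of representative the observation is exactly what you proved; and since your ($\Leftarrow$) direction works for \emph{every} choice of representative, the equivalence, as well as the paper's later remark that the two constants $c_\ruleset$ coincide, is unaffected.
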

 
 Notice also that the numbers $c_\RuleSet$ in Observation \ref{fact:ubdd-by-core} and in Definition \ref{def:ubdd} are equal.
 
It is very easy to produce examples of BDD theories which are not UBDD (like in Exercise \ref{bdd1}). But all examples we could produce are not Core-Terminating.
And it is  easy to produce examples of Core-Terminating theories which are not UBDD, but they are not BDD either (all unbounded Datalog theories
will be very happy to serve as examples). This leads to:

\begin{conjecture}[The FUS/FES Conjecture]
Each BDD theory which is Core-Terminating is also UBDD. 
\end{conjecture}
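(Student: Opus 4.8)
The plan is to route everything through \cref{fact:ubdd-by-core}: to establish UBDD it is enough to produce a single $c_\ruleset\in\mathbb N$ with $\Core{\ruleset,\factset}\subseteq\Step{c_\ruleset}{\ruleset,\factset}$ for \emph{every} instance $\factset$, i.e.\ to bound the numbers $c_{\ruleset,\factset}$ of \cref{core-t-345} uniformly. I would argue by contradiction, assuming $\ruleset$ is BDD and Core-Terminating yet the family $\{c_{\ruleset,\factset}\}$ is unbounded, and fixing instances $\factset_1,\factset_2,\ldots$ with $c_k:=c_{\ruleset,\factset_k}\to\infty$.

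First I would pin down, for each $k$, a concrete witness of large depth. By definition of $c_k$ the core $\Core{\ruleset,\factset_k}$ is not contained in $\Step{c_k-1}{\ruleset,\factset_k}$, so it contains an atom $\alpha_k$ whose newest term is born only at step $c_k$. Tracing this term downward through birth atoms (\cref{lem:termparentatom}), each step passing from a term to a frontier term of its birth atom, yields a derivation lineage of length about $c_k$ reaching down into $\factset_k$. Reading the core as its canonical Boolean query $Q_k$ (with the $dom(\factset_k)$-terms as a distinguished tuple $\bar a_k$ and the Skolem terms existentially quantified), minimality of the core should force $\Chase{\ruleset,\factset_k}\models Q_k(\bar a_k)$ while $\Step{c_k-1}{\ruleset,\factset_k}\not\models Q_k(\bar a_k)$; thus the BDD threshold of $Q_k$ is at least $c_k$. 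This alone is no contradiction, since BDD allows that threshold to grow with $|Q_k|$, so the crux is to show the $Q_k$ are not \emph{genuinely} of unbounded complexity.

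Here I would feed in the locality that BDD already buys us through the earlier exercises. By \cref{soon} every chase atom is produced within the constant delay $n_{at}$ of the birth of its terms, so consecutive links of the lineage are close in depth; and since $\ruleset$ is finite and any single atom has a bounded isomorphism type, each link of the lineage is described by one of \emph{finitely many} local types (a rule of $\ruleset$, the type of its birth atom, and the pattern of frontier sharing with its neighbours). A pigeonhole argument along a lineage of length $\approx c_k$ then produces two depths $i<j$ carrying identical local types. The intended finish is a \emph{folding} step: identify the configuration at depth $j$ with the one at depth $i$, collapsing the segment in between to obtain a strictly smaller retract of $\Chase{\ruleset,\factset_k}$ that still realises $Q_k(\bar a_k)$ --- contradicting either the minimality of $\Core{\ruleset,\factset_k}$ or the minimality of $c_k$.

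The main obstacle, and the place where I expect a general proof to stall, is the validity of this fold. Collapsing depth $j$ onto depth $i$ is sound only if the part of the chase growing above depth $j$ is determined by the \emph{local} configuration at $j$ alone; the distancing bound of \cref{ex-distancing} is exactly the kind of guarantee one wants here. If instead $\ruleset$ can route a long-range ``signal'' --- through a high-degree vertex, or through frontier overlaps that quietly remember a remote part of $\factset_k$ absent at depth $i$ --- then the configurations at $i$ and $j$ are not interchangeable and the cut destroys homomorphic equivalence. This is precisely the non-local behaviour the later sections exhibit (the distancing and bounded-degree discussion, and the theory $\rulesetd$), so I would expect the argument to go through cleanly for \emph{local} theories, matching the paper's main result, while the unrestricted conjecture hinges on either controlling or excluding such long-range dependencies --- a difficulty compounded by the fact that the connectivity-normalisation trick inflates the Gaifman degree and thereby weakens the per-atom bound of \cref{ex-distancing}.
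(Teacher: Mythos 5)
First, a point of order: the statement you were asked to prove is precisely the \emph{open conjecture} of the paper. The paper contains no proof of it; it only establishes the special case of local theories (\cref{thm:local-core-ubdd}), and its later sections are devoted to showing why the general case is genuinely hard (BDD theories that are not bd-local, not distancing, and without small rewritings). Your proposal, by its own admission, is also not a proof: it stalls exactly at the ``folding'' step, and that is not a presentational gap but the substance of the open problem. The pigeonhole argument is sound as far as it goes --- along a lineage of length $c_k$ one can always find two positions with the same bounded-radius local type --- but collapsing depth $j$ onto depth $i$ is only homomorphism-preserving if the part of $\Chase{\ruleset,\factset_k}$ growing above a position is determined by that local data. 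The paper's own results show this is false for BDD theories in general: the theory $\rulesetd$ of \cref{davids-example-def} is BDD yet not distancing, and in $Ch(\rulesetd,{\mathbb G}^{2^n}(a,b))$ the atom witnessing $\phi^n_R(a,b)$ joins descendants of $a$ and $b$ that are $2^n$ apart in the instance. Two chase positions there can carry identical local types (rule, birth-atom type, frontier pattern) while the structure above them encodes \emph{which} remote portion of the instance they descend from; identifying them destroys the long-range join and hence the satisfaction of $Q_k(\bar a_k)$. So no finite alphabet of local types can license the fold, and any repair must either strengthen the types unboundedly (losing pigeonhole) or restrict the theory class --- which is exactly what the paper's locality hypothesis does.

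It is also worth noting that even on the territory where your strategy should succeed --- local theories --- it differs from, and is substantially less developed than, the paper's actual argument for \cref{thm:local-core-ubdd-reformulated}. The paper never folds lineages: it forms $\UC$, the union of the cores $\Core{\ruleset,\FactSetAux}$ over all subinstances $\FactSetAux\subseteq\factset$ of size at most $l_\ruleset$, shows $\UC\subseteq\Step{k_\ruleset}{\FactSet}$ for a uniform $k_\ruleset$ (\cref{lem:UC-subset-of-chasek}), and then builds the required global retraction by composing, in arbitrary order, the homomorphisms $h^*_{M_\FactSetAux}$ obtained from \cref{lem:uber-homomorphism} and \cref{lem:big-lemma}; the uniform bound is then $k_\ruleset+n_{at}$ via \cref{soon}. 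Your sketch contains no substitute for those lemmas (in particular nothing playing the role of \cref{lem:big-lemma}, that banning the non-core chase of a small subinstance still leaves a model), so even as a proof of the local case it would need to be built essentially from scratch.
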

 
 This conjecture was  studied in \cite{LMU16} where an incorrect proof was proposed, and in \cite{BLMTU19}
 where it is proved that it would hold true if the assumption that \ruleset is Core-Terminating was replaced by the assumption that it is All-Instances Terminating.
Notice that the conjecture would be false if infinite theories, over infinite (binary) schemas were allowed:

\begin{example}\label{infinite} Suppose we had a relation symbol $E_i$ for every $i\in \mathbb N$. Let theory \ruleset comprise, for each  $i\in \mathbb N^+$,
the rule $E_i(x,y)\Rightarrow \exists z\; E_{i-1}(y,z)$. Then $\ruleset$ is BDD, Core Terminating, but is not UBDD. To see why this is the case, notice that
 only facts from a finite number of relations can appear in every given finite instance. 
\end{example}

 The main message of this paper is that even if a (finite) counterexample to the Conjecture exists, 
 it is not going to be found 
 in the areas of the BDD class which are anywhere near 
 the known avenues. But we also show that the known avenues only reach a small part of the BDD class.


\section{Local Theories.}\label{local-theories}

For a BDD theory \ruleset and a query $\psi(\bar y)$ by $r\hspace{-0.5mm}s_\ruleset(\psi(\bar y))$ let us denote the maximal number of atoms in a query from 
the set $rew(\psi(\bar y))$ (where $\rsize$ stands for ``rewriting size''). By $\rsize^{at}_\ruleset$ we will denote the maximal
$\rsize_\ruleset(\psi(\bar y))$ for atomic queries $\psi(\bar y)$ (that is queries whose body consists of a single atom).

The following (obvious) observation reflects the intuition that if \ruleset is BDD then \ruleset-mediated query answering is a ``local''
phenomenon:

\begin{observation}
For any BDD theory \ruleset, any conjunctive query  $\psi(\bar y)$, any instance \factset, and any 
tuple $\aaaddd{\factset}$ the two conditions are equivalent:
\begin{itemize}
\item $ Ch(\ruleset, \factset) \models  \psi(\bar a)$
\item There exists an instance $\factseta \subseteq \factset $ with $|\factseta| \leq \rsize_\ruleset(\psi(\bar y))$ and such
that $ Ch(\ruleset, \factseta) \models  \psi(\bar a)$.
\end{itemize}
\end{observation}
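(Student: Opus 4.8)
The plan is to prove the two directions separately, using Theorem~\ref{thm:bdd-th} to translate the chase-entailment conditions into the existence of rewriting disjuncts.

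First I would establish the easy direction, from the second condition to the first. Suppose $\factseta\subseteq\factset$ with $Ch(\ruleset,\factseta)\models\psi(\bar a)$. Since $\factseta\subseteq\factset$, the chase is monotone in its instance argument: every rule application available over $\factseta$ is also available over $\factset$ (the homomorphisms witnessing rule applicability over $\factseta$ are still homomorphisms into $\factset$), so by an easy induction on the chase stages we get $Ch_i(\ruleset,\factseta)\subseteq Ch_i(\ruleset,\factset)$ for all $i$, hence $Ch(\ruleset,\factseta)\subseteq Ch(\ruleset,\factset)$. As $\psi(\bar a)$ is a CQ, its satisfaction is preserved under extending the structure, so $Ch(\ruleset,\factset)\models\psi(\bar a)$.

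For the converse direction, which is where the content lies, I would invoke Theorem~\ref{thm:bdd-th}. Since $Ch(\ruleset,\factset)\models\psi(\bar a)$, there is a disjunct $\phi(\bar y)\in rew(\psi(\bar y))$ with $\factset\models\phi(\bar a)$. By definition of $\rsize_\ruleset(\psi(\bar y))$ as the maximal number of atoms in a query from $rew(\psi(\bar y))$, we have $|\phi(\bar y)|\leq\rsize_\ruleset(\psi(\bar y))$. The witness that $\factset\models\phi(\bar a)$ is a homomorphism $g$ sending the body of $\phi$ into $\factset$ and mapping $\bar y$ to $\bar a$; let $\factseta\coloneqq g(\phi(\bar a))$ be the image, i.e.\ the set of facts of $\factset$ that $g$ actually hits. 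Then $\factseta\subseteq\factset$, and $|\factseta|\leq|\phi(\bar y)|\leq\rsize_\ruleset(\psi(\bar y))$, since each atom of $\phi$ contributes at most one fact to the image. Finally $\factseta\models\phi(\bar a)$ by construction (the same homomorphism $g$ witnesses it, now with image inside $\factseta$), so applying Theorem~\ref{thm:bdd-th} again in the other direction with the instance $\factseta$ yields $Ch(\ruleset,\factseta)\models\psi(\bar a)$, as required.

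I expect the only subtle point to be the bookkeeping in the second direction: one must check that the rewriting disjunct $\phi$ applies to the \emph{sub}instance $\factseta$ just as it did to $\factset$, which is immediate because $\factseta$ already contains the image of $\phi$ under the witnessing homomorphism, and that the image has at most $|\phi(\bar y)|$ facts. No genuine obstacle arises; the statement is, as the authors indicate, essentially a direct reformulation of the rewriting characterisation, with the size bound coming for free from the definition of $\rsize_\ruleset$.
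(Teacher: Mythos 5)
Your proof is correct and follows exactly the route the paper intends: the paper states this observation without proof (calling it obvious), and the intended argument is precisely yours --- monotonicity of the Skolem chase under $\factseta\subseteq\factset$ for one direction, and Theorem~\ref{thm:bdd-th} applied twice (once to extract a disjunct $\phi(\bar y)\in rew(\psi(\bar y))$ satisfied in $\factset$, once to the subinstance given by the image of the witnessing homomorphism) for the other. The size bound $|\factseta|\leq\rsize_\ruleset(\psi(\bar y))$ falls out of the definition of $\rsize_\ruleset$ just as you say.
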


Notice that  in particular (if \ruleset is BDD) every atomic query which is true in $ Ch(\ruleset, \factset)$ is also true in 
$\bigcup_{\FactSetAux \subseteq \FactSet, |\FactSetAux| \leq \rsize^{at}_\ruleset} \hspace{0mm} \Chase{\RuleSet, \FactSetAux}$
which is a subset\footnote{Skolem naming convention is important here. Without it it would be unclear what a union of Chases is supposed to mean.} of $ Ch(\ruleset, \factset)$.\\

This leads to:

\begin{definition}\label{def-local-theories}
A theory $\RuleSet$ is \textbf{local} if there exists $l_\RuleSet\in \mathbb N$ such that for every instance $\FactSet$ the following holds:
$$\bigcup_{\FactSetAux \subseteq \FactSet, |\FactSetAux| \leq l_\RuleSet} \; \Chase{\RuleSet, \FactSetAux}\;\; =\;\;\Chase{\RuleSet, \FactSet}$$
\end{definition}

Is a theory local if and only if it is BDD? One implication is easy to show. And, additionally,
we can show that local theories {\bf \small admit linear size rewritings}:

\begin{observation}
Each local theory \ruleset is   BDD. Moreover,
 for each local theory \ruleset and
for each query $\psi(\bar y)$ there is: $\rsize_\ruleset(\psi(\bar y))\leq l_\ruleset|\psi(\bar y)|$ (recall that 
$|\psi(\bar y)|$ is the size of $\psi$).

\end{observation}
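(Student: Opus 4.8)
The plan is to read a bounded-size rewriting straight off the locality condition and then to identify the minimal rewriting of \cref{thm:bdd-th} with the minimization of that explicit rewriting. Throughout write $m:=l_\ruleset\cdot|\psi(\bar y)|$ for the target bound, and note first that the Skolem chase is \emph{monotone}: if $\factseta\subseteq\FactSet$ then $\Chase{\ruleset, \factseta}\subseteq\Chase{\ruleset, \FactSet}$. This follows by a one-line induction on the chase stages, since any $\sigma\in\mathcal{H}om(\Rule,\Step{i}{\ruleset, \factseta})$ is also in $\mathcal{H}om(\Rule,\Step{i}{\ruleset, \FactSet})$, and the Skolem naming convention makes $appl(\Rule,\sigma)$ literally the same atom in both chases.

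The key step is a \textbf{local entailment lemma}: for every \FactSet and every $\aaaddd{\FactSet}$,
$$\Chase{\ruleset, \FactSet}\models\psi(\bar a)\iff \exists\,\factseta\subseteq\FactSet,\ |\factseta|\le m,\ \Chase{\ruleset, \factseta}\models\psi(\bar a).$$
The $\Leftarrow$ direction is immediate from monotonicity. For $\Rightarrow$, fix a homomorphism $g$ witnessing $\Chase{\ruleset, \FactSet}\models\psi(\bar a)$; its image $g(\psi)$ has at most $|\psi(\bar y)|$ atoms. By \cref{def-local-theories} each of these atoms already lies in $\Chase{\ruleset, \factseta_j}$ for some $\factseta_j\subseteq\FactSet$ with $|\factseta_j|\le l_\ruleset$; setting $\factseta:=\bigcup_j\factseta_j$ gives $|\factseta|\le m$, and monotonicity yields $g(\psi)\subseteq\Chase{\ruleset, \factseta}$, so $g$ already witnesses $\Chase{\ruleset, \factseta}\models\psi(\bar a)$ (the components of $\bar a$ occur in $g(\psi)$ since every free variable occurs in the body of $\psi$).

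I would then turn the lemma into a rewriting. For a CQ $\phi(\bar y)$ write $\factseta_\phi$ for its frozen canonical instance, and let $R_\psi$ be the set, finite up to isomorphism, of all \emph{cores} $\phi(\bar y)$ with $|\phi(\bar y)|\le m$ and $\Chase{\ruleset, \factseta_\phi}\models\psi(\bar y)$. Completeness of $R_\psi$ is exactly the lemma: a witnessing $\factseta\subseteq\FactSet$, unfrozen with the elements of $\bar a$ kept as the free variables $\bar y$, is such a $\phi$, and $\FactSet\models\phi(\bar a)$ by inclusion. Soundness is the universality of the chase: a homomorphism $\factseta_\phi\to\FactSet$ sending the frozen frontier to $\bar a$ lifts to $\Chase{\ruleset, \factseta_\phi}\to\Chase{\ruleset, \FactSet}$, carrying the witness of $\psi(\bar y)$ to one of $\psi(\bar a)$. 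Hence $R_\psi$ is a sound and complete UCQ rewriting, so by \cref{thm:bdd-th} \ruleset is BDD.

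It remains to bound $\rsize_\ruleset(\psi(\bar y))$, i.e.\ the size of the disjuncts of the \emph{minimal} rewriting $rew(\psi(\bar y))$. Here I would minimize $R_\psi$ by repeatedly deleting any disjunct whose answer set is contained in that of another (equivalently, a disjunct admitting a frontier-preserving homomorphism from another); this preserves soundness and, because for CQs containment in a UCQ reduces to containment in a single disjunct, preserves completeness, and it terminates with a sound, complete, pairwise non-containing UCQ whose disjuncts all still come from $R_\psi$ and hence have at most $m$ atoms. By the uniqueness of the minimal rewriting (\cref{cw-unique}) this equals $rew(\psi(\bar y))$, giving $\rsize_\ruleset(\psi(\bar y))\le m=l_\ruleset|\psi(\bar y)|$. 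The main obstacle is precisely this last identification: the local lemma cheaply yields \emph{some} bounded rewriting, but transferring the bound to the canonical minimal rewriting of \cref{thm:bdd-th} requires knowing that minimizing a complete UCQ keeps it complete and that the outcome is the unique minimal rewriting, the step where the CQ-specific fact ``containment in a union equals containment in a single disjunct'' does the real work.
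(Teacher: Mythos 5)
Your proof is correct and follows essentially the same route as the paper's: locality yields a small sub-instance of $\FactSet$ whose chase already entails $\psi(\bar a)$, and the rewriting is the finite disjunction of all such sub-instances read as CQs with $\bar y$ kept free. You additionally spell out two points the paper leaves implicit -- soundness via lifting homomorphisms of instances to homomorphisms of their chases, and the transfer of the size bound to the minimal rewriting $rew(\psi(\bar y))$ via minimization and the uniqueness of Exercise~\ref{cw-unique} -- which is added rigor within the same approach rather than a different one.
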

\begin{proof}
Take any local theory \ruleset  and any CQ $\psi(\bar y)$. We need to prove that there exists 
 a union of conjunctive queries $\phi(\bar y)$ such that all the disjuncts 
in  $\phi(\bar y)$ are small -- there are at most $l_\ruleset|\psi(\bar y)|$ atoms  in each of them, and 
that for each instance $\factset$ and for  any tuple $\aaaddd{\factset}$
there is  $\Chase{\RuleSet, \FactSet} \models \psi(\bar a)$ if and only if $\factset\models \phi(\bar a)$.

Here is how we construct this $\phi$. Let us, for the rest of the proof, consider only database instances $\factset$  whose all constants are variables,
and such that $\bar y\subseteq dom(\factset)^{|\bar y|}$.
For example $\{R(\bar y, z), R(x ,\bar y), E(x,z)\}$ will be such an instance.

Now, imagine an instance $\factset$ such that  $\Chase{\RuleSet, \FactSet} \models \psi(\bar y)$.
Then it  follows directly from Definition \ref{def-local-theories} that there exists a set $\factseta_\factset \subseteq \factset$ such that 
 $|\factseta_\factset|\leq l_\ruleset|\psi(\bar y)|$ and that  $\Chase{\RuleSet, \FactSeta_\factset} \models \psi(\bar y)$ (this is because 
 each fact of the $|\psi(\bar y)|$ facts in $\psi(\bar y)$ can be produced, by chase, from a set of at most $l_\ruleset$ facts in $\factset$).
 
Now think of $\factseta_\factset$ as of a (quantifier free) conjunctive query: after all it is a set of atoms with variables. And define 
$\phi_\factset(\bar y)$ as $\factseta_\factset$ preceded by existential quantifiers binding all the variables in $\factseta_\factset$ except for 
$\bar y$. While there  are infinitely many possible instances \factset, only  only finitely many  queries $\phi_\factset(\bar y)$ are possible (up to renaming of the bounded variables). Now define our $\phi(\bar y)$ as the disjunction of all possible queries  $\phi_\factset(\bar y)$.
\end{proof}

In Section \ref{proof-for-binary} we show one of the three main  results of this paper which says
that in one important case also the opposite implication is  true:

\begin{theorem}\label{th-binary-local}
If all  predicates that appear in a BDD theory \ruleset are at most binary then \ruleset is local.
\end{theorem}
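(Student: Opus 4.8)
The plan is to prove the nontrivial inclusion of \cref{def-local-theories}. The inclusion $\bigcup_{\FactSetAux\subseteq\FactSet,\,|\FactSetAux|\le l_\RuleSet}\Chase{\RuleSet,\FactSetAux}\subseteq\Chase{\RuleSet,\FactSet}$ is immediate from monotonicity of the Skolem chase (\cref{posrodku}, since $\FactSetAux\subseteq\FactSet\subseteq\Chase{\RuleSet,\FactSet}$), so all the work is in the reverse inclusion. Concretely, I would establish the following \emph{provenance lemma}: there is a constant $K$, depending only on \RuleSet, such that every atom $\alpha\in\Chase{\RuleSet,\FactSet}$ already occurs in $\Chase{\RuleSet,\FactSetAux}$ for some $\FactSetAux\subseteq\FactSet$ with $|\FactSetAux|\le K$. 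Taking $l_\RuleSet=K$ then yields locality, since under the Skolem naming convention $\alpha$ is \emph{literally} the same atom in $\Chase{\RuleSet,\FactSetAux}$ and in $\Chase{\RuleSet,\FactSet}$.

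The structural observation that makes the binary case special, and which I would isolate first, is that \textbf{every Skolem term has at most one parent}. Indeed, a rule creates a fresh term only when its single head atom contains an existentially quantified position; for a binary (or unary) head this leaves at most one position for a frontier variable, so the frontier $\bar y$ — and hence the arity of the associated Skolem function $f_i^\tau$ — has size at most $1$. Consequently the term genealogy induced by \cref{lem:termparentatom} (mapping each Skolem term to the frontier term of its birth atom) is a \emph{forest}, whose roots are the elements of $dom(\FactSet)$ together with the boundedly many Skolem constants arising from frontier-free rules. Thus, above any term, the genealogy is a single path to a unique root, which is exactly the feature that keeps provenance under control.

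I would then prove the provenance lemma by induction on the birth step of the youngest term occurring in $\alpha$. If all terms of $\alpha$ lie in $dom(\FactSet)$, the observation preceding \cref{def-local-theories} (a direct consequence of the atomic-query rewriting guaranteed by \cref{thm:bdd-th}) supplies a witnessing $\FactSetAux$ of size at most $\rsize^{at}_\RuleSet$; because the free variables of the rewriting map to database terms, the literal atom $\alpha$ is recovered. If instead the youngest term $t$ of $\alpha$ is a Skolem term, I reproduce its birth: the birth rule fired through a homomorphism $\sigma$ whose body image consists of boundedly many atoms, each born strictly earlier than $t$, hence each of bounded provenance by the induction hypothesis. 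Collecting these provenances into a single $\FactSetAux$ and invoking monotonicity (\cref{posrodku}) makes the whole body reappear in $\Chase{\RuleSet,\FactSetAux}$, and since $t=f_i^\tau(\sigma(\bar y))$ depends only on the (single) frontier value, the \emph{same} Skolem term — and therefore the same atom $\alpha$, or any later atom about $t$, the latter available within $n_{at}$ steps by \cref{soon} — is generated.

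The main obstacle is that this induction, taken naively, bounds provenance only by $(\text{body size})^{\text{birth step}}$, which is not uniform. The heart of the proof is to collapse this to a constant, and this is precisely where BDD must be used in its full strength rather than merely through atomic rewritings. The forest structure reduces the task to controlling dependence \emph{along the single parent path} from $t$ down to its root: I would argue that BDD forbids genuine long-range propagation along this path — were an atom about $t$ to depend on unboundedly many ancestors, one could assemble, for a fixed atomic query, rewritings of unbounded size, contradicting the finiteness of $\rsize^{at}_\RuleSet$ — so that only a bounded window of the path contributes. The off-path body atoms must be handled simultaneously: here a distancing bound in the style of \cref{ex-distancing} confines them to a bounded Gaifman neighbourhood of the path, so that they too draw on only boundedly many database facts. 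Making the ``bounded window'' and the off-path neighbourhood precise, and verifying that their combined provenance is independent of the birth step, is the technical crux of the argument.
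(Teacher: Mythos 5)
Your structural starting point is the same as the paper's: the frontier of a binary-headed existential rule has at most one variable, so the Skolem-term genealogy is a forest (this is \cref{forest} in the paper's proof), and you correctly diagnose that the naive induction on birth steps only yields a bound that grows with the step. But the repair you sketch for this — which you yourself flag as ``the technical crux'' — is where the theorem actually lives, and neither of your two ideas for it works. First, the argument ``were an atom about $t$ to depend on unboundedly many ancestors, one could assemble rewritings of unbounded size, contradicting the finiteness of $\rsize^{at}_\ruleset$'' conflates the provenance of the \emph{particular chase derivation your induction follows} with semantic dependence. The paper's \cref{roznirodzice} makes this distinction concrete: with the rules $E(x,y), R(z,y)\Rightarrow \exists v\, E(y,v)$ and $E(x,y), P(z)\Rightarrow R(z,y)$, the semi-oblivious chase may derive the atoms $E(a_i,a_{i+1})$ through derivations that consume a fresh fact $P(b_i)$ at every level, so the derivation your induction reproduces genuinely has unbounded provenance --- yet every atom of that chase is also generated from just two facts, so no contradiction with BDD can be extracted from a wasteful derivation. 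What must be shown is that an \emph{alternative} derivation over a small $\FactSetAux$ produces the literally identical Skolem atom, i.e., that the birth rule's body can be re-satisfied \emph{with the same frontier value} inside $\Chase{\RuleSet,\FactSetAux}$; your induction never re-routes derivations, so it cannot deliver this. Second, the distancing idea fails because the paper explicitly exempts binary theories from the connectivity assumption: rule bodies may contain fragments (like $P(z)$ above) disconnected from the frontier, to which no Gaifman-distance bound in the style of \cref{ex-distancing} applies at all; and even for connected fragments, bounded distance does not bound the \emph{number} of facts involved.

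The missing idea --- and it is the bulk of the paper's Appendix A --- is to repair the derivations rather than analyze the given ones: normalize $\RuleSet$ into $\NormRuleSet$ by (i) replacing existential-rule bodies by their BDD rewritings and (ii) separating the fragment of each body that is disconnected from the frontier into a fresh nullary predicate $M_\phi$, proved by its own (rewritten) rule. One then shows that the normalized theory builds literally the same existential skeleton (\cref{lem:skeleton_equivalence}), and that for $\NormRuleSet$ every ancestor function is uniformly bounded (\cref{lem:crucial}): nullary atoms are derivable in a single step and hence from boundedly many facts (\cref{lem:nullary-one-step}), while connected ancestors can only be contributed by tree atoms of depth at most the maximal body size, since a connected body cannot ``reach'' $\FactSet$ from deeper in the tree. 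Datalog atoms are then absorbed using \cref{soon} (\cref{obs:dl_ancestors}). Without this normalization, or some substitute playing the same role of eliminating the nondeterministic grabbing of disconnected witnesses, your plan stalls exactly at the step you left open.
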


Our proof can be easily adapted\footnote{To be included in the journal version of this paper.} to also cover all the BDD guarded theories.


\section{The FUS/FES Conjecture  for local theories}\label{conjecture-for-local}
The following is another of our three main results:
\begin{theorem}[{\bf A}]\label{thm:local-core-ubdd}
If theory $\RuleSet$ is Core Terminating and local then it is also UBDD.
\end{theorem}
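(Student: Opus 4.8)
The plan is to reduce \cref{thm:local-core-ubdd} to the construction, for a \emph{uniform} bound $c_\ruleset$, of a single homomorphism from the whole chase into a bounded stage. Concretely, it suffices to find $c_\ruleset\in\mathbb N$ such that for every instance $\factset$ there is a homomorphism $h\colon\Chase{\ruleset,\factset}\to\Step{c_\ruleset}{\ruleset,\factset}$ that is the identity on $dom(\factset)$. Indeed, the implication $\Step{c_\ruleset}{\ruleset,\factset}\models\phi(\bar a)\Rightarrow\Chase{\ruleset,\factset}\models\phi(\bar a)$ is trivial since $\Step{c_\ruleset}{\ruleset,\factset}\subseteq\Chase{\ruleset,\factset}$; for the converse, any witnessing homomorphism $g$ of $\Chase{\ruleset,\factset}\models\phi(\bar a)$ composes with $h$ to a homomorphism $h\circ g$ of $\phi$ into $\Step{c_\ruleset}{\ruleset,\factset}$, and since $h$ fixes $\bar a\in dom(\factset)$ this gives $\Step{c_\ruleset}{\ruleset,\factset}\models\phi(\bar a)$. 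Thus such an $h$ yields the property $Enough(c_\ruleset,\phi,\factset,\ruleset)$ for every $\phi$, i.e.\ UBDD (cf.\ \cref{def:ubdd,fact:ubdd-by-core}).

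First I would fix the uniform stage $N$ at which small instances stabilise. By monotonicity of the semi-oblivious Skolem chase---$\factseta\subseteq\factset$ implies $\Step{i}{\ruleset,\factseta}\subseteq\Step{i}{\ruleset,\factset}$ for all $i$, \emph{literally} thanks to the Skolem naming convention (in the spirit of \cref{posrodku})---and Core Termination in the homomorphism form of \cref{core-t-2}, every instance $\factseta$ admits a homomorphism $h_\factseta\colon\Chase{\ruleset,\factseta}\to\Step{c_{\ruleset,\factseta}}{\ruleset,\factseta}$ fixing $dom(\factseta)$. As $c_{\ruleset,\factseta}$ depends only on the isomorphism type of $\factseta$, and there are only finitely many such types among instances with at most $L$ atoms over the finite signature (for a constant $L$ fixed below), the number $N:=\max\{c_{\ruleset,\factseta}:|\factseta|\le L\}$ is finite. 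Combined with locality (\cref{def-local-theories}), this already gives the claim \emph{atom by atom}: every single atom $\alpha\in\Chase{\ruleset,\factset}$ lies in $\Chase{\ruleset,\factseta}$ for some $\factseta\subseteq\factset$ with $|\factseta|\le l_\ruleset$, hence $h_\factseta(\alpha)\in\Step{N}{\ruleset,\factseta}\subseteq\Step{N}{\ruleset,\factset}$.

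The crux is to turn these per-atom foldings into one globally consistent homomorphism $h$: distinct atoms are folded through distinct pieces $\factseta$, and the induced term images need not agree. My approach is to make the folding \emph{neighbourhood-determined}. I would fix a radius and a size $L$ large enough that, for any atom together with the bounded neighbourhoods of all of its terms, locality places this whole finite configuration inside a single piece $\factseta$ with $|\factseta|\le L$; here connectedness of $\ruleset$ is used, so that the configurations which must be folded coherently are connected and of bounded reach. I would then define $h(t)$ to depend only on the isomorphism type of the bounded neighbourhood of $t$ in $\Chase{\ruleset,\factset}$, marked by which elements belong to $dom(\factset)$, choosing the retractions $h_\factseta$ canonically so that terms with the same marked neighbourhood type receive the same value; the Skolem naming convention makes ``same configuration'' a literal coincidence of terms, which is what lets these choices stay coherent across overlapping pieces. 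To check that $h$ is a homomorphism fixing $dom(\factset)$, take any $\alpha=R(\bar t)\in\Chase{\ruleset,\factset}$, choose the piece $\factseta$ (of size $\le L$) containing $\alpha$ and the relevant neighbourhoods, and note that on this piece $h$ agrees with $h_\factseta$, whence $R(h(\bar t))=h_\factseta(\alpha)\in\Step{N}{\ruleset,\factseta}\subseteq\Step{N}{\ruleset,\factset}$. Setting $c_\ruleset:=N$ then finishes the proof.

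The main obstacle is precisely this coherence step: proving that the fate of a term under folding is governed by a \emph{bounded} neighbourhood, and that the local retractions can be synchronised so that their term images coincide on overlaps. This is where the structure of local theories (rather than mere BDD) is essential, and where the Skolem naming convention and connectedness do the real work; by contrast, the reduction in the first paragraph and the finiteness of $N$ are comparatively routine.
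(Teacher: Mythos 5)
Your reduction (find a uniform $c_\ruleset$ and, for every $\factset$, a homomorphism $\Chase{\ruleset,\factset}\to\Step{c_\ruleset}{\ruleset,\factset}$ fixing $dom(\factset)$) and your uniform bound $N$ for the small pieces are exactly the paper's first steps, and you locate the crux where the paper does: the local homomorphisms $h_\factseta$ need not agree on overlapping domains, so they cannot simply be glued. But your resolution of that crux is not a proof, and the one concrete mechanism you propose rests on a false premise. You want $h(t)$ to be determined by the isomorphism type of a bounded-radius neighbourhood of $t$, justified by the claim that a radius and a size $L$ can be fixed so that ``any atom together with the bounded neighbourhoods of all of its terms'' fits inside a single piece $\factseta$ with $|\factseta|\le L$. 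This fails: degrees in $\Chase{\ruleset,\factset}$ are unbounded (already an element of $dom(\factset)$ can occur in arbitrarily many facts of $\factset$, e.g.\ the centre of a star $R(a,b_1),\dots,R(a,b_m)$), so even the radius-one ball around a term can contain arbitrarily many atoms, is not contained in $\Chase{\ruleset,\factseta}$ for any small $\factseta$, and realises infinitely many ``marked neighbourhood types'' --- so canonicity-by-type is not even well posed. Beyond that, the synchronisation claim itself --- that the $h_\factseta$ can be chosen so that terms with the same neighbourhood type receive the same image, and that this yields a well-defined homomorphism on all of $\Chase{\ruleset,\factset}$ --- is exactly the statement that needs proving; you state it as a design goal and acknowledge it as the main obstacle, but give no argument, so the proposal ends where the hard part begins.

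For comparison, the paper's solution is not a synchronisation of the $h_\factseta$ at all. For each small piece $\factseta$ it considers the ``banned'' terms $dom(\Chase{\ruleset,\factseta})\setminus dom(\Core{\ruleset,\factseta})$, proves that the substructure $M_\factseta$ of $\Chase{\ruleset,\factset}$ obtained by deleting every atom containing a banned term is still a model of $\ruleset$ and $\factset$ (\cref{lem:big-lemma}, which relies on the birth-atom \cref{lem:termparentatom} and on the strengthened core homomorphism of \cref{lem:uber-homomorphism}, identity on $dom(\Core{\ruleset,\factseta})$ rather than merely on $dom(\factseta)$), and from this extracts a \emph{global} retraction of $\Chase{\ruleset,\factset}$ whose image avoids that piece's banned terms. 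Composing these retractions over \emph{all} pieces, in any order, sends every term into $\bigcup_{\factseta}dom(\Core{\ruleset,\factseta})$; a final application of \cref{soon} is then needed to pass from terms to atoms --- a step your sketch silently absorbs into the unproven coherence claim, since your $h$ is assumed to agree with some $h_\factseta$ atomwise. Note that the image of a term under this composition depends on all the pieces and on nothing like a bounded neighbourhood, which is evidence that a neighbourhood-determined folding is not merely unproven but the wrong shape of object to look for (outside the bounded-degree setting, where the paper remarks the two notions of locality coincide).
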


Which means that the FUS/FES conjecture holds for local theories.
Using~\cref{fact:ubdd-by-core} one can reformulate this as:

\addtocounter{theorem}{-1}

\begin{theorem}[{\bf B}]\label{thm:local-core-ubdd-reformulated}
Suppose theory $\RuleSet$  is Core Terminating and local. Then there exists  $c_\RuleSet\in \mathbb{N}$ such that for any instance $\FactSet$ there is
$\Core{\RuleSet, \FactSet} \subseteq \Step{c_\RuleSet}{\RuleSet, \FactSet}$.
\end{theorem}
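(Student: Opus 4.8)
The plan is to reduce the statement to the construction of a single \emph{bounded-level} homomorphism. I would first observe that it suffices to find a uniform $c_\RuleSet$ such that for every instance $\FactSet$ there is a homomorphism $h \colon \Chase{\RuleSet,\FactSet} \to \Step{c_\RuleSet}{\RuleSet,\FactSet}$ that is the identity on $dom(\FactSet)$. Indeed, given such an $h$, for any CQ $\phi(\bar y)$ and any $\bar a\in dom(\FactSet)^{|\bar y|}$ with $\Chase{\RuleSet,\FactSet}\models\phi(\bar a)$, composing a witnessing homomorphism $\phi\to\Chase{\RuleSet,\FactSet}$ with $h$ gives a homomorphism $\phi\to\Step{c_\RuleSet}{\RuleSet,\FactSet}$ that still sends $\bar y$ to $\bar a$ (since $\bar a$ lies in $dom(\FactSet)$, which $h$ fixes); the converse is trivial as $\Step{c_\RuleSet}{\RuleSet,\FactSet}\subseteq\Chase{\RuleSet,\FactSet}$. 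Hence $Enough(c_\RuleSet,\phi(y),\FactSet,\RuleSet)$ holds for every $\phi$, which is exactly \cref{def:ubdd}. In particular I never need the minimal-image core: \emph{any} bounded-level homomorphism fixing $dom(\FactSet)$ suffices.

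Next I would exploit locality together with finiteness of types. Since \Predicates is finite and of bounded arity, there are only finitely many isomorphism types of instances $\FactSetAux$ with $|\FactSetAux|\le l_\RuleSet$; Core Termination retracts each $\Chase{\RuleSet,\FactSetAux}$ into a finite stage, and the required stage depends only on the isomorphism type, so the maximum $N$ over these finitely many types is well defined. Thus for every $\FactSetAux\subseteq\FactSet$ with $|\FactSetAux|\le l_\RuleSet$ there is a retraction fixing $dom(\FactSetAux)$ with $\Core{\RuleSet,\FactSetAux}\subseteq\Step{N}{\RuleSet,\FactSetAux}$, and by the literal level-wise monotonicity of the Skolem chase (the point of \cref{posrodku}) also $\Core{\RuleSet,\FactSetAux}\subseteq\Step{N}{\RuleSet,\FactSet}$. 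I then form the candidate target
\[
\mathcal{M}\;=\;\bigcup_{\FactSetAux\subseteq\FactSet,\ |\FactSetAux|\le l_\RuleSet}\Core{\RuleSet,\FactSetAux}\;\subseteq\;\Step{N}{\RuleSet,\FactSet},
\]
and taking singleton subsets $\{\Fact\}$ for each fact $\Fact\in\FactSet$ shows $\FactSet\subseteq\mathcal{M}$.

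The payoff is that everything collapses once $\mathcal{M}$ is known to be a model. If $\mathcal{M}\models\RuleSet$, then, because $\FactSet\subseteq\mathcal{M}$ and $\Chase{\RuleSet,\FactSet}$ is a universal model of $\RuleSet,\FactSet$, the canonical homomorphism maps $\Chase{\RuleSet,\FactSet}$ into $\mathcal{M}$ while fixing $dom(\FactSet)$. As $\mathcal{M}\subseteq\Step{N}{\RuleSet,\FactSet}$, the reduction of the first paragraph applies with $c_\RuleSet=N$, establishing UBDD; and via \cref{fact:ubdd-by-core} this yields precisely the reformulation $\Core{\RuleSet,\FactSet}\subseteq\Step{N}{\RuleSet,\FactSet}$ in the statement. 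So the whole theorem rests on the single claim that $\mathcal{M}$ is a model of $\RuleSet$.

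Proving $\mathcal{M}\models\RuleSet$ is the hard part. Fix a rule $\Rule\colon \Body(\bar x,\bar y)\Rightarrow\exists\bar w\,\alpha(\bar y,\bar w)$ and a body match $\sigma\colon\Body\to\mathcal{M}$; I must exhibit a head witness already in $\mathcal{M}$. Since the theory is connected, $\sigma(\Body)$ is a connected, bounded-size set of chase atoms; covering each of them by a size-$\le l_\RuleSet$ subset and taking the union produces a bounded $\FactSetAux^\ast\subseteq\FactSet$ with $\sigma(\Body)\subseteq\Chase{\RuleSet,\FactSetAux^\ast}$, so \Rule fires and the head witness lies in $\Chase{\RuleSet,\FactSetAux^\ast}$. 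Each $\Core{\RuleSet,\FactSetAux}$ is itself a model of \RuleSet (a retract of the model $\Chase{\RuleSet,\FactSetAux}$), so if all of $\sigma(\Body)$ sat inside a \emph{single} core whose retraction fixed the frontier terms $\sigma(\bar y)$, the witness would follow at once. The obstacle is exactly the \textbf{coherence of the local cores}: different atoms of $\sigma(\Body)$ a priori witness membership in cores of different subsets, and the frontier terms $\sigma(\bar y)$ — which may be generated Skolem terms rather than elements of $dom(\FactSet)$ — could be fixed by one local retraction but moved by another. The plan to overcome this is to \emph{pin} the frontier terms: using \cref{lem:termparentatom} (each generated term has a unique birth atom, giving the chase a forest structure over $\FactSet$) together with the bounded-delay property of \cref{soon}, I would choose the witnessing subsets canonically along the birth-atom ancestry so that the relevant local retractions agree on their overlap and jointly fix $\sigma(\bar y)$, thereby transporting the head witness from $\Chase{\RuleSet,\FactSetAux^\ast}$ into some $\Core{\RuleSet,\FactSetAux^{\ast\ast}}\subseteq\mathcal{M}$. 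Establishing this canonical, overlap-consistent choice is the only genuinely difficult step; the rest is bookkeeping.
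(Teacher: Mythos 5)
Your first two reductions are sound and in fact coincide with the paper's own setup: your target $\mathcal{M}$ is exactly the paper's $\UC=\bigcup_{\FactSetAux\in\Islands}\Core{\RuleSet,\FactSetAux}$, and your uniform bound $N$ is the paper's \cref{lem:UC-subset-of-chasek}. The fatal problem is the claim on which you explicitly make the whole theorem rest: that $\mathcal{M}\models\RuleSet$. This is false in general. The paper says so explicitly, in the discussion preceding \cref{lem:last-big} in \cref{conjecture-for-local}: if the local retractions $h_\FactSetAux$ could always be chosen coherently on overlaps --- which is precisely what your ``pinning along birth-atom ancestry'' plan is trying to arrange --- then $\UC\models\ruleset$ would always hold, and the authors report having found $\factset$ and $\ruleset$ for which $\UC\not\models\ruleset$. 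So your last step is not merely ``the only genuinely difficult step'': no canonical, overlap-consistent choice of witnessing subsets can exist, and with it the universality argument (homomorphism from $\Chase{\factset}$ into $\mathcal{M}$) that carries your whole proof collapses. Note also that without modelhood you have no other route to a homomorphism whose image is contained in $\mathcal{M}$ \emph{as a set of atoms}; the paper deliberately never claims such a homomorphism exists.

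What the paper does instead is weaken the target in two ways, and this is the real content of the proof. First, rather than a homomorphism into $\UC$, it constructs (\cref{lem:last-big}) a homomorphism $\Globalh\colon\Chase{\factset}\to\Chase{\factset}$, identity on $dom(\factset)$, whose image \emph{terms} all lie in $dom(\UC)$; the image \emph{atoms} may well lie outside $\UC$, and they are recovered inside $\Step{k_\ruleset+n_{at}}{\FactSet}$ by the bounded-delay property of \cref{soon}, giving $c_\RuleSet=k_\ruleset+n_{at}$. Second, the models used to build $\Globalh$ are not glued local cores but the structures $M_\FactSetAux$: the induced substructures of $\Chase{\factset}$ on the non-banned terms. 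These contain many more atoms than $\UC$ (all chase atoms over surviving terms, including atoms generated from other parts of \factset), and it is for \emph{them} --- not for $\UC$ --- that modelhood can be proved (\cref{lem:big-lemma}, using birth atoms and the strengthened retraction of \cref{lem:uber-homomorphism}, which fixes $dom(\Core{\FactSetAux})$ rather than just $dom(\FactSetAux)$). Composing the resulting retractions $h^*_{M_\FactSetAux}$ over all $\FactSetAux\in\Islands$ yields $\Globalh$. So the ingredients you gesture at (birth atoms, bounded delay, retractions fixing more than $dom(\FactSetAux)$) do all appear in the correct proof, but they are deployed to establish a different, true statement, not to repair the false one.
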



%

\noindent
{\bf \small 
We devote the rest of this Section to the proof of~\cref{thm:local-core-ubdd-reformulated}.}

\noindent
 {\bf \small Let us fix -- till the end of this Section -- the theory $\RuleSet$,} which is both Core Terminating and local (so also BDD). To simplify the notation, for any instance $\FactSet$ we will write $\Chase{\FactSet}$ instead of $\Chase{\RuleSet, \FactSet}$ and $\Core{\FactSet}$ instead of $\Core{\RuleSet, \FactSet}$.

%

\begin{definition}

For an instance  $\FactSet$ define $\Islands$ as the family of sets $\{\factseta: \factseta\subseteq \factset, |\factseta|\leq l_\RuleSet \}$.
%
Define $\UC = \bigcup_{\FactSetAux \in \Islands} \Core{\FactSetAux}$.
\end{definition}

\begin{lemma}\label{lem:UC-subset-of-chasek}
There exists  $k_\ruleset\in \mathbb N$ depending only on $\RuleSet$ (but not on \factset), such that:
$\UC \subseteq \Step{k_\ruleset}{\FactSet}$.
\end{lemma}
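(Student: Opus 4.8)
The plan is to bound, uniformly in $\FactSet$, the stage of $\Chase{\FactSet}$ by which each of the finitely many (up to isomorphism) small cores $\Core{\FactSetAux}$ has appeared, and then take $k_\ruleset$ to be the maximum of these stages. Since $\UC=\bigcup_{\FactSetAux\in\Islands}\Core{\FactSetAux}$ is a union ranging over $\FactSetAux\subseteq\FactSet$ with $|\FactSetAux|\leq l_\RuleSet$, it suffices to exhibit a single $k_\ruleset\in\mathbb N$, depending only on $\RuleSet$, with $\Core{\FactSetAux}\subseteq\Step{k_\ruleset}{\FactSet}$ for every $\FactSetAux\in\Islands$.

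First I would record the monotonicity of the semi-oblivious Skolem chase: if $\FactSetAux\subseteq\FactSet$ then $\Step{n}{\FactSetAux}\subseteq\Step{n}{\FactSet}$ for every $n$. This is an easy induction on $n$ — any $\sigma\in\mathcal{H}om(\Rule,\Step{n}{\FactSetAux})$ is also a homomorphism into the larger $\Step{n}{\FactSet}$, and, crucially, the Skolem naming convention makes $appl(\Rule,\sigma)$ the \emph{same} atom in both chases (this is exactly the ``literal equality'' phenomenon stressed after \cref{posrodku}). Hence $\Core{\FactSetAux}\subseteq\Chase{\FactSetAux}\subseteq\Chase{\FactSet}$, and, more to the point, if $\Core{\FactSetAux}\subseteq\Step{m}{\FactSetAux}$ for some $m$ then already $\Core{\FactSetAux}\subseteq\Step{m}{\FactSet}$. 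So the whole problem reduces to bounding $c_{\ruleset,\FactSetAux}$ — the stage at which the core of $\FactSetAux$ appears in its \emph{own} chase — uniformly over $\FactSetAux\in\Islands$.

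For this uniform bound I would use two facts. Since $\RuleSet$, and hence its signature $\Predicates$, is finite, the arities are bounded, and every $\FactSetAux\in\Islands$ has at most $l_\RuleSet$ atoms; therefore there are only finitely many instances $\FactSetAux$ up to isomorphism (equivalently, up to renaming of domain elements), say of types $\tau_1,\dots,\tau_N$. By Core Termination of $\RuleSet$, each $\Core{\FactSetAux}$ is a finite structure appearing at some finite stage $c_{\ruleset,\FactSetAux}$ (\cref{core-t-345}). Moreover the Skolem chase is invariant under isomorphisms of the input: an isomorphism $\iota\colon\FactSetAux\to\FactSetAux'$ extends canonically to Skolem terms and yields, for every $n$, an isomorphism $\Step{n}{\FactSetAux}\to\Step{n}{\FactSetAux'}$ that commutes with $appl$. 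Consequently both $\Core{\FactSetAux}$ and the stage $c_{\ruleset,\FactSetAux}$ at which it first appears depend only on the isomorphism type of $\FactSetAux$ (fixing the minimal-image homomorphism of \cref{core-t-345} to respect isomorphisms makes this literal). Setting $k_\ruleset:=\max_j c_{\ruleset,\tau_j}$ — a finite number, as a maximum over the $N$ types, depending only on $\RuleSet$ — gives $\Core{\FactSetAux}\subseteq\Step{k_\ruleset}{\FactSetAux}\subseteq\Step{k_\ruleset}{\FactSet}$ for every $\FactSetAux\in\Islands$, and taking the union yields $\UC\subseteq\Step{k_\ruleset}{\FactSet}$.

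The step I expect to be the main obstacle is the isomorphism-invariance of the \emph{stage} $c_{\ruleset,\FactSetAux}$, rather than merely of the core as an abstract structure: one must verify that the minimal-image homomorphism of \cref{core-t-345} can be selected coherently across isomorphic instances, and that the extended isomorphism on Skolem terms genuinely commutes with $appl$ stage by stage, so that the finitely many values $c_{\ruleset,\tau_j}$ really do bound $c_{\ruleset,\FactSetAux}$ for all $\FactSetAux\in\Islands$. Everything else — the monotonicity of the chase under $\subseteq$ and the finiteness of the type count — is routine.
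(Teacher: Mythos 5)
Your proposal is correct and follows essentially the same route as the paper: both reduce the claim to the finiteness, up to isomorphism, of the set of instances of size at most $l_\RuleSet$, and then take $k_\ruleset$ to be the maximum of the finitely many values $c_{\ruleset,\FactSetAux}$ supplied by Core Termination. The only difference is explicitness: you spell out the chase monotonicity $\Step{n}{\FactSetAux}\subseteq\Step{n}{\FactSet}$ (resting on the Skolem naming convention) and the isomorphism-invariance of the stage $c_{\ruleset,\FactSetAux}$ (via a coherent choice of the minimal-image homomorphism in \cref{core-t-345}), both of which the paper's two-line proof leaves implicit.
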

\begin{proof}
The set  $\mathfrak{A} = \set{\FactSetAux \;:\; |\FactSetAux| \leq l_\RuleSet}$ of all instances (over\Predicates) of size at most $l_\RuleSet$ is finite (up to isomorphisms).
Recall that $\RuleSet$ is Core Terminating and let $k_\ruleset=\max\{c_{\ruleset, \factset}: \factset\in \mathfrak{A} \}$.
\end{proof}

 Theorem \ref{thm:local-core-ubdd-reformulated}(B) would be proved if we could  find a homomorphism $\Globalh$ 
from $Ch(\factset)$ to $\UC$ (being the identity on $dom(\factset)$).
And, since $\ruleset$ is Core Terminating, we know that for each $\factseta\in \Islands$ there exists a homomorphism $h_\factseta$ from 
$Ch(\factseta)$ to $\UC$ and we know that $\bigcup_{\FactSetAux \in \Islands} Ch(\factseta)=Ch(\factset)$.
So why cannot we just define  $\Globalh=\bigcup_{\FactSetAux \in \Islands} h_\factseta$? Well, this is because 
the domains of $h_\factseta$ and $h_{\factseta'}$ may overlap (for some $\factseta\neq \factseta'$) and there is no
guarantee that $h_\factseta$ and $h_{\factseta'}$ will agree on the terms which are in both domains\footnote{
If $\Globalh$ could be produced in this way then  $\UC\models\ruleset$ would always hold.
And we found an example (not included here) of  $\factset$ and $\ruleset $ such that $\UC\not\models\ruleset$.}.

But the idea to somehow build a global homomorphism $\Globalh$ using the local homomorphisms $h_\factseta$ 
is not that bad. And the set of facts $\UC$ will indeed prove extremely useful in this context. As we are going to show: 

\begin{lemma}\label{lem:last-big}
For any instance $\factset$ there exists a homomorphism $\Globalh$ from $Ch(\factset)$ to $Ch(\factset)$,
being the identity on $dom(\factset)$, 
such that for each  $t\in dom(Ch(\factset))$ there is  $\Globalh(t)\in dom(\UC) $.
\end{lemma}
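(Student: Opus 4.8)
The plan is to glue the local core homomorphisms $h_\factseta : \Chase{\factseta} \to \Core{\factseta}$ (which exist for every island $\factseta \in \Islands$ because $\ruleset$ Core Terminates, cf.\ \cref{core-t-2}) into one global map $\Globalh$, while sidestepping the disagreement problem noted just above the lemma. First I would record the easy preliminary facts that make $\UC$ behave well: every element of $dom(\factset)$ already lies in $dom(\UC)$ and every fact of $\factset$ lies in $\UC$ (a single fact is itself an island of size $1 \le l_\ruleset$, and its core is the identity on its own domain), so $\factset \subseteq \UC \subseteq \Chase{\factset}$; by \cref{posrodku} this yields $\Chase{\UC} = \Chase{\factset}$. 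Locality (\cref{def-local-theories}) gives the complementary decomposition $\Chase{\factset} = \bigcup_{\factseta \in \Islands} \Chase{\factseta}$, so every atom of the chase --- in particular every birth atom and every term --- is seen inside some island's chase and can be pushed into $\UC$ by the corresponding $h_\factseta$. The subtlety to keep in mind throughout is that the target is not $\UC$ as a structure but the substructure $\mathcal{U}$ of $\Chase{\factset}$ induced on the vertex set $dom(\UC)$: $\Globalh$ is only required to send terms into $dom(\UC)$, while the image atoms may be any atoms of $\Chase{\factset}$ over these terms, not only those already present in $\UC$. This extra room is essential, and is exactly what the footnote preceding the lemma warns cannot be dispensed with.

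I would then build $\Globalh$ by well-founded induction on the birth depth $d(t)$ of terms (the least $i$ with $t \in dom(\Step{i}{\factset})$), setting $\Globalh$ to be the identity on $dom(\factset)$ and maintaining the invariant that, after stage $i$, the partial map is a homomorphism of the substructure spanned by the depth-$\le i$ terms into $\mathcal{U}$. The engine that makes the homomorphism property almost automatic is the Skolem naming convention: since each $t \notin dom(\factset)$ is literally the Skolem term $f_j^\tau(\mathit{fr}(t))$ determined by its unique birth atom (\cref{lem:termparentatom}), any map commuting with Skolem-term formation sends a rule application $appl(\rho,\sigma)$ to $appl(\rho,\Globalh\circ\sigma)$; as the body atoms have strictly smaller depth, the induction hypothesis places $\Globalh\circ\sigma$ in $\mathcal{H}om(\rho,\Chase{\factset})$, and because $\Chase{\factset}$ is a model the image atom lands back in $\Chase{\factset}$. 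The same computation disposes of the datalog atoms (rule heads without existential variables), whose terms are all of smaller-or-equal depth.

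The work is therefore concentrated entirely in the definition of $\Globalh(t)$ for a new term $t$: its frontier $\mathit{fr}(t)$ has already been committed to values in $dom(\UC)$, the faithful target $f_j^\tau(\Globalh(\mathit{fr}(t)))$ is a genuine term of $\Chase{\factset}$ (by the previous paragraph), but it will in general escape $dom(\UC)$ precisely when the committed frontier values were drawn from the cores of several different islands. Here I would choose an island $\factseta \in \Islands$ whose chase contains the birth atom of $t$ and use $h_\factseta$ to retract the faithful target back into $\Core{\factseta} \subseteq \UC$, taking care first to reconcile $h_\factseta$ with the already-fixed frontier images. The reconciliation should be available because both maps send $\mathit{fr}(t)$ into $dom(\UC)$ and, by semi-obliviousness, the whole cone generated above $t$ is determined by the bounded local type of its frontier; an agreement on the frontier can thus be transported along an endomorphism of $\UC$ to an agreement on $t$ (this is where \cref{posrodku} and the literal, rather than up-to-isomorphism, equalities are used).

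I expect this last reconciliation step to be the main obstacle, and it is exactly the gap flagged in the paragraph preceding the lemma: the naive union $\bigcup_{\factseta} h_\factseta$ fails because two islands may map a shared term to two different representatives. Making the choice of island and the transport of $h_\factseta$ cohere across the (overlapping, DAG-shaped, since a term may have several frontier-parents) dependency structure of the chase is the delicate heart of the argument; everything else --- the preliminary facts, the depth induction, and the homomorphism check --- is routine once this coherence is in place. Finally, totality of $\Globalh$ and the containment $\Globalh(dom(\Chase{\factset})) \subseteq dom(\UC)$ follow from locality (every term is treated) together with the construction, while identity on $dom(\factset)$ holds by the base case.
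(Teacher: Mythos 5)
Your proposal correctly reproduces the setup (the islands $\factseta\in\Islands$, the local maps $h_\factseta$, the reason the naive union $\bigcup_\factseta h_\factseta$ fails) and correctly locates where all the difficulty sits --- but it does not resolve that difficulty: you explicitly defer the ``reconciliation'' of a locally chosen $h_\factseta$ with the already-committed frontier images to a step you yourself call ``the main obstacle'' and ``the delicate heart of the argument,'' so what you have is a plan, not a proof. Worse, the mechanism you sketch for that step would fail. The map $h_\factseta$ is defined only on $dom(\Chase{\factseta})$, whereas the committed values $\Globalh(\mathit{fr}(t))$ generally lie in cores of \emph{other} islands, outside $dom(\Chase{\factseta})$, and the ``faithful target'' $f_i^\tau(\Globalh(\mathit{fr}(t)))$ need not belong to $dom(\Chase{\factseta})$ either --- so there is nothing for $h_\factseta$ to retract. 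Nor is there an ``endomorphism of $\UC$'' to transport agreement along: $\UC$ need not even be a model of $\ruleset$, as the paper's footnote before \cref{lem:last-big} warns. Finally, once you redirect $\Globalh(t)$ away from the faithful target, your claim that the Skolem convention makes the homomorphism property ``almost automatic'' collapses: atoms in which $t$ occurs other than as the newborn term (datalog atoms over $t$, or birth atoms of deeper terms having $t$ in their frontier) are no longer images of rule applications with commuting Skolem structure, and nothing in your construction guarantees that their images are atoms of $\Chase{\factset}$.

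The paper's proof supplies exactly the ideas missing here, and they change the shape of the argument from per-term to per-island. First (\cref{lem:uber-homomorphism}), the core retraction is upgraded to an $h^*_\factseta$ which is the identity on all of $dom(\Core{\factseta})$, not merely on $dom(\factseta)$. Second --- the main technical step, \cref{lem:big-lemma} --- for each island $\factseta$ the substructure $M_\factseta$ of $\Chase{\factset}$ obtained by deleting $\factseta$'s banned terms is a model of $\ruleset$ and $\factset$; this is where the birth-atom/Skolem analysis you invoke is actually put to work. Since $\factset \subseteq M_\factseta \subseteq \Chase{\factset}$, \cref{posrodku} gives $\Chase{M_\factseta} = \Chase{\factset}$, and applying \cref{lem:uber-homomorphism} to $M_\factseta$ as an instance (it equals its own core, being a model) yields a \emph{globally defined} retraction $h^*_{M_\factseta} : \Chase{\factset} \to \Chase{\factset}$ fixing every term except those banned by $\factseta$, which it pushes into $dom(\Core{\factseta})$. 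The global homomorphism $\Globalh$ is then simply the composition of all these retractions, in an arbitrary order: no island is ever chosen per term and no reconciliation is needed, because whenever some factor displaces the running value it lands in a core domain, hence in $dom(\UC)$, so the last displacement leaves the final value in $dom(\UC)$; and if no factor ever displaces $t$, locality places $t$ in some $\Chase{\factseta_t}$, where being fixed by $h^*_{M_{\factseta_t}}$ forces $t \in dom(\Core{\factseta_t}) \subseteq dom(\UC)$. Your outline, by keeping the per-term choice of island, keeps precisely the coherence problem this construction was designed to eliminate.
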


\noindent
{\bf \small Let us first show how Theorem \ref{thm:local-core-ubdd-reformulated} follows from Lemma \ref{lem:last-big}:}


Suppose some \factset is fixed and $\Globalh$ is a homomorphism, as promised by Lemma \ref{lem:last-big}.
We know that for each  $t\in dom(Ch(\factset))$ there is  $\Globalh(t)\in dom(\UC) $ and that 
$\UC \subseteq Ch_{k_\ruleset}(\factset)$. So, one could think, we have 
$\Globalh(\Chase{\factset})\subseteq Ch_{k_\ruleset}(\factset) $.

 But it is not that simple. Lemma \ref{lem:last-big}  tells us that all the terms of 
$ \Globalh(\Chase{\factset})$ will  indeed appear in $Ch_{k_\ruleset}(\factset)$. But it says nothing like that about the atoms 
of $ \Globalh(\Chase{\factset})$: it might be that there are atoms in 
$ \Globalh(\Chase{\factset})$ which, despite having all their terms in $dom(\UC)$ are not themselves in $\UC$. To overcome this little problem recall Exercise \ref{soon} 
and let $c_\ruleset=k_\ruleset+n_{at}$. Then $\Globalh(\Chase{\factset}) \subseteq Ch_{c_\ruleset}(\factset)$.

This means that what remains to be presented in this Section is {\bf \small  the proof  of ~\cref{lem:last-big}:}

 Let us begin from a lemma which we found so surprising that we needed to prove it in two ways to believe it is true. Recall that 
for any instance $\FactSeta$ there exists a homomorphism 
$h_\factseta : \Chase{\FactSeta} \to \Core{\FactSeta}$ being the identity on elements of  $dom(\FactSeta)$.
But we need more than that, and -- as the lemma shows -- we can have more:

\begin{lemma}\label{lem:uber-homomorphism}
For any instance $\FactSeta$ there exists a homomorphism $h_\FactSeta^* : \Chase{\FactSeta} \to \Core{\FactSeta}$ being the identity on 
$dom(\Core{\FactSeta})$.
\end{lemma}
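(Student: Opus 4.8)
The plan is to obtain $h^*_\FactSeta$ not by surgically extending the known homomorphism $h_\FactSeta : \Chase{\FactSeta} \to \Core{\FactSeta}$ so that it fixes the freshly created terms of $\Core{\FactSeta}$, but instead by re-running the chase from $\Core{\FactSeta}$ itself and reading off, for free, the homomorphism that the core construction applied to $\Core{\FactSeta}$ already produces.

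First I would record the two containments $\FactSeta \subseteq \Core{\FactSeta} \subseteq \Chase{\FactSeta}$. The left one holds because $h_\FactSeta$ is the identity on $dom(\FactSeta)$, so $\FactSeta = h_\FactSeta(\FactSeta) \subseteq h_\FactSeta(\Chase{\FactSeta}) = \Core{\FactSeta}$; the right one is immediate from \cref{core-t-345}, since $\Core{\FactSeta}$ is a homomorphic image of $\Chase{\FactSeta}$ lying inside some $\Step{n}{\RuleSet,\FactSeta} \subseteq \Chase{\FactSeta}$. With these containments in hand, \cref{posrodku} applies verbatim, taking its intermediate instance to be $\Core{\FactSeta}$, and yields the literal equality
$$\Chase{\Core{\FactSeta}} \;=\; \Chase{\FactSeta}.$$

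Next I would invoke idempotence of the core, namely the second item of \cref{core-wlasne}, which gives $\Core{\Core{\FactSeta}} = \Core{\FactSeta}$. Now apply \cref{core-t-345} to the instance $\Core{\FactSeta}$: its core-defining homomorphism $g$ maps $\Chase{\Core{\FactSeta}}$ onto $\Core{\Core{\FactSeta}}$ and, by \cref{core-t-2}, is the identity on $dom(\Core{\FactSeta})$. Substituting the two equalities just established, $g$ becomes a homomorphism from $\Chase{\FactSeta}$ to $\Core{\FactSeta}$ that is the identity on $dom(\Core{\FactSeta})$, which is exactly the map $h^*_\FactSeta$ demanded by the statement.

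The whole argument is short, and the single point that carries it is the literal (not merely up-to-isomorphism) equality supplied by \cref{posrodku}, available precisely because of the Skolem naming convention. It is this that lets me identify $\Chase{\Core{\FactSeta}}$ with $\Chase{\FactSeta}$ as the very same set of atoms, so that the identity-on-$dom(\Core{\FactSeta})$ homomorphism furnished by the core of $\Core{\FactSeta}$ is, with no transport or relabelling, already a homomorphism on $\Chase{\FactSeta}$ itself. The step I would double-check most carefully is exactly this identification: under an isomorphic rather than literal reading the two chases could differ on the freshly created terms of $\Core{\FactSeta}$, and the delicate ``identity on $dom(\Core{\FactSeta})$'' requirement — the entire content of the lemma beyond the already-known $h_\FactSeta$ — could fail to transfer. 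This also explains why the statement feels surprising: enlarging the fixed part of the folding from $dom(\FactSeta)$ to all of $dom(\Core{\FactSeta})$ looks like a strictly stronger demand, yet it comes for free once one re-seeds the chase at the core and uses idempotence.
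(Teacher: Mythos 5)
Your proof is correct and takes essentially the same route as the paper's own (first) proof of \cref{lem:uber-homomorphism}: treat $\Core{\FactSeta}$ as a fresh instance, use \cref{posrodku} to get the literal equality $\Chase{\Core{\FactSeta}}=\Chase{\FactSeta}$, use the idempotence item of \cref{core-wlasne} to get $\Core{\Core{\FactSeta}}=\Core{\FactSeta}$, and read off the core-defining homomorphism of the instance $\Core{\FactSeta}$, which by \cref{core-t-2} is the identity on $dom(\Core{\FactSeta})$. The paper also records a second, independent argument (iterating $h_\FactSeta$ to the power $m!$), but your argument matches its first proof step for step, including the observation that the Skolem naming convention is what makes the literal identification of the two chases legitimate.
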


\begin{proof}
Let  $\FactSeta' = \Core{\FactSeta}$. Think of $\factseta'$ as of a new instance (it is of course finite). 
By Definition \ref{core-t-2} we get a homomorphism 
$h: Ch(\FactSeta') \to \Core{\FactSeta'}$ that is identity on  $dom(\FactSeta')$. 

Now notice that $Ch(\FactSeta')=Ch(\factseta)$ (this is from Observation \ref{posrodku}) and
that $\Core{\FactSeta'}$ equals  $\factseta'$ (from Exercise \ref{core-wlasne}). 
 This shows that $h$ is  the required homomorphism $h^*_\FactSeta$.
\end{proof}

\begin{proof} Notice that the restriction of $h_\factseta $ to $dom(\Core{\FactSeta})$ is an onto\\ function (and thus a bijection) from $dom(\Core{\FactSeta})$ to $dom(\Core{\FactSeta})$ (otherwise $h_\factseta(h_\factseta(Ch(\FactSeta)))$ would be a model of $\ruleset$, and a proper subset of $\Core{\FactSeta}$,
contradicting Definition \ref{core-t-345}).  Let now $m=|dom(\Core{\FactSeta})|$ and define  $h_\FactSeta^*=h_\FactSeta^{m!}$ (recall that for any permutatiom $\pi$ of a set of $m$ elements the permutation $\pi^{m!}$ is the identity).
 \end{proof}

The main technical lemma of this Section is Lemma \ref{lem:big-lemma}.

\begin{definition}
Let $\FactSet$ be a set of facts and let $\FactSetAux\subseteq \factset$.  We denote the substructure of $\Chase{\FactSet}$ induced on the set of terms $dom(\Chase{\FactSet}) \setminus (dom(\Chase{\FactSetAux})  \setminus dom(\Core{\FactSetAux}))$ as\footnote{To be hyper-precise we should call this new structure $M_{\factset,\FactSetAux}$, but \factset will be fixed and clear from the context. }  $M_\FactSetAux$.
\end{definition}

In human language this reads as: ``First {\em ban} all the terms 
that appear in $Ch(\factseta)$. Unless they  appear in $Core(\factseta)$, then they are not banned. Then remove, from $Ch(\factset)$, all atoms which dare to mention a 
banned term. And you will get $M_\FactSetAux$.''

\begin{lemma}\label{lem:big-lemma}
For any instance $\FactSet$ and for any  $\FactSetAux\subseteq \factset$ the structure $M_\FactSetAux$ is a model of $\ruleset$ and $\FactSet$.
\end{lemma}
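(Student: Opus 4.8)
The plan is to verify the two assertions separately: that $\FactSet \subseteq M_\FactSetAux$ (so that $M_\FactSetAux$ models the facts $\FactSet$), and that $M_\FactSetAux \models \ruleset$. The first is a short bookkeeping argument about which terms are \emph{banned}, i.e.\ which lie in $dom(\Chase{\FactSetAux}) \setminus dom(\Core{\FactSetAux})$. For it, it suffices to check that no element of $dom(\factset)$ is banned. I would split $dom(\factset)$ into $dom(\factseta)$ and its complement: each term of $dom(\factseta)$ lies in $dom(\Core{\FactSetAux})$, because the core homomorphism is the identity on $dom(\factseta)$, and is therefore unbanned; a term $c \in dom(\factset)\setminus dom(\factseta)$ is a constant that is neither a member of $dom(\factseta)$ nor a Skolem term, so under the Skolem naming convention it cannot occur in $\Chase{\FactSetAux}$ at all, hence is unbanned as well. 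Since no fact of $\factset$ mentions a banned term, all of $\factset$ survives into $M_\FactSetAux$.

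For $M_\FactSetAux \models \ruleset$ I would inspect the rule firings one at a time. Fix a rule $\Rule$ and a homomorphism $\sigma$ from its body into $M_\FactSetAux$; the task is to produce a witness for the head of $\Rule$ inside $M_\FactSetAux$. As $M_\FactSetAux \subseteq \Chase{\factset}$ and $\Chase{\factset} \models \ruleset$, the canonical head atom $h := \sigma(\sh{\Rule})$ belongs to $\Chase{\factset}$. Its terms are the frontier image $\sigma(\mathit{fr}(\Rule))$, which is unbanned as it lies in $dom(M_\FactSetAux)$, together with the fresh Skolem terms $f_i^\tau(\sigma(\mathit{fr}(\Rule)))$. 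If all of these fresh terms are unbanned, then $h$ is an atom of $\Chase{\factset}$ with no banned term, so $h \in M_\FactSetAux$ and the head is witnessed directly.

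The crux is the opposite case, where some fresh term $t = f_i^\tau(\sigma(\mathit{fr}(\Rule)))$ is banned, i.e.\ $t \in dom(\Chase{\FactSetAux}) \setminus dom(\Core{\FactSetAux})$. The decisive observation is that the skolemized head $\sh{\cdot}$ is determined purely by the head-type $\tau$ and the frontier tuple: whatever rule and body homomorphism produced $t$ inside $\Chase{\FactSetAux}$, the birth atom of $t$ (\cref{lem:termparentatom}) is the type $\tau$ fired on the frontier $\sigma(\mathit{fr}(\Rule))$, so it must coincide with $h$, giving $h \in \Chase{\FactSetAux}$. Moreover, since $t$ occurs in $\Chase{\FactSetAux}$ its argument tuple $\sigma(\mathit{fr}(\Rule))$ lies in $dom(\Chase{\FactSetAux})$, and being unbanned it must in fact lie in $dom(\Core{\FactSetAux})$. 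I then apply $h^*_{\FactSetAux} : \Chase{\FactSetAux} \to \Core{\FactSetAux}$ from \cref{lem:uber-homomorphism} to $h$: as $h^*_{\FactSetAux}$ is the identity on $dom(\Core{\FactSetAux}) \supseteq \sigma(\mathit{fr}(\Rule))$, the image $h^*_{\FactSetAux}(h)$ keeps $\sigma(\mathit{fr}(\Rule))$ in its frontier positions and is thus a witness for the head of $\Rule$ under $\sigma$. Finally $h^*_{\FactSetAux}(h) \in \Core{\FactSetAux} \subseteq M_\FactSetAux$ (the atoms of $\Core{\FactSetAux}$ lie in $\Chase{\factset}$ and use only core terms, all unbanned), so the witness lives in $M_\FactSetAux$. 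I expect the identification of the birth atom of a banned Skolem term with the current firing's atom $h$ to be the one delicate point; it is exactly where the Skolem naming convention pays off, since it makes an atom's fresh terms—and hence the atom itself—a function only of the head-type and the frontier image, independent of the rule or body homomorphism that created it. Everything else is routine membership bookkeeping about the banned set.
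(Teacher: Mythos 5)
Your proof is correct and takes essentially the same route as the paper's: both reduce to the case where the canonical Skolem head atom contains a banned term, identify that atom as the birth atom of the banned term inside $\Chase{\FactSetAux}$ (using the Skolem naming convention together with the uniqueness of birth atoms, \cref{lem:termparentatom}), note that the frontier values are unbanned and hence lie in $dom(\Core{\FactSetAux})$, and finally map the atom into $\Core{\FactSetAux} \subseteq M_\FactSetAux$ via $h^*_\FactSetAux$ from \cref{lem:uber-homomorphism}. The only differences are presentational: you spell out why $\FactSet \subseteq M_\FactSetAux$ (which the paper dismisses as obvious), and you derive membership of the head atom in $\Chase{\FactSetAux}$ directly from the functional determinism of Skolem naming rather than by exhibiting a second body homomorphism $\sigma_1$ as the paper does.
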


\begin{proof}

Clearly $M_\FactSetAux\models \factset$.
In order to prove that also $M_\FactSetAux\models \ruleset$ take any
$\rho \in \RuleSet$ and any mapping\footnote{Recall Definition \ref{application} here.} $\sigma \in \mappings{\rho, M_\FactSetAux}$.
Notice that of course
$ appl(\rho, \sigma)\in \Chase{\FactSet}$, since $\Chase{\FactSet}$ is by definition closed under rule applications.

The atom $appl(\rho, \sigma)$ will now be mentioned often enough to deserve a shorter name, so we will call it  $\alpha$.

It is now sufficient (and necessary) to prove that 
there exists a homomorphism
 from $ \alpha$ to some atom $\alpha'\in M_\FactSetAux$, which is 
the identity on the $\mathit{fr}(\alpha)$. In other words, we need to show, that if 
a body of the rule $\rho$ matches with  $M_\FactSetAux$ (via matching $\sigma$) then we can find an atom in $M_\FactSetAux$
which can witness that $\rho$ is satisfied in $M_\FactSetAux$.
Such $\alpha'$  needs to have the same terms as $\alpha$ on the frontier positions and may have anything on the positions where 
 the existentially quantified variables were in $head(\rho)$ except that if $\alpha$ had equal terms on two such positions then the respective terms in 
 $\alpha'$ must also be equal.

If $\alpha \in M_\FactSetAux$ then of course $\alpha'=\alpha$. 
So, for the rest of the proof,  assume  that  $\alpha \not\in M_\FactSetAux$. And the only reason for the atom $\alpha$ to be in $\Chase{\FactSetAux}$ but not in $M_\FactSetAux$ is that this atom contains some banned term $t$.

But $\sigma(body(\rho))\subseteq M_\factseta$. So this banned term $t$ must not occur in any of  the
atoms of $\sigma(body(\rho))$, and thus $t\not\in \mathit{fr}(\alpha) $. This means that we can be sure at this point that $\rho$ is  not
 a Datalog rule -- atoms derived with a Datalog rule do not have non-frontier terms.

Term $t$ being a non-frontier term of $\alpha$ means  that 
$\alpha$ is the birth atom (recall Observation \ref{lem:termparentatom}) of $t$ in $\Chase{\FactSet}$. 

But $t\in dom(\Chase{\FactSetAux})$ so  $\alpha$ is also the birth atom of $t$ in  $\Chase{\FactSetAux}$. 
So there exists a mapping  $\sigma_1 \in\mappings{\rho, \Chase{\FactSetAux}}  $ 
(which may, or may not, be equal to $\sigma$)
such that 
$\alpha=appl(\rho, \sigma_1)$ and  such that, for any variable $x\in \mathit{fr}(\rho)$, there is $\sigma(x)=\sigma_1(x)$
(this is Observation \ref{lem:termparentatom} again).

Notice  also that for any variable $x\in \mathit{fr}(\rho)$ there is (*) $\sigma_1(x)\in dom(\Core{\FactSetAux})$. This is because of course 
$\sigma_1(x)\in dom(\Chase{\FactSetAux})$ and we know that $\sigma_1(x)$ it is not banned.

Now let $h^*_\FactSetAux$ be a homomorphism from \cref{lem:uber-homomorphism}. 
Since $\alpha\in \Chase{\FactSetAux}$ we get that 
$h^*_\FactSetAux(\alpha) \in \Core{\FactSetAux}$ and thus $h^*_\FactSetAux(\alpha)\in M_\factseta$. 
Also from \cref{lem:uber-homomorphism} and from (*) we have that $h^*_\FactSetAux(\sigma_1(x))=\sigma_1(x)$ for  any $x\in \mathit{fr}(\rho)$.

This means that we can take $h_\FactSetAux(\alpha)$ as our $\alpha'$.\end{proof}

\begin{lemma}
For any instance $\FactSet$ and any $\FactSetAux\subseteq\factset$ there exists a homomorphism $h^*_{M_\FactSetAux}$ from 
$\Chase{\factset}$ to $\Chase{\factset}$ such that 
for each $t\in dom(\Chase{\factset})$ there is 
$h^*_{M_\FactSetAux}(t)\in dom(M_\FactSetAux)$ 
and for each $t\in dom(M_\FactSetAux)$ there is 
$h^*_{M_\FactSetAux}(t)=t$.

\end{lemma}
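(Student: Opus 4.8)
The plan is to realize $h^*_{M_\FactSetAux}$ as nothing more than the ``uber-homomorphism'' of \cref{lem:uber-homomorphism} applied to the instance $M_\FactSetAux$ itself. To license this I would first collect three properties of $M_\FactSetAux$. By \cref{lem:big-lemma}, $M_\FactSetAux$ is a model of $\ruleset$ and of $\factset$; in particular $\factset\subseteq M_\FactSetAux$, and since $M_\FactSetAux$ is by construction an induced substructure of $\Chase{\factset}$ we obtain the chain $\factset\subseteq M_\FactSetAux\subseteq\Chase{\factset}$. \cref{posrodku} then gives $\Chase{M_\FactSetAux}=\Chase{\factset}$. Finally, because $M_\FactSetAux\models\ruleset$, the first item of \cref{core-wlasne} yields $\Core{M_\FactSetAux}=M_\FactSetAux$.

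With these three facts in place, I would simply invoke \cref{lem:uber-homomorphism} on the instance $M_\FactSetAux$, obtaining a homomorphism from $\Chase{M_\FactSetAux}$ to $\Core{M_\FactSetAux}$ that is the identity on $dom(\Core{M_\FactSetAux})$. Substituting the two equalities $\Chase{M_\FactSetAux}=\Chase{\factset}$ and $\Core{M_\FactSetAux}=M_\FactSetAux$ turns this into a homomorphism $h^*_{M_\FactSetAux}:\Chase{\factset}\to M_\FactSetAux$ that is the identity on $dom(M_\FactSetAux)$. Since $M_\FactSetAux\subseteq\Chase{\factset}$, I may regard this as a self-map of $\Chase{\factset}$; its image lands in $dom(M_\FactSetAux)$ and it fixes every term of $dom(M_\FactSetAux)$, which are exactly the two properties demanded by the statement.

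I do not expect a genuine obstacle here: every nontrivial ingredient has already been established, so the lemma is essentially a repackaging of \cref{lem:big-lemma} through \cref{lem:uber-homomorphism}, the crucial conceptual move being the observation that $M_\FactSetAux$, being a model, is its own core. The only point that deserves a moment's care is that the equalities $\Chase{M_\FactSetAux}=\Chase{\factset}$ and the inclusion $M_\FactSetAux\subseteq\Chase{\factset}$ must hold \emph{literally} rather than merely up to isomorphism --- which is precisely what the Skolem naming convention guarantees (cf.\ the remark following \cref{posrodku}) --- so that reading $h^*_{M_\FactSetAux}$ as a map into $\Chase{\factset}$ is unambiguous.
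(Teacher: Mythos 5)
Your proposal is correct and takes essentially the same route as the paper's own proof: both derive $\Chase{M_\FactSetAux}=\Chase{\factset}$ from \cref{posrodku}, use \cref{lem:big-lemma} together with \cref{core-wlasne} to conclude $M_\FactSetAux=\Core{M_\FactSetAux}$, and then invoke \cref{lem:uber-homomorphism} with $M_\FactSetAux$ as the instance. Your write-up is merely more explicit about the inclusions $\factset\subseteq M_\FactSetAux\subseteq\Chase{\factset}$ needed to apply \cref{posrodku} and about the Skolem naming convention making the identifications literal, which the paper leaves implicit.
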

\begin{proof}
 Of course (Observation \ref{posrodku}) $Ch(M_\FactSetAux)=Ch(\factset)$. 

From Lemma \ref{lem:big-lemma} we know that $M_\FactSetAux\models \ruleset$. So $M_\FactSetAux=Core(M_\FactSetAux)$. Now use Lemma
\ref{lem:uber-homomorphism} for $M_\FactSetAux$ as the instance.
\end{proof}

Let ${\mathcal H}_\factset$ be the set of all homomorphisms $\Islandh^*$ for $\FactSetAux \in \Islands$.

Each $\Islandh^*\in {\mathcal H}_\factset$  has as its domain the set $dom(\Chase{M_\FactSetAux})$, which is equal to $dom(\Chase{\FactSet})$, and has as its image a subset of this domain.
This means that one can compose such homomorphisms, and the resulting function will also be a homomorphism from 
$\Chase{\FactSet}$ to $\Chase{\FactSet}$  (and it will 
be the identity on $dom(\factset)$, since each $\Islandh$ is). Now the rabbit is going to be pulled  out of the hat:
 let us compose {\bf all}  homomorphisms
$\Islandh\in {\mathcal H}_\factset$,  in any order. Call the resulting (``global'') homomorphism $\Globalh$.

Now recall that the proof of Lemma \ref{lem:last-big} (and thus also of Theorem \ref{thm:local-core-ubdd-reformulated}) will be finished once we can show 
that for each term $t\in dom(Ch(\factset))$ there indeed is  $\Globalh(t)\in dom(\UC) $.

Recall our notion of banned terms. But now $\factseta$ is no longer  fixed:  we  
have (for each $\factseta\in \Islands$) a set $ban_\factseta$ of terms which occur somewhere in $Chase(\factseta)$ but not
in $Core(\factseta)$. 
Now see what each $\Islandh^*\in {\mathcal H}_\factset$ does: it is the identity on all terms, except for the terms in $ban_\factseta$.
And it maps the terms from $ban_\factseta$  into $dom(Core(\factseta))$, which means into $dom(\UC)$.

So take any $t\in dom(Ch(\factset))$ and apply $\Globalh$  to this $t$.
If there is any $\Islandh^*\in  {\mathcal H}_\factset$ such that $\Islandh^*(t)\neq t$ then of course $\Globalh(t)\in dom(\UC) $.
So suppose $\Islandh^*(t)= t$ for each   $\Islandh^*$ and consider any $\factseta_t\in \Islands$ such that 
$t\in dom(Ch(\factseta_t))$. But this implies that $t\in dom(Core(\factseta_t)) \subseteq dom(\UC)$.

\section{Slightly beyond locality. Sticky theories.}\label{slightly-beyond}

Unfortunately, our notion of locality fails to characterize the entire BDD class.

\begin{example} Let $E$ be an arity 4 relation and $R$ an arity 2 one. Read $E(a,b,b',c)$ as ``$a$ can see an edge from $b$ to $b'$ colored with colour $c$''
and $R(a,c)$ as ``$a$ thinks $c$ is a color''.
The following one-rule sticky theory \ruleset is not local:
\begin{center}
$E(x,y,y',t),R(x,t')\Rightarrow \exists y''\; E(x,y',y'',t') $
\end{center}
(meaning ``if $x$ can see any edge  from $y$ to $y'$,  
and if she thinks  $t'$ is a color,  then she must also see another edge, from $y'$ to some $y''$, of color $t'$'').

To see that it is indeed non-local suppose that it is, and that $l_\ruleset$ is the constant for $\ruleset$, as in Definition \ref{def-local-theories}.
Now take an instance \factset consisting of $l_\ruleset+1$ atoms: an atom $E(a,b_1,b_2,c_1)$ and of atoms 
$R(a,c_i)$ for $1\leq i \leq l_\ruleset$. It is not hard to see that there are atoms in $Ch(\ruleset, \factset)$ which require all the 
atoms from \factset to be produced.
\end{example}

The only reason however\footnote{Recall that we only consider connected theories.} for sticky theories to be non-local are high-degree vertices, like the $a$ in the example. This 
leads to a natural generalization of the notion of locality:

\begin{definition}\label{bd-local}
A theory $\RuleSet$ will be called \textbf{bounded-degree local} (or bd-local) if for any natural number $k$ there exists a constant $l_\RuleSet(k)$ such that for every instance $\FactSet$ having degree at most $k$, the following holds:
\vspace{-2mm}
$$\bigcup_{\FactSetAux \subseteq \FactSet, |\FactSetAux| \leq l_\RuleSet(k)} \hspace{-3mm} \Chase{\RuleSet, \FactSetAux} = \Chase{\RuleSet, \FactSet}$$
\end{definition}

\noindent
We were unable to show that the FUS/FES conjecture holds  for bounded-degree local theories, but we believe that 
with some additional effort the ideas from Section \ref{conjecture-for-local} could probably  be adapted to work also for such theories. 
And of course they do work if only instances of fixed degree are considered.

It is easy to show that sticky theories are indeed bd-local\footnote{This will be included in the full paper.}.
Which means that {\bf \small all the known decidable BDD classes are bounded-degree local}. Surprisingly,  (unlike local theories) not all 
bounded-degree local theories are BDD\footnote{It would be really nice to have a formalization of the intuitive notion of locality which would 
contain all the known BDD classes and would imply being BDD.}:

\begin{example}
It is very easy to see that the following single-rule theory is bounded-degree local but not BDD:
$E(x,y,z),R(x,z) \Rightarrow R(y,z)$.
\end{example}

But even if not all bd-local theories are BDD, all BDD theories from the known decidable BDD classes are bd-local (and all except for sticky theories are just local). So a natural question 
comes: are there BDD theories which are not local in this generalized sense? We found it quite surprising to realize that the answer is 
positive:

\begin{example}\label{pelzanie}
The following BDD theory $\ruleset_c$ is  not bd-local:\\
$E(x,y)\Rightarrow \exists x',y'\; R(x,y,x',y')$\\
$R(x,y,x',y'),E(y,z)\Rightarrow \exists z'\; R(y,z,y',z')$.\\
To prove  that it is BDD one can notice that if $\ruleset_c,\factset\models \phi(\bar a)$, for some 
$\factset$ 
and some $\aaaddda{\factset}$ 
then $Ch_{|\phi(\bar y)|}\models \phi(\bar a)$. 
In order to prove that it is not 
bd-local consider, for each $n\in \mathbb N$, the instance $\factset_n$ consisting of atoms $E(a_1,a_2)$, $E(a_2,a_3)\ldots$,
$E(a_n,a_1)$. The degree of this instance is 2. And  there are atoms in $Ch_n(\ruleset_c, \factset_n)$ which are not 
in  $Ch_n(\ruleset_c, \factseta)$ for any proper subset $\factseta$ of $\factset_n$.
\end{example}
 
 We were however not able to find an example of a theory which would be {\bf hereditary\footnote{This means that not only the theory  is BDD but also all its subsets are.} BDD} but not bd-local. We think it reasonable to conjecture that there are no such theories.


\section{Far beyond locality. Theories without small rewritings.}\label{far-beyond}

As we learned in the previous sections, local theories admit linear size rewritings: for a query $\psi(\bar y)$ the rewriting $\psi_\ruleset(\bar y)$ consists of
queries of size at most $l_\ruleset|\psi(\bar y)|$. It is also easy to see that  backward shy 
theories\footnote{Backward shy theories are defined, in \cite{T13}, as  BDD theories such that, for every query $\psi(\bar y)$ 
if $\phi(\bar y)\in rew(\psi(\bar y))$ then only variables from $\bar y$ can occur more than once in $\phi(\bar y)$.
 Sticky theories are backward shy.} admit linear size rewritings as well and, in consequence, also all sticky theories do. 

This is related to another notion of locality:

\begin{definition}
We call a theory \ruleset  {\em   distancing} if there is $d_\ruleset\in \mathbb N$ such that for any instance \factset,  any
$c,c'\in dom(\factset)$, and any  $n\in \mathbb N$ if
$dist_{Ch(\RuleSet,\FactSet)}(c,c')\leq n$ then  $dist_\FactSet(c,c')\leq d_\ruleset n$.
\end{definition}

Is every BDD theory  distancing? It seems that one can easily 
 ``prove'' by induction,  using Exercise \ref{ex-distancing}, that this is the case.
But such proof would  of course be wrong:
the path from $c$ to $c'$ in $Ch(\RuleSet,\FactSet)$ can possibly lead through atoms not containing any constants from the original instance $\factset$. 
It is equally easy to prove (and, this time, correctly) that:

\begin{observation} If a BDD theory  admits linear size rewritings then it is distancing.
\end{observation}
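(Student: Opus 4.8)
The plan is to turn the linearity constant into a distancing constant directly: assuming $\rsize_\RuleSet(\psi)\le l_\RuleSet\,|\psi|$ for every CQ $\psi$, I claim $d_\RuleSet=l_\RuleSet$ works. Fix an instance $\FactSet$ and $c,c'\in dom(\FactSet)$ with $c\neq c'$, and suppose $dist_{\Chase{\RuleSet,\FactSet}}(c,c')=m\le n$. Choose a shortest path $c=t_0,t_1,\dots,t_m=c'$ in the Gaifman graph of $\Chase{\RuleSet,\FactSet}$ and, for each $i$, an atom $\alpha_i\in\Chase{\RuleSet,\FactSet}$ containing both $t_i$ and $t_{i+1}$. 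First I would read this path as a query: introduce one fresh variable per term occurring among $\alpha_0,\dots,\alpha_{m-1}$, using two distinguished free variables $y$ for $t_0=c$ and $y'$ for $t_m=c'$, and let $\psi(y,y')$ be the conjunction of the resulting atoms with all non-free variables existentially quantified. Since consecutive atoms $\alpha_{i-1},\alpha_i$ share $t_i$, the query $\psi(y,y')$ is \emph{connected} and has $m\le n$ atoms, and $\Chase{\RuleSet,\FactSet}\models\psi(c,c')$ is witnessed by the path itself. By \cref{thm:bdd-th} there is a disjunct $\phi(y,y')\in rew(\psi(y,y'))$ with $\FactSet\models\phi(c,c')$ and $|\phi|\le\rsize_\RuleSet(\psi)\le l_\RuleSet m\le l_\RuleSet n$.

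The crux is showing that $y$ and $y'$ fall into the \emph{same connected component} of $\phi$; this is exactly where the standing assumption that $\RuleSet$ is connected enters, and it is genuinely needed (a disconnected rule like $E(x,y),P(z)\Rightarrow R(x,z)$ admits linear rewritings yet is not distancing, since on $\{E(a,b),P(d)\}$ the chase produces $R(a,d)$ while $a$ and $d$ are infinitely far apart in $\FactSet$). To establish the claim I would regard the CQ $\phi$ as an instance whose active domain is its own set of variables, with $y,y'$ distinguished, and apply \cref{thm:bdd-th} with this instance in the role of $\FactSet$: the trivial satisfaction $\phi\models\phi(y,y')$ forces $\Chase{\RuleSet,\phi}\models\psi(y,y')$. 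Thus some homomorphism $\psi\to\Chase{\RuleSet,\phi}$ fixes $y,y'$, and as $\psi$ is connected its image is connected, so $y$ and $y'$ lie in a single connected component of $\Chase{\RuleSet,\phi}$.

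Connectivity of $\RuleSet$ then finishes this step: a connected rule body maps into a single connected component, and the head it creates (together with any fresh Skolem terms) stays attached to that component, so by induction on chase stages $\Chase{\RuleSet,\phi}$ is the disjoint union of the chases of the connected components of $\phi$, and the chase never fuses two distinct components. Hence $y$ and $y'$ already belong to one and the same component $\phi'\subseteq\phi$.

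Finally I would restrict the homomorphism $g\colon\phi\to\FactSet$ witnessing $\FactSet\models\phi(c,c')$ to $\phi'$. Its image $g(\phi')$ is a connected subset of $\FactSet$ consisting of at most $|\phi'|\le l_\RuleSet n$ atoms and containing both $c=g(y)$ and $c'=g(y')$. In any connected set of $k$ atoms any two terms are at Gaifman distance at most $k$ (walk through the at most $k$ atoms, each of which is a clique in the Gaifman graph), so $dist_\FactSet(c,c')\le l_\RuleSet n=d_\RuleSet n$, as required. The only nontrivial ingredient is the component argument of the middle paragraphs: the rest is bookkeeping, but one must rule out the possibility that the rewriting disjunct separates $c$ from $c'$, and it is precisely the connectedness of $\RuleSet$ that prevents this.
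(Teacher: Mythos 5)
The paper never actually proves this observation (it is dismissed as ``equally easy to prove'' and left to the reader), so there is no official argument to compare against; your proof is correct and supplies the missing details. All three ingredients are sound: (1) encoding a shortest path between $c$ and $c'$ in the Gaifman graph of $\Chase{\RuleSet,\FactSet}$ as a \emph{connected} CQ $\psi(y,y')$ with $m\le n$ atoms, so that the linear-rewriting hypothesis yields a disjunct $\phi$ with at most $l_\RuleSet n$ atoms satisfied at $(c,c')$ in $\FactSet$; (2) chasing $\phi$ itself, viewed as an instance, to force $y$ and $y'$ into a single connected component of $\phi$; (3) restricting the witnessing homomorphism $g\colon\phi\to\FactSet$ to that component, whose image is a connected set of at most $l_\RuleSet n$ atoms containing both $c$ and $c'$. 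Step (2) is the real content, and your insistence on the paper's standing connectivity assumption there is exactly right: as your counterexample $E(x,y),P(z)\Rightarrow R(x,z)$ shows, the observation is false for disconnected theories (note this rule is over a binary signature, which the paper nominally exempts from the connectivity convention, so the observation must be read with connectivity in force regardless). One small imprecision worth noting: in the presence of empty-frontier (``detached'') rules, $\Chase{\RuleSet,\phi}$ is not literally the \emph{disjoint} union of the chases of the connected components of $\phi$ --- the head of a detached rule is not attached to its body's component, and under the Skolem naming convention distinct components can share the resulting detached atoms. But the weaker statement you actually use survives: the chase never fuses two pre-existing components, since every newly created atom's old terms are frontier terms, which lie in one component because the rule body is connected, while a detached head merely starts a fresh component. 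So the argument stands as written.
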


Which means that all theories from previously known BDD classes are distancing. So do there exist non-distancing BDD theories at all? Do there exist BDD theories 
which do not admit linear size rewriting? The answer is in:

\begin{definition}\label{davids-example-def}
Define $\ruleset_d$ as the theory consisting of the following rules:\smallskip\\
(loop) \hfill $ true \Rightarrow \exists x \; R(x,x), G(x,x)$ \smallskip\\
(pins) \hfill $ \forall x (true  \Rightarrow \exists z,z'\; R(x,z), G(x,z'))$ \smallskip \\
(grid) \hfill $ R(x,x'),G(x,u),G(u,u') \Rightarrow \exists z R(u',z),G(x',z)$

\end{definition}

Our rules are not single-head. An additional ternary predicate\footnote{Recall that BDD theories over a binary language which only have single-head rules 
are local and thus also distancing. } (and 4 additional datalog rules, projecting the new predicate on $G$ and $R$)
could be introduced to equivalently reformulate them as single-head TGDs,
but it is more readable here to have multi-head rules and binary schema. We think of instances over our schema 
(and also of bodies of queries) as graphs, whose 
edges have colors, {\bf red} or {\bf green}.
For given $n\in \mathbb N$ denote by $G^n(x_0, x_n)$ the CQ  $\exists x_1\ldots x_{n-1} G(x_0,x_1), \ldots G(x_{n-1},x_n)$  and let  $R^n(x_0, x_n)$ be defined in an analogous way. Let $\phi_R^n(x,y)$ be a CQ defined as $\exists x',y' R^n(x,x'), R^n(y,y'),G(x',y')$. 

By ${\mathbb G}^n(a, b)$ denote a path, consisting of $n$ green edges, whose first 
vertex is $a$ and last vertex is $b$.

\begin{theorem}\label{davids-example-th} 
(A)
Theory $\rulesetd$ is BDD.\\
(B)
For every $n\in \mathbb N$ there is
$G^{2^n}(x,y)\in rew_{\ruleset_d}(\phi_R^n(x,y))$.

\end{theorem}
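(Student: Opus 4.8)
The plan is to prove (A) first, so that the rewriting set $rew_{\rulesetd}$ is well defined via \cref{thm:bdd-th}, and then to establish the membership claim (B), whose soundness half rests on a single \emph{doubling} lemma and whose minimality half requires a converse analysis of the chase. For (A) I would analyse the shape of $\Chase{\rulesetd,\factset}$ directly, viewing it as stratified into red-\emph{levels}: the instance sits at level $0$, every term acquires a red- and a green-successor one level up via (pins), the edges created by (grid) connect such successors, and (loop) contributes a single element $\ell$ carrying $R(\ell,\ell)$ and $G(\ell,\ell)$. Given a CQ $\phi$, any connected component containing no free variable maps entirely onto $\ell$, which is already in $\Step{1}{\rulesetd,\factset}$, so only the components anchored at the free variables $\bar y\mapsto\bar a\subseteq dom(\factset)$ matter. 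A component of size $|\phi|$ reaches red-level at most $|\phi|$, and at that level it can witness green-distance at most $|\phi|$; translating these bounded level-$k$ green-coordinates back to level $0$ multiplies distances by $2$ per level, so every term used by the match is created within $2^{O(|\phi|)}$ chase rounds, a bound depending on $\phi$ but not on $\factset$. Hence $Enough(n_\phi,\phi,\factset,\rulesetd)$ holds with the finite (though exponential) value $n_\phi=2^{O(|\phi|)}$, which is exactly what BDD requires. The careful bookkeeping here is to show every match can be \emph{relocated} to use only bounded green-coordinates at each level.

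The combinatorial heart of (B) is the following \textbf{doubling lemma}: if $R(a,a')\in\Chase{\rulesetd,\factset}$ and $\Chase{\rulesetd,\factset}\models G^{2m}(a,b)$, then there is $b'$ with $R(b,b')\in\Chase{\rulesetd,\factset}$ and $\Chase{\rulesetd,\factset}\models G^{m}(a',b')$. Writing the green path as $a=g_0,g_1,\dots,g_{2m}=b$ and setting $z_0=a'$, I would fire (grid) for $i=1,\dots,m$ on $R(g_{2i-2},z_{i-1}),G(g_{2i-2},g_{2i-1}),G(g_{2i-1},g_{2i})$, obtaining $R(g_{2i},z_i),G(z_{i-1},z_i)$. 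The red premise of step $i$ is exactly the red conclusion of step $i{-}1$ (and $R(a,a')$ at $i=1$), so the applications bootstrap one another, and the produced edges $z_0\to z_1\to\cdots\to z_m$ form $G^{m}(a',b')$ with $b'=z_m$ and $R(b,b')=R(g_{2m},z_m)$. With this lemma, soundness of the disjunct follows by induction of the statement $P(k)$: \emph{$\Chase{\rulesetd,\factset}\models G^{2^k}(a,b)$ implies $\Chase{\rulesetd,\factset}\models\phi_R^k(a,b)$.} The base $P(0)$ reads $G(a,b)\Rightarrow\phi_R^0(a,b)$, and since $R^0$ is the identity $\phi_R^0(a,b)$ is just $G(a,b)$. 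For $P(k{+}1)$, apply (pins) to get $R(a,a_1)$, the doubling lemma with $m=2^k$ to get $R(b,b_1)$ and $G^{2^k}(a_1,b_1)$, and then $P(k)$ to $a_1,b_1$; prefixing the two new red edges upgrades the $R^{k}$-paths to $R^{k+1}$-paths, yielding $\phi_R^{k+1}(a,b)$. Since $\factset\models G^{2^n}(a,b)$ entails $\Chase{\rulesetd,\factset}\models G^{2^n}(a,b)$, the query $G^{2^n}(x,y)$ is a \emph{sound} disjunct for $\phi_R^n(x,y)$.

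The remaining task is \textbf{minimality}: that $G^{2^n}(x,y)$ lies in the (unique, by \cref{cw-unique}) minimal rewriting, not merely that it is sound. I would freeze the canonical instance ${\mathbb G}^{2^n}(a,b)$, the bare green path of length $2^n$, which by soundness forces $\phi_R^n(a,b)$; hence some disjunct $\phi'\in rew_{\rulesetd}(\phi_R^n)$ admits a homomorphism into ${\mathbb G}^{2^n}(a,b)$ fixing $a,b$. Because the target carries only green edges, $\phi'$ has no red atoms, and because a directed green path is a core in which every walk from position $0$ to position $2^n$ has length exactly $2^n$, any green connection from $x$ to $y$ inside $\phi'$ has length $2^n$, forcing $\phi'$ to contain $G^{2^n}(x,y)$; taking cores then gives $\phi'=G^{2^n}(x,y)$, so this query is non-redundant and belongs to $rew_{\rulesetd}(\phi_R^n)$.

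The delicate point, where I expect the real work to lie, is hidden in the previous step: one must exclude \emph{disconnected or shorter} sound disjuncts that could cover ${\mathbb G}^{2^n}(a,b)$ without spanning the full path (e.g. a $\phi'$ in which $x$ and $y$ sit in separate components, each mapping near one end). Ruling these out amounts to a \emph{lower bound} — the exact converse of the doubling lemma — namely that no instance whose green $a$-to-$b$ connection is shorter than $2^n$ (or broken) can force $\phi_R^n$ at its endpoints, equivalently $\Chase{\rulesetd,{\mathbb G}^{m}(a,b)}\not\models\phi_R^n(a,b)$ for $m<2^n$. Proving this requires controlling \emph{all} possible grid-derivations in the chase of a pure green path, not just the canonical halving one, and establishing that green-distance is halved, never over-halved, per red-level; this, together with the localization argument underpinning (A), is the main obstacle and is what justifies deferring the full details.
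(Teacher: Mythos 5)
Your treatment of part (A) has a genuine gap, and it is exactly where the whole difficulty of the theorem sits. The sentence ``the careful bookkeeping here is to show every match can be \emph{relocated} to use only bounded green-coordinates at each level'' is not bookkeeping --- it \emph{is} claim (A), restated. For an arbitrary instance \factset (which may contain red edges, cycles, and vertices of high in-degree), your ``red-levels'' and ``green-coordinates'' are not even well defined on $dom(\factset)$, and nothing in the sketch explains why an arbitrary match of $\phi$ in $\Chase{\rulesetd,\factset}$ --- say one whose query has cycles, or several green atoms entering the same variable --- admits a bounded-depth replacement. The paper attacks (A) backwards rather than forwards: it introduces marked queries, characterizes the realizable markings (\cref{proper-in-query}), rewrites any live query by five operations (cut, fuse, reduce, in two colours), proves each operation sound (\cref{app-2}, using structural facts such as chase terms having green in-degree one), and then --- the crux --- proves termination. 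Termination is nontrivial precisely because the operation \emph{reduce} replaces one green atom by two, so no naive size measure decreases; this forces the elevation/cost ranks of \cref{far-beyond-dwa} and the Dershowitz--Manna multiset orderings \cite{DM79} of \cref{invariant-lemma}. Your proposal contains no substitute for any of this machinery, so (A) remains unproven, and with it the very existence of $rew_{\rulesetd}(\phi_R^n(x,y))$ that part (B) presupposes.

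On part (B), your soundness half (the doubling lemma and the induction on $P(k)$) is correct and is essentially the paper's claim (i), illustrated by \cref{fig:dav-chase}. For minimality, however, you overlook the paper's key device and replace it with a harder, mis-aimed one. The paper's claim (ii) is: if \factset is a \emph{proper subset} of ${\mathbb G}^{2^n}(a,b)$, then $a$ and $b$ lie in different connected components of \factset, and since all rule bodies of $\rulesetd$ are connected, the chase never merges components, so $\Chase{\rulesetd,\factset}\not\models\phi_R^n(a,b)$. This one-line observation already disposes of the ``shorter or broken'' disjuncts you worry about: any disjunct $\phi'\in rew_{\rulesetd}(\phi_R^n(x,y))$ satisfied in the path maps homomorphically into it; its image is a sub-instance which, by soundness of $\phi'$, chase-entails $\phi_R^n(a,b)$; hence by (ii) the image is the \emph{whole} path. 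In particular no lemma of the form $\Chase{\rulesetd,{\mathbb G}^{m}(a,b)}\not\models\phi_R^n(a,b)$ for $m<2^n$ is needed, and that lemma is in any case aimed at the wrong instances: the disjuncts that survive the surjectivity argument are not short paths but connected, all-green queries that map \emph{onto} the path without containing a directed $x$-to-$y$ walk (zigzag patterns). What genuinely remains --- showing that such a sound, surjective query must contain a directed green walk of length $2^n$ and is therefore equivalent to $G^{2^n}(x,y)$ --- is a short backward argument using the structural facts of \cref{proper-in-chase} (green edges into $dom(\factset)$ start in $dom(\factset)$; chase-created terms have unique red and green parents), not the global ``converse of the doubling lemma'' you defer as the main obstacle. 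So your plan misidentifies both where the real work in (A) lies and which tool closes the minimality argument in (B).
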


\noindent 
{\bf \small The rest of this section and the entire Section \ref{implementacja} are devoted to the  proof of 
Theorem \ref{davids-example-th}}.
Since $\rulesetd$ is fixed, we will use the notation $Ch(\factseta)$ instead of $Ch(\rulesetd, \factseta)$

Assume we know  (A) is true and let us prove Claim (B), which implies that $\rulesetd$ is not distancing. Claim (B)  will follow once we notice that: (i) ${\mathbb G}^{2^n}(a,b)\models \phi_R^n(a,b) $ and
 (ii) if $\factset$ is a proper subset of
     ${\mathbb G}^{2^n}(a,b)$ then $Ch(\factset)\not\models \phi_R^n(a,b) $.

To see why (i) is true it is enough to analyze the example with $n=3$ (\cref{fig:dav-chase}).
To see why (ii) holds  notice that if $\factset$ is a proper subset of  ${\mathbb G}^{2^n}(a,b)$ then $a$ and $b$ are in two different connected components 
of  $\factset$ and, since $\rulesetd$ is connected, they are in two different connected components of  $Ch(\factset)$.

\begin{figure}[h!]
  \includegraphics[width=\linewidth]{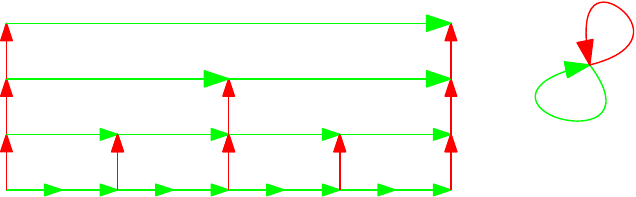}
  \caption{Fragment of  $Ch(\rulesetd, {\mathbb G}^{8}(a_0, a_{8}))$ (please, print in colors!)}
  \label{fig:dav-chase}
\end{figure}

The proof of claim (A) is much more complicated, and, as the following motivating exercise illustrates, the reasons why $\rulesetd$ is BDD are subtle:

\begin{exercise}[Not that easy]
Without rule (loop) theory $\rulesetd$ would not be BDD.
\end{exercise}

Clearly, to prove that $\RuleSetD$ is BDD it is enough to consider only connected queries, since a bound for any non-connected query can be derived from bounds of its connected components. Additionally, notice that due to the rule (loop), if $\phi$ is a boolean query and if 
$\FactSet$ is any instance then $Ch_1(\FactSet) \models \phi$. So for the rest of the proof of Theorem \ref{davids-example-th} (A) 
{\bf \small we will consider only connected non-boolean queries}.
Let us start with: 

\begin{definition}[Marked query]
We define a {\em marked query} as a pair $\langle \phi(\bar y) , V  \rangle$ where $\phi(\bar y)=\exists \bar x \beta(\bar x, \bar y) $ 
is a CQ and $\bar y \subseteq V \subseteq \bar y\cup \bar x$ is a subset of variables of $\phi(\bar y)$.
\end{definition}
\noindent

\noindent
We say that the variables in $V$ are {\em marked}. It will often be more convenient to refer to a marked query without (or before) specifying
its components. The letter $Q$ will be used for that. If $Q= \langle \phi(\bar y) , V  \rangle$ then by 
 $V(Q)$ we  mean $V$, and $q(Q)$ denotes $\phi(\bar y)$.
By $var(Q)$ we  mean the 
set of all variables of $q(Q)$,  and  $Q_{\nic{R}}$ ($Q_{\nic{G}}$) is  the set of red (green) atoms in $q(Q)$.

For a marked query  $Q$ the intention behind the set of marked variables $V(Q)$ is that the variables of $V(Q)$ should not be mapped onto the chase-produced Skolem terms:
\begin{definition}
For a marked query  $Q=\langle \phi(\bar y) , V  \rangle$, a fact set $\FactSet$, and  $\aaaddd{\factset}$  we say that
$Ch(\FactSet)\models Q(\bar a) $ if there exists a homomorphism $h: var(Q) \to dom(Ch(\FactSet))$ witnessing $Ch(\FactSet) \models  \phi(\bar a) $ such that for every 
$v$ there is $h(v)\in dom(\FactSet)$ if and only if $v\in V$.
\end{definition}

Some marked queries are false -- it follows directly from Definition \ref{davids-example-def} that they cannot be 
satisfied in any $Ch(\factset)$:

\begin{observation}\label{proper-in-chase} For any instance \factset 
\begin{itemize}
%

\item
if $R(a,b)$ or $G(a,b)$ is an edge in $Ch(\factset)$ and if $b\in dom(\factset)$  then  $a\in dom(\factset)$; 

\item if ${\mathcal C}=\{E_0(a_0,a_1), E_1(a_1,a_2)\ldots E_k(a_k,a_0)\}$ is a cycle in  $Ch(\factset)$ (where each $E_i\in\{R, G\}$)
then ${\mathcal C}\subseteq \factset$;

\item if $E(a_1,b)$ and $E(a_2,b)$ are two atoms in  $Ch(\factset)$ (where $E\in\{R, G\}$) and $a_1\in  dom(\factset)$ then 
also $a_2\in  dom(\factset)$.
\end{itemize}
\end{observation}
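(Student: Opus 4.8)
The plan is to derive all three items from one structural feature of $\rulesetd$, visible by inspection of (loop), (pins), (grid): in every head atom the \emph{second} (target) argument is an existentially quantified variable, whereas all frontier variables occur only in \emph{first} (source) positions. A one-line induction over the chase steps then shows that every atom of $Ch(\factset)\setminus\factset$ carries a freshly created Skolem term — a \emph{null}, i.e.\ an element of $dom(Ch(\factset))\setminus dom(\factset)$ — in its target position. This already settles the first item: if $R(a,b)$ or $G(a,b)$ lies in $Ch(\factset)$ with $b\in dom(\factset)$, then $b$ is not a null, so the edge is not chase-produced and must belong to $\factset$; as $\factset$ mentions only domain elements, $a\in dom(\factset)$ too.

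For the other two items I would attach to each term a birth step $bd(t)$, namely the least $i$ with $t\in dom(\Step{i}{\factset})$, and set $bd(t)=0$ for $t\in dom(\factset)$. The relevant facts are that a null is created strictly later than the frontier values of the application that gives it birth, and that the target of any chase-produced edge is exactly the null born at that step. Combined with the structural feature above this gives the key propagation principle: if $b$ is a null and $E(b,c)$ is a non-self-loop edge of $Ch(\factset)$, then $c$ is again a null with $bd(c)>bd(b)$. Indeed, a null occupies a source position only as its own (loop)-born self-loop, or as a frontier value of a later (pins)/(grid) application — and in the latter case the target is the fresh null created at that later, hence larger, step.

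The second item now follows by minimality. Assuming a simple cycle $\mathcal C$ contains a null, let $b$ be the null of $\mathcal C$ with the largest birth step; its outgoing cycle edge $E(b,c)$ has $c\neq b$, so the propagation principle forces $c$ to be a null with $bd(c)>bd(b)$, contradicting maximality. Hence all vertices of $\mathcal C$ lie in $dom(\factset)$, each cycle edge then has a domain element in target position, and the first item drives each of them into $\factset$, giving $\mathcal C\subseteq\factset$. (Here I read ``cycle'' as a simple directed cycle of distinct vertices, which excludes the degenerate self-loops $R(\ell,\ell),G(\ell,\ell)$ produced by (loop); I would state this convention explicitly.) For the third item, if $b\in dom(\factset)$ the first item places both $E(a_1,b)$ and $E(a_2,b)$ in $\factset$, so $a_1,a_2\in dom(\factset)$. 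If $b$ is a null, both edges are chase-produced with $b$ in target position; since targets occur only in the birth application of their null and the three rules use disjoint Skolem function symbols (their heads have distinct isomorphism types), $b$ determines this application uniquely, so both edges are head atoms of one and the same application. As each application has at most one head atom of a given colour with a given fresh target, $E(a_1,b)$ and $E(a_2,b)$ are the same atom, whence $a_1=a_2$ and $a_2\in dom(\factset)$.

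The step I expect to require the most care is the bookkeeping around nulls in the \emph{multi-head} presentation of $\rulesetd$: Observation~\ref{lem:termparentatom} is stated for single-head chase, and the single-head reformulation's datalog projection rules place the shared fresh variable in a target position, spoiling the ``frontier occurs only in source positions'' invariant on which everything rests. I would therefore argue directly that, in the multi-head chase, each null has a unique birth application whose head atoms are precisely the atoms mentioning it non-frontier, rather than routing the proof through the single-head encoding.
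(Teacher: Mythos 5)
Your proposal is correct, and in fact it supplies a proof where the paper supplies none: \cref{proper-in-chase} is stated as a bare observation, to be read off from the shape of the rules in \cref{davids-example-def}, so the only ``paper proof'' to compare against is that inspection. Your argument is the rigorous version of it: the invariant that every chase-produced atom of $Ch(\factset)$ carries a null in its target position (frontier variables of $\rulesetd$ occur only in source positions of heads), birth steps for nulls, the propagation principle along non-self-loop edges leaving a null, and the uniqueness of a null's birth application for the third item. Your decision to argue directly in the multi-head chase is also the right call: \cref{skolemization} and \cref{lem:termparentatom} are formulated for single-head rules, the head of (grid) even shares one existential variable between its two atoms (so the paper's per-atom Skolem terms would not literally apply, and one must Skolemize per existential variable per application), and the single-head encoding's projection rules do put a frontier variable in target position, as you note, so the invariant would have to be rephrased there (the last position of the auxiliary ternary predicate only ever holds nulls).

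One of your side remarks deserves emphasis: the self-loop caveat is not an optional reading but a necessary correction to the statement. The rule (loop) places $R(\ell,\ell)$ and $G(\ell,\ell)$ in $Ch(\factset)$ for a null $\ell$, and in the paper's notation these form cycles (take $k=0$, or $k=1$ with $a_0=a_1$) that are not contained in \factset; so the second item is literally false without excluding self-loops. The fully general true statement is that every closed walk in $Ch(\factset)$ either lies entirely in \factset or consists solely of the self-loops at $\ell$ --- and this walk formulation is what one actually needs downstream, since the derivation of \cref{proper-in-query}(ii) applies the chase-level statement to homomorphic images of query cycles, which are closed walks rather than cycles (distinct query variables may be mapped to the same term). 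Making this exclusion explicit, as you do, strengthens rather than weakens the write-up.
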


Then it follows from Observation \ref{proper-in-chase} that:

\begin{observation}\label{proper-in-query} 
Suppose $Ch(\FactSet)\models Q$. Then:
\begin{itemize}
\item[(i)]
if $R(z,z')$ or $G(z,z')$ is an atom in $q(Q)$ and $z'\in V(Q)$ then also $z\in V(Q)$;
\item[(ii)]
if $\{E_0(z_0,z_1), E_1(z_1,z_2)\ldots E_k(z_k,z_0)\}$ is a cycle in   $q(Q)$ (where each $E_i\in\{R, G\}$) then 
$z_i\in V(Q)$ for each $i\in \{0,\ldots k\}$;
\item[(iii)]  if $E(z_1,u)$ and $E(z_2,u)$ are two atoms in  $q(V)$ (where $E\in\{R, G\}$)   and $z_1\in V(Q))$ then 
also $z_2\in  V(Q)$.
\end{itemize}
\end{observation}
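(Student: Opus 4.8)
The plan is to prove all three parts through a single mechanism. Fix a witnessing homomorphism $h : var(Q) \to dom(Ch(\factset))$ as in the definition of $Ch(\factset)\models Q$, push the combinatorial configuration living in $q(Q)$ forward through $h$ into $Ch(\factset)$, apply the matching item of Observation~\ref{proper-in-chase} there, and then pull the conclusion back using the defining equivalence $v\in V(Q)\iff h(v)\in dom(\factset)$ that comes with such an $h$.

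Parts (i) and (iii) are immediate. For (i): if $R(z,z')$ (resp.\ $G(z,z')$) is an atom of $q(Q)$ then $R(h(z),h(z'))\in Ch(\factset)$; from $z'\in V(Q)$ we get $h(z')\in dom(\factset)$, so the first item of Observation~\ref{proper-in-chase} yields $h(z)\in dom(\factset)$, i.e.\ $z\in V(Q)$. For (iii): the atoms $E(z_1,u),E(z_2,u)$ of $q(Q)$ map to $E(h(z_1),h(u)),E(h(z_2),h(u))\in Ch(\factset)$, and since $z_1\in V(Q)$ gives $h(z_1)\in dom(\factset)$, the third item of Observation~\ref{proper-in-chase} gives $h(z_2)\in dom(\factset)$, hence $z_2\in V(Q)$. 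In both cases the argument is insensitive to whether $h$ happens to identify some of the variables involved.

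Part (ii) is where the only real difficulty lies, and I expect it to be the main obstacle. Forwarding the query cycle $\{E_0(z_0,z_1),\dots,E_k(z_k,z_0)\}$ through $h$ produces the closed directed walk $h(z_0)\to h(z_1)\to\cdots\to h(z_k)\to h(z_0)$ in $Ch(\factset)$, and I would like to invoke the second item of Observation~\ref{proper-in-chase} to place all of its vertices in $dom(\factset)$ (whence all $z_i\in V(Q)$). The obstacle is that $h$ need not be injective on the cycle, so the image may fail to be a \emph{simple} cycle, which is what the second item of Observation~\ref{proper-in-chase} literally speaks about. I would remove this by decomposing the closed walk into edge-disjoint simple directed cycles: each such cycle of length at least two lies entirely in $\factset$ by Observation~\ref{proper-in-chase}, so all its vertices lie in $dom(\factset)$, and since every vertex of the walk is incident to a walk edge it lies on at least one of these cycles; hence every $h(z_i)$ is in $dom(\factset)$.

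What remains is to rule out degenerate length-one pieces, i.e.\ a self-loop $E(v,v)$ with $v\notin dom(\factset)$. This is the genuinely delicate point, because rule (loop) does create a non-domain Skolem element carrying self-loops; however, this is the \emph{only} non-domain element with a self-loop, since every head atom of $\rulesetd$ other than that of (loop) introduces a fresh, distinct target. Moreover this element receives no incoming edges apart from its own self-loops, so together with the first item of Observation~\ref{proper-in-chase} (the domain is closed under predecessors) it sits in a connected component of $Ch(\factset)$ disjoint from $dom(\factset)$. Since we work only with connected non-boolean queries, some free (hence marked) variable of $Q$ is already sent into $dom(\factset)$, so the whole image of $q(Q)$ lies in a domain component and the loop element is never used. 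Thus every self-loop appearing in the image sits on a domain element, and we again conclude $h(z_i)\in dom(\factset)$, i.e.\ $z_i\in V(Q)$, completing the cycle case.
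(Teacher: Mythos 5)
Your proposal is correct, and for parts (i) and (iii) it is exactly what the paper does: the paper gives no proof at all beyond the assertion that Observation~\ref{proper-in-query} ``follows from Observation~\ref{proper-in-chase}'', and the intended derivation is precisely your push-forward through the witnessing homomorphism $h$ combined with the defining equivalence $v\in V(Q)\iff h(v)\in dom(\factset)$. Where you genuinely depart from (and improve on) the paper is part (ii). The paper's one-line derivation silently assumes that the $h$-image of a query cycle is again a cycle in the sense of Observation~\ref{proper-in-chase}; as you observe, $h$ need not be injective, so the image is only a closed directed walk, and moreover item~2 of Observation~\ref{proper-in-chase} is literally false for length-one cycles, since rule (loop) creates a non-domain element carrying $R$- and $G$-self-loops. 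Your repair --- decompose the closed walk into edge-disjoint simple directed cycles, apply the paper's cycle fact to those of length at least two, and dispose of self-loops by showing that the only non-domain element carrying them lies in a connected component disjoint from $dom(\factset)$, which is where the standing restriction to connected non-Boolean queries is used --- is exactly what is needed to make the paper's claim rigorous, and it implicitly pins down the correct reading of Observation~\ref{proper-in-chase}(2) as a statement about simple cycles of length at least two. One small compression on your side: ``no incoming edges except self-loops, plus item~1'' does not by itself yield the component-disjointness of the (loop)-element (a path could enter its component through one of its descendants); what closes this is the freshness fact you already invoke --- every chase-created atom has its newly created term in second position, so rule applications never merge connected components --- after which a routine induction on chase steps finishes it. So: same mechanism as the paper for (i) and (iii), and for (ii) a more careful argument that supplies two degenerate-case analyses the paper sweeps under the rug.
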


%

Marked queries satisfying conditions (i)-(iii) of Observation \ref{proper-in-query} will be called {\em properly marked}.
Marked queries whose all variables are marked will be called {\em totally marked}. 
Properly marked queries which are not totally marked will be called {\em live}.
Notice that for a totally marked query
 $\langle \phi(\bar a) , V  \rangle$,  an instance $\factset$ and a tuple $\aaaddd{\factset}$ there is
 $Ch(\factset)\models \langle \phi(\bar a) , V  \rangle$ if and only if
$ \factset \models \phi(\bar a)$.

\subsection{High-level proof of claim (A). The process.}

\noindent
Now {\bf \small we get some conjunctive query} $\phi(\bar y)$, which will be fixed till the end of this proof,
and we want to show that there exists a rewriting of $\phi$, i.e. a finite set $rew(\phi(\bar y))$ of CQs
such that for each $\factset$ and  $\aaaddd{\factset}$ there is $Ch(\factset)\models \phi(\bar a)$
if and only if there exists a $\psi(\bar y)\in rew(\phi(\bar y))$ such that $\factset\models \phi(\bar a)$.

Notice that 
 $rew(\phi(\bar y))$  will be constructed if we can produce a finite set $\mathcal S$ of  marked queries, such that:\smallskip\\
($\spadesuit$)
for each $\factset$ and each $\aaaddd{\factset}$ we have $Ch(\factset)\models \phi(\bar a)$ 
if and only if  $Ch(\factset)\models Q(\bar a)$ holds for some   $Q\in \mathcal S$;\smallskip\\
($\clubsuit$)
there are no live queries in $\mathcal S$.\smallskip

${\mathcal S}$ will be constructed as the result of some {\em process}. As the starting point of this process
let ${\mathcal S}_0=\{Q:q(Q)=\phi(\bar y)\}$, the set of all possible markings of $\phi(\bar y)$. One can easily see that  ${\mathcal S}_0$ satisfies Condition ($\spadesuit$)
above, but
there is  no reason to think that it also satisfies Condition ($\clubsuit$). 

Now, the plan is as follows. {\em Five  operations} are going to be defined in Section \ref{implementacja}, called {\em cut-red}, {\em cut-green},
{\em fuse-red}, {\em fuse-green} and {\em reduce}. Each of them will:\\
\textbullet~  take, as an input, some marked query  $Q$;\\ 
\textbullet~  remove from $q(Q)$, one variable and some atoms (operation {\em reduce} will remove one red atom and one green) ;\\ 
\textbullet~  keep the marking of the surviving variables unchanged;\\
\textbullet~  in one case (of operation {\em reduce}) add two new variables, one red atom, and two green atoms);\\
\textbullet~ return the resulting  marked query (except for operation  {\em reduce} which will return four marked queries: one for each 
 of the four possible markings of the two new variables).
 
 It will be shown in Section \ref{implementacja} that:

\begin{lemma}[Completness]\label{completness-lemma}
If a query is live  then at least one of the five operations can be applied to it.
\end{lemma}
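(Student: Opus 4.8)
The plan is to read $q(Q)$ as a directed, two-coloured graph on $var(Q)$ (an edge from $z$ to $z'$ whenever $R(z,z')$ or $G(z,z')$ occurs) and to use the three proper-marking conditions of Observation~\ref{proper-in-query} to isolate a single variable whose local shape forces one of the five operations. Let $U = var(Q)\setminus V(Q)$; since $Q$ is live, $U\neq\emptyset$. Condition (i) says that every out-edge of an unmarked variable again lands in an unmarked variable, so the subgraph induced on $U$ is closed under out-edges. Condition (ii) says that every directed cycle of $q(Q)$ is totally marked; hence a directed cycle lying entirely inside $U$ is impossible, and in particular no variable of $U$ carries a self-loop. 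Thus the restriction of $q(Q)$ to $U$ is a finite, non-empty, loop-free DAG, and so it has a \emph{sink}: an unmarked variable $v$ with no out-edge inside $U$. Applying condition (i) once more, any out-edge of $v$ would have to land in $U$, so $v$ has no out-edge at all; it occurs only as the second argument of atoms.

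Because $\phi(\bar y)$ is connected and $U\neq\emptyset$ forces at least two variables, $v$ occurs in some atom, and by the previous paragraph every atom containing $v$ is an in-edge $R(w,v)$ or $G(w,v)$. Two distinct atoms of the same colour with target $v$ necessarily have distinct sources (atoms form a set), so the red in-degree $d_R$ and green in-degree $d_G$ of $v$ are well defined and satisfy $d_R+d_G\geq 1$. I would now split on the pair $(d_R,d_G)$. If $d_R\geq 2$, then $v$ has two distinct red in-edges $R(w_1,v),R(w_2,v)$; condition (iii) guarantees that $w_1$ and $w_2$ carry the same mark, so the two sources may be identified and \emph{fuse-red} applies; symmetrically $d_G\geq 2$ triggers \emph{fuse-green}. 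Otherwise $d_R,d_G\in\{0,1\}$, and the three remaining cases are exhaustive: $(d_R,d_G)=(1,0)$ is a red pendant handled by \emph{cut-red}; $(0,1)$ is a green pendant handled by \emph{cut-green}; and $(1,1)$ is exactly the head pattern $R(w_R,v),G(w_G,v)$ of rule (grid), on which \emph{reduce} applies, deleting $v$ and these two atoms and reintroducing the grid body $R(x,w_G),G(x,u),G(u,w_R)$ for fresh $x,u$ in each of its four markings.

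Since these five possibilities cover every value of $(d_R,d_G)$ with $d_R+d_G\geq 1$, at least one operation is always applicable to the sink $v$, and hence to the live query $Q$; this is the claim. The role of each proper-marking condition in the argument is worth stressing: (ii) is what makes the unmarked part a DAG and thereby guarantees a sink exists, (i) is what strips $v$ of all out-edges so that the case split is genuinely a split on in-edges, and (iii) is precisely what makes the two sources of a \emph{fuse} step coherent (equally marked), so that the operation is even well defined.

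I expect the real work to lie not in this combinatorial case split but in the verifications deferred to where the operations are defined: that each operation's stated syntactic precondition is met by the configuration I have isolated at $v$, and --- the point that matters for iterating the process --- that applying it again yields only properly marked queries. The most delicate case is \emph{reduce}, which inverts a single application of rule (grid): one must check that all four markings it emits are properly marked and that the soundness requirement ($\spadesuit$) is preserved, i.e. that replacing $v$ by the reconstructed grid body neither loses a witness nor manufactures a spurious one elsewhere in the chase. Establishing this global-to-local faithfulness, rather than the existence of an applicable operation, is where I anticipate the main obstacle.
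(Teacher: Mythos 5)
Your proof is correct and follows essentially the same route as the paper: the paper asserts the existence of a \emph{maximal} variable (your sink of the unmarked DAG, obtained from conditions (i) and (ii) of Observation~\ref{proper-in-query}), establishes the same in-edge trichotomy in Lemma~\ref{obs:dav-three-max}, and matches its cases to the five operations exactly as you do, with condition (iii) likewise invoked to make the fuse operations well defined. Your write-up simply makes explicit the sink-existence and case-exhaustiveness arguments that the paper compresses into ``easily follows.''
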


At this point we can define our process, which is supposed to ultimately create ${\mathcal S}$. Start from ${\mathcal S}_0$. Once ${\mathcal S}_i$ is 
defined, which does not satisfy condition ($\clubsuit$), take any live query $Q\in {\mathcal S}_i$, 
apply one of the five operations to this query, and define  ${\mathcal S}_{i+1}$ as  ${\mathcal S}_i$ with $Q$ 
replaced by the query (or queries) resulting from this application. Clearly, for the process to make sense we will need to prove, in Section \ref{soundness-lemma-proof} (in the Appendix):

\begin{lemma}[Soundness]\label{soundness-lemma}
Suppose an application of one of the five operations to some marked query $Q_0$ returns a set $\mathcal Q$ consisting of one or several marked queries.
Then, for any $\factset$ and any $\aaaddda{\factset}$ we have
 $Ch(\factset)\models Q_0(\bar a)$ 
if and only if there exists a $Q\in \mathcal Q$ such that $Ch(\factset)\models Q(\bar a)$.
\end{lemma}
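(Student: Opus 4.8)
The plan is to prove both directions of the biconditional by a case analysis over the five operations, treating each as a local transformation that mirrors, on the query side, a single step of the chase of $\rulesetd$. The organizing principle is the meaning of marking: a variable $v$ is unmarked in a properly marked query exactly when every witnessing homomorphism $h$ must send $v$ to a chase-produced Skolem term, $h(v)\in dom(Ch(\factset))\setminus dom(\factset)$. By Observation \ref{lem:termparentatom} such a term has a unique creating rule application $appl(\rho,\sigma)$, for exactly one of the three rules $\rho$ of $\rulesetd$ (working, if one wants full formality, in the single-head encoding mentioned after Definition \ref{davids-example-def}, so that birth atoms are genuinely unique). Reading off this application fixes the local red/green shape of the chase around $h(v)$, and each operation is tailored to one such shape; the proof amounts to matching the operation against it.

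For the forward implication I would take a witness $h$ for $Ch(\factset)\models Q_0(\bar a)$, single out the unmarked variable $v$ that the operation deletes, and inspect the birth of $h(v)$. For $\cutred$ and $\cutgreen$, $v$ is a head-endpoint created by rule (pins); deleting $v$ and its incident atom leaves a query witnessed by the restriction of $h$. For $\mergered$ and $\mergegreen$, two atoms of $q(Q_0)$ of one colour point into the common unmarked variable $v$; since $h(v)$ is a Skolem term with a unique birth, both atoms are forced onto it, so their source variables must coincide, and the operation performs exactly this identification (Observation \ref{proper-in-query}(iii) keeping the marks consistent). The crucial case is $\reduce$, where $h(v)$ is created by rule (grid) from a body $R(x,x'),G(x,u),G(u,u')$. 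Tracing back through this body replaces the deleted red and green atoms of $q(Q_0)$ by the body's single red and two green atoms, introducing the two non-frontier body-variables $x,u$; extending $h$ via $h(x)=\sigma(x)$ and $h(u)=\sigma(u)$ then witnesses exactly that one of the four output queries whose marking of $\{x,u\}$ records whether $\sigma(x),\sigma(u)$ lie in $dom(\factset)$. Replacing one red step by two green ones here is precisely the source of the doubling in Theorem \ref{davids-example-th}(B).

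For the backward implication I would reverse each construction, using that $Ch(\factset)$ is closed under rule application (Definition \ref{skolem-chase}) and that, under the Skolem naming convention, re-firing a rule reproduces the very same atoms. Given a witness $h'$ for some $Q\in\mathcal{Q}$, the terms it assigns to the surviving (frontier) variables let me fire the relevant rule of $\rulesetd$ and re-derive, inside $Ch(\factset)$, the atoms the operation had removed; in the $\reduce$ case the output query literally contains the grid body, so grid fires on $h'$ directly. Gluing $h'$ to the witness of the re-derived atoms yields a homomorphism $h$ for $q(Q_0)$, and it remains to check that $h$ respects the marking of $Q_0$: the reinstated variable is sent to a genuine Skolem term and every surviving variable keeps its marked/unmarked status. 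As every operation preserves the proper-marking conditions of Observation \ref{proper-in-query}, this check goes through.

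I expect the main obstacle to be the $\reduce$ case, and within it the bookkeeping of marks over the two fresh variables $x,u$: one must verify that the four output markings both exhaust and separate the possibilities, so that ``$h$ witnesses $Q_0$'' corresponds to ``exactly the appropriate one of the four queries is witnessed'', with the split dictated by where $\sigma(x),\sigma(u)$ land. A secondary subtlety, flagged by the motivating exercise, is rule (loop): it is what makes certain degenerate boundary configurations satisfiable, so the enumeration of possible births must account for it rather than discard it. Beyond these points the verification, though lengthy, is routine.
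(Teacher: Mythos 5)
Your proposal is correct and follows essentially the same route as the paper's Appendix B: a case analysis over the five operations in which the easy directions come from restriction/variable-identification, the backward directions for cut and reduce come from firing (pins) and (grid) inside the chase (the output of reduce literally containing the grid body), and the hard forward directions for fuse and reduce come from analyzing the unique birth application of the image of the deleted unmarked variable, with the four markings of the fresh variables in reduce absorbing where the body terms land. The only cosmetic deviations — your unnecessary (and not quite accurate) claim that the cut variable's image is (pins)-born, and your deferral of the (loop)-born edge case — do not affect the argument and are handled at the same level of rigor as the paper itself.
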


It follows from the above lemma that each ${\mathcal S}_i$ satisfies Condition ($\spadesuit$). What we still need to show is that 
the process terminates: as some point we will get  ${\mathcal S}_i$ without live queries. 
For that we are going to use {\em ranks}. 

\subsection{High-level proof of claim (A). Termination.}

 For  a marked query $Q$
 and an atom 
 $\alpha\in Q_{\nic{G}}$ the {\bf e}dge  {\bf r}an{\bf k} $erk(\alpha,Q)$ will be 
defined (\cref{def:variable-rank}) as
some natural number, reflecting ``the minimal cost of hiking from a marked variable to $\alpha$''. Then we will prove that:

\begin{lemma}\label{invariant-lemma}
Suppose a marked query  $Q'$ is returned as a result of applying one of the five operations to
 $Q$. Then:
 
 \begin{itemize}
 \item[(i)] \mbox{If the operation is {\bf cut-red} or {\bf fuse-red} then $|Q_{\nic{R}}|>|Q'_{\nic{R}}|$.}
 
 \item[(ii)] If the operation is {\bf cut-green} then $|Q_{\nic{R}}| = |Q'_{\nic{R}}|$ and  for each  
 $ \alpha\in Q_{\nic{G}}$ there is $erk(\alpha, Q)\geq erk(\alpha, Q')$. 
 
 \item[(iii)] If the operation is {\bf fuse-green} then $|Q_{\nic{R}}| > |Q'_{\nic{R}}|$ or $|Q_{\nic{R}}| \geq |Q'_{\nic{R}}|$ 
 and  $erk(\alpha, Q)\geq erk(\alpha, Q')$ for each  $ \alpha\in Q_{\nic{G}}$.
 
 \item[(iv)] If the operation is {\bf reduce} and  if $\alpha\in Q_{\nic{G}}$ is the green atom removed by the operation and 
  if $\alpha'\in q(Q')$ is any of the two green atoms added by the operation then:
  \begin{itemize}
 \item[(a)]  $|Q_{\nic{R}}| = |Q'_{\nic{R}}|$;
\item[(b)]  $erk(\alpha', Q') < erk(\alpha, Q) $;
\item[(c)] 
if  $\beta\in Q_{\nic{G}}\cap Q'_{\nic{G}} $ then $erk(\beta, Q')\leq erk(\beta, Q) $.
\end{itemize}
\end{itemize}
\end{lemma}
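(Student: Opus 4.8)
The plan is to prove the lemma by a careful but essentially mechanical case analysis, handling each of the five operations separately and, for each one, reading off from its definition in Section~\ref{implementacja} exactly which variables and atoms are deleted and (for \textbf{reduce}) which are added. The two quantities to be tracked are the red-atom count $|Q_{\nic{R}}|$ and the edge ranks $erk(\alpha,Q)$ of the green atoms that survive the operation. For every case I would first settle the purely combinatorial claim about $|Q_{\nic{R}}|$, and only then turn to the rank inequalities, since the latter are where all the real content sits.

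The red operations are almost immediate. \textbf{cut-red} deletes exactly one red atom and adds none, so $|Q_{\nic{R}}|$ strictly decreases, giving (i); \textbf{fuse-red} identifies two variables and thereby collapses at least one red atom after deduplication, so again $|Q_{\nic{R}}|$ strictly decreases, and neither case needs the ranks at all. For \textbf{cut-green} the red part is untouched, which gives the equality in (ii); for the ranks I would argue that the variable and green atoms removed lie \emph{beyond} every surviving green atom, in the sense that they are never used on a cheapest hike from a marked variable to a surviving $\alpha$, so a witnessing minimal hike for $erk(\alpha,Q)$ survives unchanged in $Q'$ and hence $erk(\alpha,Q)\geq erk(\alpha,Q')$. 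For \textbf{fuse-green} I would split on whether identifying the two variables also forces two red atoms to coincide: if it does, $|Q_{\nic{R}}|$ strictly drops and we are in the first disjunct of (iii); if it does not, the red count is unchanged and the same ``fusing only shortens hikes'' argument as in the cut-green case yields the rank inequality, giving the second disjunct.

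The real work is part (iv), for \textbf{reduce}, which is (morally) an inverse application of the \textbf{grid} rule: it deletes the two produced atoms $R(u',z), G(x',z)$ together with the produced variable $z$, and adds the three body atoms $R(x,x'), G(x,u), G(u,u')$ together with the two variables $u,u'$. Part (a) is then bookkeeping: one red atom removed and one added, so $|Q_{\nic{R}}|$ is preserved. For (b) I would use the definition of $erk$ in \cref{def:variable-rank} together with the shape of \textbf{grid}: each added green atom $\alpha'$ sits one grid-step closer to the marked region than the removed green atom $\alpha$, so the cheapest hike reaching $\alpha'$ costs strictly less than the cheapest hike forced to reach $\alpha$, giving $erk(\alpha',Q') < erk(\alpha,Q)$. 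Part (c) is the monotonicity statement for the green atoms $\beta$ untouched by the operation, and is proved exactly as in the cut-green case: the deleted atoms $R(u',z), G(x',z)$ are never on a minimal hike to a surviving $\beta$ (since $z$ was the farthest, chase-produced term), while the newly added atoms can only supply additional, possibly cheaper, routes, so no surviving rank can increase.

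I expect the main obstacle to be precisely the rank-monotonicity inequalities in (ii), (iii), and (iv)(c). The subtlety is that each operation \emph{removes} atoms, so a priori the family of available hikes in $Q'$ is smaller than in $Q$, which would tend to \emph{raise} the minimal cost — the opposite of what is asserted. The crux is therefore to show that the operations are engineered so that every deleted atom or variable is strictly farther from the marked variables than the green atom whose rank is being bounded, hence never used on a cheapest hike to a surviving green atom. Once this geometric ``the cut always lies beyond the target'' statement is isolated as a single auxiliary claim (extracted from the applicability conditions of the operations and from \cref{def:variable-rank}), all four rank inequalities follow from it in essentially the same way, and the lemma will then feed directly into the multiset-of-ranks termination argument of the following subsection.
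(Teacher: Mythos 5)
Your high-level structure (per-operation case analysis, settling the red count first and the rank inequalities second) matches the paper's, and parts (i), (ii) and (iv)(a) are handled adequately: for \emph{cut-green} it is indeed true that a cheapest hike never traverses the deleted atom $G(z,x)$, because $x$ is a sink incident to that single atom, so any use of it forms a consecutive pair $G(z,x)\,\GmJ(x,z)$, a strictly positive-cost detour that can be excised. The genuine gap is that you promote this into a single auxiliary principle --- ``every deleted atom lies beyond the target, hence is never used on a cheapest hike to a surviving green atom'' --- and rest (iii) and (iv)(b),(c) on it. That principle is \emph{false} precisely for \emph{reduce}: there $x$ is incident to one red atom $R(x_r,x)$ and one green atom $G(x_g,x)$, so a hike can pass \emph{through} $x$, entering by one of these atoms and leaving by the other, and this pass may be the only connection between the marked variables and the surviving atom $\beta$; it cannot be excised, it must be \emph{rerouted} through the freshly added atoms. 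This rerouting, together with its cost arithmetic, is the entire content of the paper's proof of (iv): one replaces the subsequence $R(x_r,x)\,\GmJ(x,x_g)$ by $\GmJ(x_r,x'')\,\GmJ(x'',x')\,R(x',x_g)$ (and argues symmetrically for the other possible ending of a minimal hike), and checks against Definition~\ref{elevation} that the cost drops from $cost(P_0)+3\cdot elev(P_0)$ to $cost(P_0)+2\cdot elev(P_0)$ while the elevation after the replaced segment is unchanged, so the tail $P_1$ contributes equally in both paths. This $3$-versus-$2$ (and, for claim (b), $1$-versus-$\frac{2}{3}$) elevation computation is exactly what the factor $3$ in the rank definition was engineered for, and it is entirely absent from your argument; asserting that $\alpha'$ is ``one grid-step closer to the marked region'' is a restatement of (b), not a proof of it. Note also that your unified principle could not even be applied to (b), which compares the rank of a \emph{new} atom in $Q'$ with the rank of a \emph{deleted} atom of $Q$.

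There is a second, smaller gap in (iii). Your case split (do two red atoms coincide after the identification?) is the right one, but the reason it is needed is not the one you give. In the non-coinciding case the argument is not that cheapest hikes avoid the atoms incident to $x$ --- they may well pass through $x$, entering via $G(z,x)$ and leaving via $\GmJ(x,z')$ --- but that renaming $z'$ to $z$ maps any hike of $Q$ to an equal-cost sequence in $Q'$, and one must then verify that this renamed sequence is still an $R$-path, i.e.\ that condition $(\star)$ of Definition~\ref{qpaths} is not violated by two distinct red atoms of $Q$ becoming a single red atom of $Q'$ traversed twice. The paper's proof of (iii) is devoted exactly to showing that such a violation forces one of the degenerate configurations in which two red atoms merge, which is the case already absorbed by the strict drop of $|Q_{\nic{R}}|$. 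Without that verification the renamed ``hike'' need not be a hike at all, and the inequality $erk(\alpha,Q)\geq erk(\alpha,Q')$ does not follow.
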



\noindent{\bf Multisets.}  Using Lemma \ref{invariant-lemma} we are going to prove that our process indeed terminates.
To this end we  borrow a technique from the term rewriting community. 
We will use the notation $\{\ldots \}_m$ to denote a  multiset. 
$M(A)$ will denote the family of all finite multisets with elements from $A$.
By $<_m$ we will denote the (strict) mutiset ordering on $M({\mathbb N})$.
By  ${\mathcal R}$ we will mean  the set of all possible pairs $\langle k, A\rangle$ where $k\in \mathbb N$ and $A\in M({\mathbb N})$.
For   $\langle k, A\rangle, \langle k',A'\rangle\in {\mathcal R}$
define $\langle k, A\rangle<_{\mathcal R} \langle k',A'\rangle$ if $k<k'$ or $k=k'$ and $A<_m A'$. Finally, let $<_M$ be the (strict) multiset ordering on  
$M({\mathcal R})$. It is well known  (\cite{DM79}) that ($\heartsuit$) if $A$ is well-ordered then the multiset ordering on $M(A)$ is also a well-ordering. 
So $<_m$ is a well-ordering. In consequence $<_{\mathcal R}$, which is the lexicographic ordering
on the cartesian product of two well-ordered sets is a well-ordering too. And, again using $(\heartsuit)$, we get that $<_M$ is  a well-ordering.

\begin{definition}
\begin{itemize}
\item
 For a marked query  $Q$ define its rank 
$qrk(Q)\in \mathcal R$ as $\langle |Q_{\nic{R}}|, \{erk(\alpha,Q) : \alpha \in Q_{\nic{G}}\}_m \rangle$.

\item
For a set of marked queries $\mathcal S$ define its rank $srk({\mathcal S})\in M({\mathcal R})$ as
the multiset  $\{qrk(Q):Q\in {\mathcal S}   \}_m$.

\end{itemize}
\end{definition}

Now, since the set $M({\mathcal R})$ is well-ordered by $<_M$, to prove termination of our process 
it is enough to show that whenever it produces two subsequent sets ${\mathcal S_i}$ and  ${\mathcal S_{i+1}}$ 
there must be (*)  $srk({\mathcal S_{i+1}})<_M srk({\mathcal S_i})$. But recall that ${\mathcal S_{i+1}}$ is ${\mathcal S_i}$
with one marked query, call it $Q$, replaced  by one of the five operations, with a set $\mathcal Q$ consisting of one or several marked queries. 
So (this is how the multiset ordering works) in order to show (*) it is enough to show that for each $Q'\in \mathcal Q$ we have 
$Q'<_{\mathcal R} Q$. 

But this follows immediately from Lemma \ref{invariant-lemma}, from the definition of the lexicographic 
ordering $<_{\mathcal R}$ (if the 
operation in question is {\em cut-red} or {\em fuse-red}) and from the definition of the multiset ordering  $<_m$ (for the remaining three operations).

\section{Proof of Theorem \ref{davids-example-th} (A). Five operations.}\label{implementacja}\label{far-beyond-dwa}

It easily follows from Observation \ref{proper-in-query} that for every live query 
$Q$
there must exist a {\em maximal} variable $x\in var(Q)$. By this we mean that $x\not\in V$ and
that no atom of the form $E(x,z)$ occurs in $q(Q)$ for $E\in \{G,R\}$. Notice that:

\begin{lemma}\label{obs:dav-three-max}
Let $x$ be a maximal variable of a live query $\langle \phi(\bar y), V\rangle$. Then one of the following condition holds:
\begin{itemize}
    \item[(i)] $x$ occurs in exactly one atom  $E(z,x)$, with $E\in \{G,R\}$;
    \item[(ii)] $x$ occurs in exactly two atoms $R(x_r,x)$ and $G(x_g,x)$ for some $x_r,x_g$;
    \item[(iii)] there exist at least two vertices $z\neq z'$ and $E \in \{G,R\}$ such that $E(z,x)$ and $E(z',x)$ are atoms of $\phi(\bar y)$.
\end{itemize}
\end{lemma}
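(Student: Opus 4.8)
The plan is to argue purely combinatorially about the positions in which $x$ can occur, so the proof is a short case analysis with no real induction. First I would record two structural facts about the given maximal variable $x$. Since $x\in var(Q)$, it occurs in at least one atom of $q(Q)$; and since $x$ is maximal, no atom $E(x,z)$ with $E\in\{G,R\}$ occurs in $q(Q)$, which in particular rules out a self-loop $E(x,x)$. Hence every atom in which $x$ occurs has the shape $E(z,x)$ with $x$ strictly in the target (second) position, and there is at least one such atom.

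Next I would count these atoms by colour. Let $k_R$ be the number of atoms $R(z,x)\in q(Q)$ and $k_G$ the number of atoms $G(z,x)\in q(Q)$, so $k_R+k_G\geq 1$. The observation driving the whole split is that, because atoms form a \emph{set}, two distinct atoms of the same colour sharing the target $x$ must differ in their source. The analysis then runs as follows. If $k_R\geq 2$ (or symmetrically $k_G\geq 2$), there are two distinct atoms $R(z,x)$ and $R(z',x)$ with $z\neq z'$, which is exactly condition (iii). Otherwise $k_R\leq 1$ and $k_G\leq 1$. If $k_R+k_G=1$, then $x$ occurs in exactly one atom $E(z,x)$, which is condition (i). The only remaining possibility is $k_R=k_G=1$, giving precisely the two atoms $R(x_r,x)$ and $G(x_g,x)$ of condition (ii).

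I expect no serious obstacle; the two points needing care are the set semantics just noted (so that distinct same-colour atoms genuinely witness (iii)) and the fact that the signature contains exactly the two binary symbols $R$ and $G$. It is this last fact that makes the split exhaustive: with only two colours available, ``no colour is repeated among the atoms incident to $x$'' forces at most one atom per colour, hence $k_R+k_G\leq 2$, so once we are outside case (iii) we necessarily have $k_R+k_G\in\{1,2\}$. I would also remark that the proper-marking conditions of \cref{proper-in-query} are \emph{not} invoked in this lemma: they are needed only to guarantee that a maximal variable exists at all (a sink among the unmarked vertices, obtained by following outgoing edges and using that a cycle would have to be fully marked), whereas here such an $x$ is handed to us.
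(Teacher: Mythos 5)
Your proof is correct and is exactly the straightforward case analysis the paper has in mind: the paper states this lemma without proof (prefaced only by ``Notice that:''), since once maximality forces every atom containing $x$ to have the shape $E(z,x)$, the trichotomy follows by counting incident atoms per colour, as you do. Your two side remarks (set semantics of atoms guaranteeing $z\neq z'$ in case (iii), and that \cref{proper-in-query} is needed only for the \emph{existence} of a maximal variable, not for this lemma) are both accurate.
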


As promised, now we can define the {\bf \small five operations}.\\ 
Suppose $Q=\langle \phi(\bar y), V\rangle$ is a live query and
$x\in var(Q)$.

\begin{definition}[$\cutred$]\label{cut-op-def}
Suppose $x$ is as in Lemma \ref{obs:dav-three-max}(i), with $E=R$.
Define $\cutred(\M,x)$ as $\pair{\phi'(\bar y), V(\M)}$ where $\phi'(\bar y)$ is created from $\phi(\bar y)$ by removing  the sole atom containing $x$.
\end{definition}
\noindent
Operation $\cutgreen$ is defined in an analogous way.

\begin{definition}[$\mergered$]
Let $x$, $z$ and $z'$  be as in Lemma \ref{obs:dav-three-max} (iii), with $E=R$.
Then $\mergered(\M,x,z,z')=$  $\pair{\phi'(\bar y), V}$ where $\phi'(\bar y)$ is $\phi(\bar y)$ with all  occurences of $z'$ renamed\footnote{One could wonder what happens if one of the two variables we unify is in $V$ and the other is not in $V$. But this 
is prohibited by Observation \ref{proper-in-query} (iii). } as $z$. 
\end{definition}
\noindent
Operation $\mergegreen$ is defined in an analogous way.

\begin{definition}[reduce]\label{reduce-def}
Suppose $x$ is as in Lemma \ref{obs:dav-three-max} (ii).
 Let $\phi'(\bar y)$ be a query obtained from  $\phi(\bar y)$
by removing atoms  $R(x_r, x)$ and $G(x_g, x)$ and replacing them with atoms 
 $G(x', x''), G(x'', x_r), R(x', x_g)$ where $x'$ and $x''$ are fresh variables. Then  define $\reduce(\M,x)$ as the set of four marked queries $\pair{\phi'(\bar y), V(\M)}$, $\pair{\phi'(\bar y), V(\M) \cup \set{x'}}$, $\pair{\phi'(\bar y), V(\M) \cup \set{x', x''}}$ and $\pair{\phi'(\bar y), V(\M) \cup \set{x''}}$\footnote{Note that $\pair{\phi'(\bar y ), V(\M) \cup \set{x''}}$ is not properly marked, and thus not live, and it will in no way contribute to our process any more.}.
\end{definition}

Now, Lemma \ref{completness-lemma} easily follows from Lemma \ref{obs:dav-three-max} and from Definitions \ref{cut-op-def}--\ref{reduce-def}.  

The next
 thing left to be proven in this Section is \cref{invariant-lemma}. Notice that claim (i) of the Lemma is now
 obvious: $\cutred$ just removes a single red edge and $\mergered$ merges two red edges into one. 
 In order to prove claims (ii)-(iv) however one needs to work a little bit harder.

\vspace{1.5mm}\noindent
{ \bf Ranks. } We are now going to define the rank $erk(\alpha,Q)$ for a live query $Q$ and an atom  $\alpha\in Q_\nic{G}$. To this end we  consider paths, from some marked variable to $\alpha$,  traversing edges of $q(Q)$ (in both directions). To be more precise: 
%
%

\begin{definition}\label{qpaths}
Given a live query $\M$  an {\em $R$-path}  is a finite sequence $P$   such that:

\noindent
 \textbullet~~each of the elements of $P$ is either $E(t,z)$ or $E^{\nic{-1}}(z,t)$ for some   $E(t,z)$ from $q(Q)$, where $E\in\{G,R\}$ 
(obviously, $E^{\nic{-1}}(z,t)$ means that we traverse $E(t,z)$ backwards);\\
\textbullet~~if $E(s,t)$ and $E'(u,z)$ are two consecutive elements of $P$  then $t=u$ (where $E,E'\in\{G,R,\GmJ, \RmJ \}$);\\
$(\star)$~~if $R(t,z)$ is an atom of $q(Q)$ then only one of $R(t,z)$ and $\RmJ(z,t)$ can appear in $P$ and it can appear at most once. 
   \end{definition}

Notice that  an atom  $G(t,z)$ of $q(Q)$, as well as $\GmJ(z,t)$, can appear any number of times in a $R$-path.  Each $R$-path has its elevation and its cost\footnote{The cost of making a step depends on the current elevation. Current elevation depends on the difference between 
the total ascent  and total descent.}:

\begin{definition}[elevation and cost]\label{elevation} For an empty $R$-path $\emptyset$ we define
 $cost(\emptyset)=0$ and $elev(\emptyset)=3^{|Q_\nic{R}|}$. For a path $P = P'E(x,z)$ we define:

\begin{itemize}
    \item  $elev(P) = elev(P')$ if $E \not\in \set{R, \RmJ}$ 
    \item  $elev(P) = 3\cdot elev(P')$ if  $E = R$
    \item  $elev(P) = \frac{1}{3} \cdot elev(P')$ if $E = \RmJ$
   \item  $cost(P) = cost(P') + elev(P')$ if $E \in \set{G, \GmJ}$ 
    \item  $cost(P) = cost(P')$ if $E \not\in \set{G, \GmJ}$ 
\end{itemize}
\end{definition}

Notice that it follows from condition $(\star)$ of Definition \ref{qpaths} that  $elev(P)$ is always a positive natural number.

\begin{definition}\label{vqpaths} For an atom $\alpha=G(u,u')$ in $ q(Q)$ by an  {\em $\alpha$-hike} we mean an   {\em $R$-path} such that:\\
\textbullet~~if $E(t,z)$ or $E^{\nic{-1}}(t,z)$ is the first atom of $P$ then $t\in V(Q)$;\\
\textbullet~~the last element of $P$ is either $G(u,u')$ or $\GmJ(u',u)$.
\end{definition}

\vspace{-1mm}
\noindent
 For $\alpha\in Q_\nic{G}$ 
we denote the set of all $\alpha$-hikes as
$hikes(\alpha,Q)$.

\begin{definition}\label{def:variable-rank}
For  an atom   $\alpha\in Q_\nic{G}$  its rank is defined as:
$erk(\alpha,Q)=min(\set{cost(P): P \in hikes(\alpha,Q)})$.
\end{definition}

It is easy to see that:

\begin{observation}\label{obs:dav-no-bridge}
For  a marked query $Q$ and for $\alpha=G(u,u')$ in $Q_\nic{G}$, if  $P\in hikes(\alpha,Q)$ and $cost(P) = erk(\alpha, \M)$ then 
any of $\alpha$ or  $\GmJ(u',u)$ can  occur only as the last atom of $P$.
\end{observation}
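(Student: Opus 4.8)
The plan is to argue by contradiction using a truncation (prefix) argument, exploiting the fact that the cost of an $R$-path never decreases as the path grows. Fix a minimum-cost hike $P\in hikes(\alpha,Q)$, so that $cost(P)=erk(\alpha,Q)$, and suppose for contradiction that one of the two oriented edges $G(u,u')$ or $\GmJ(u',u)$ occurs at some position $j$ of $P$ which is \emph{not} the final one. Let $P'$ be the prefix of $P$ consisting of its first $j$ elements.

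First I would check that $P'$ is again a legal $\alpha$-hike. Being a prefix of the sequence $P$, it trivially still satisfies the head-to-tail matching condition of Definition \ref{qpaths}, and each of its elements has the required shape; moreover it inherits condition $(\star)$ from $P$, since it traverses a subset of the red atoms traversed by $P$, each at most as often. Its first element is the first element of $P$, whose tail lies in $V(Q)$, so $P'$ starts at a marked variable as required by Definition \ref{vqpaths}. Finally, by the choice of $j$ its last element is either $G(u,u')$ or $\GmJ(u',u)$ --- and this is exactly the point where the definition of an $\alpha$-hike helps, since a hike is permitted to terminate with $\alpha$ in \emph{either} orientation. Hence $P'\in hikes(\alpha,Q)$.

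It then remains to see that $cost(P')<cost(P)$. Here the crucial observation is the monotonicity built into Definition \ref{elevation}: every step either leaves the cost unchanged (a red step) or raises it by the current elevation, and the elevation is always a \emph{positive} natural number, as already noted right after Definition \ref{elevation} (this being guaranteed by condition $(\star)$). Thus $cost$ is non-decreasing along any path and strictly increases precisely at green steps. Now the discarded part of $P$ (the portion following position $j$) is nonempty, because $j$ was not the last position, and its final element --- the last element of $P$ itself --- is green, being $\alpha$ or $\GmJ(u',u)$. Consequently this discarded portion contributes at least one strictly positive increment, so $cost(P')<cost(P)=erk(\alpha,Q)$, contradicting the minimality of $P$.

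The argument is essentially a one-line application of cost monotonicity, so there is no serious obstacle. The only points demanding a little care are (a) confirming that a prefix of a hike is still a legal hike --- in particular that condition $(\star)$ survives truncation and that terminating with the \emph{reversed} green atom $\GmJ(u',u)$ still counts as a valid ending --- and (b) making sure the truncated suffix genuinely contains a green step, so that the drop in cost is strict rather than merely non-strict.
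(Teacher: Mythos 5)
Your proof is correct: the paper states this observation without proof (it is introduced with ``It is easy to see that''), and your prefix-truncation argument is exactly the intended one --- a proper prefix of $P$ ending at an occurrence of $\alpha$ or $\GmJ(u',u)$ is itself a legal $\alpha$-hike, while the discarded suffix contains the final (green) step of $P$, whose cost increment is strictly positive since elevation is a positive natural number. You also correctly flag the only two points requiring care, namely that condition $(\star)$ survives truncation and that the cost drop is strict rather than merely non-strict.
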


Now we can finally prove Lemma  \ref{invariant-lemma}.

\vspace{1mm}
\noindent
{\bf Proof of Lemma \ref{invariant-lemma} (ii).} 
Let $Q'=\cutgreen(Q,x)$. 
It follows directly from the construction that $|Q_\nic{R}| \geq |Q'_\nic{R}|$. Now take
any $\alpha\in Q'_\nic{G}$. We need to show that $erk(\alpha, Q) \geq erk(\alpha, Q')$.
Let $z\in var(Q)$ be such that $G(z,x)$ is an atom of $q(Q)$. Consider any  $P\in hikes(\alpha,Q)$. 
 Let $P'$ be a path obtained from $P$ by deleting each occurrence of $G(z,x)$ and $\GmJ(x,z)$. 
 Then $P'\in hikes(\alpha,Q') $ and  $cost(P')\leq cost(P)$. \hfill \qedsymbol

%

\vspace{1.5mm}
\noindent
{\bf Proof of Lemma \ref{invariant-lemma} (iii).} 
Let $Q'=\mergegreen(Q,x,z,z')$.

If (*) there exists a variable $u\in var(Q)$ such that atoms $R(z,u)$ and $R(z',u)$ are in $q(Q)$, or that $R(u,z)$ and $R(u,z')$  are in $q(Q)$, then the two red edges merge in $q(Q')$ and  $|Q_\nic{R}| > |Q'_\nic{R}|$. Also, if (**) at least two of the atoms $R(z,z')$, $R(z',z)$, $R(z,z)$, and $R(z',z')$
are in $q(Q)$ then  $|Q_\nic{R}| > |Q'_\nic{R}|$.

So suppose there is neither (*) nor (**). Take any  $\alpha\in Q'_\nic{G}$ and   $P\in hikes(\alpha,Q)$. Consider path $P'$ obtained from $P$ by replacing each occurrence of $z'$ in $P$ with $z$. Obviously, $cost(P) = cost(P')$. 
We now will show that $P'\in hikes(\alpha,Q')$. Clearly, one only needs to worry if condition $(\star)$ of Definition \ref{qpaths} holds. Suppose towards contradiction that it does not. It can only happen when there exist $s,t\in var(Q')$ such that two atoms from $\set{R(s,t), \RmJ(t,s)}$ appear in $P$.

Let us assume that $R(s,t)$ (other cases are analogous) appears twice in $P'$. But of course $R(s,t)$ could not appear twice in $P$, which is an $R$-path. So at least one occurrence of  $R(s,t)$ in $P'$ results from the unification of $z$ and $z'$.

There are two  cases: either (a)  $s=t=z$ or (b) exactly one of $s,t$ equals $z$. So suppose (b) happened and without loss of generality assume that $s = z$. We know that there was $R(z,t)$ and $R(z',t)$ in $P$ and they both unified to  $R(z,t)$ in $P'$. But this would imply (*), leading to a contradiction. The remaining case (a) is that $R(z,z)$ occurs twice in $P'$. But this would need (**) to be true, which is a contradiction again.

So $P'\in hikes(\alpha,Q') $ and  $erk(\alpha,\M) \geq erk(\alpha, \M')$. \hfill \qedsymbol

\vspace{1.5mm}
\noindent
{\bf Proof of Lemma \ref{invariant-lemma} (iv).} Let $Q'$ be any of the marked queries in $\reduce(\M,x)$.
Let  $x_r$ and $x_g$ be variables of $\M$ such that $G(x_g, x)$ and $R(x_r, x)$ are atoms of $q(\M)$. 
Recall that $x', x''\in var(Q')$ are two new variables added by {\em reduce}.

Clearly, $|Q'_\nic{R}| = |Q_\nic{R}|$, since when creating $q(Q')$ from $q(Q)$ we have just replaced one red edge with another.

Let us first show  {\bf claim (c)}. Take some $\beta\in Q_\nic{G} \cap Q'_\nic{G}$ and  $P\in hikes(\beta,Q)$ such that 
$cost(P) = erk(\beta,Q)$. Our goal is to find  $P'\in hikes(\beta,Q')$ such that $cost(P') \leq cost(P)$. 

Obviously, one can assume that $P$ does not contain, as a connected subsequence, $G(x_g, x)\GmJ(x, x_g)$. 
If it did, we could remove such subsequence to get a new  $P$ with lower cost. And, as an R-path, $P$ does not contain
subsequence $R(x_r, x)\RmJ(x, x_r)$ either.

If no atom of $P$ contains $x$ then $P\in hikes(\beta,\M')$ and thus we set $P' = P$. Otherwise it contains exactly one of $R(x_r, x)\GmJ(x, x_g)$ or $G(x_g, x)\RmJ(x, x_r)$ as a connected subsequence.
Suppose  $P = P_0R(x_r, x)\GmJ(x, x_g)P_1$. Let $A = \GmJ(x_r, x'')\GmJ(x'', x')R(x', x_g)$ and consider path $P' = P_0AP_1$. Note that $P'\in hikes(\beta,Q)$. Now we need to show that $cost(P) > cost(P')$. But:\smallskip \\
$ cost(P)= cost(P_0)+3\cdot elev(P_0)+3\cdot  elev(P_0)\cdot cost(P_1)$\smallskip\\
$ cost(P')= cost(P_0)+2\cdot elev(P_0)+3\cdot elev(P_0)\cdot cost(P_1)$\smallskip\\
The case when $P = P_0G(x_g, x)\RmJ(x, x_r)P_1$, is similar.\smallskip\\
Let us now move to  {\bf claim (b)}. Take $P\in hikes(\alpha,Q)$ such that $cost(P) = erk(\alpha,\M)$. We are going to build a path
$P' \in hikes(\alpha', \M')$ such that $cost(P') < cost(P)$.

There are two cases depending on the last atom of $P$:

First case is when  $P$ is $P_0G(x_g, x)$.
Let then $P'$ be $P_0\RmJ(x_g, x')G(x',x'')$ or $P_0\RmJ(x_g, x')G(x',x'')G(x'',x_r)$ (depending on whether $\alpha'$ is $G(x',x'')$ or $G(x'',x_r)$).
 Then of course $P'\in hikes(\alpha',Q)$
and $cost(P)=cost(P_0)+elev(P_0)$ while
$cost(P') \leq cost(P_0) + \frac{2}{3}\cdot elev(P_0)$.

The second case is that $P = P_0R(x_r,x)\GmJ(x, x_g)$. Here the argument is similar. Now we take $P' = P_0\GmJ(x_r, x'')$ or $P' = P_0\GmJ(x_r, x'')\GmJ(x'', x')$ 
(again depending on whether $\alpha'$ is $G(x',x'')$ or $G(x'',x_r)$). Then  $P'\in hikes(\alpha',Q)$
and $cost(P)=cost(P_0)+3\cdot elev(P_0)$ while
$cost(P') \leq cost(P_0)+2\cdot elev(P_0)$ . 
%
%
\hfill \qedsymbol


\section{Theories which are extremely non-local} \label{ojojoj}

\noindent
Theorem \ref{davids-example-th} begs for generalization. For $K\in \mathbb N$ define $\ruleset_d^K$ as
a theory, over  $\Sigma_K=\{I_K, I_{K-1}, \ldots I_1\}$ (each $I_k$ is a binary relation symbol),
comprising, for each $1\leq i< K$ and each $1\leq k\leq K$ the following $2K+1$ rules:

\noindent
($loop$) \hfill $true \Rightarrow \exists x \;\; I_K(x,x), I_{K-1}(x,x), \ldots I_1(x,x)$

\noindent
($pin_k$) \hfill $ \forall x\; (\; true \Rightarrow \exists z \; I_k(x,z) \;)$

\noindent
\mbox{($grid_i$)~$ I_{i+1}(x,x'), I_i(x,u),I_i(u,u')  \Rightarrow \exists z\; I_{i+1}(u',z),I_i(x',z)$}

Using the ideas from Sections \ref{far-beyond}-\ref{far-beyond-dwa} one can show\footnote{We only have room here to outline the proof very briefly. Details will be given in the journal version of this paper.} that:

\begin{theorem}\label{extreme}
For each $K\in \mathbb N$:\\
\noindent {\bf A.}  theory $\ruleset_d^K$ is BDD;\\
\noindent {\bf B.} there is a query $\psi(y,y')$  such that $rew_{\ruleset_d^K}(\psi(y,y'))$ contains a CQ of size $(K$-$1)$-fold exponential in
the size of $\psi$.
\end{theorem}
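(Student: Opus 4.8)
The plan is to lift, essentially verbatim, the machinery of \cref{far-beyond,far-beyond-dwa} for $\rulesetd=\ruleset_d^2$ (with $I_2=R$, $I_1=G$) to the $K$-level theory $\ruleset_d^K$, replacing the single color pair $(R,G)$ by the $K-1$ consecutive pairs $(I_{i+1},I_i)$. Concretely, I would redefine marked, properly marked, totally marked and live queries over $\Sigma_K$, where proper marking is now the conjunction of the conditions of \cref{proper-in-query} imposed separately for each relation $I_j$; these follow from the level-by-level generalization of \cref{proper-in-chase}, valid because in $Ch(\ruleset_d^K,\factset)$ the only birth patterns are the single-successor atoms made by $pin_k$, the self-loops made by $loop$, and the pair $I_{i+1}(u',z),I_i(x',z)$ made by $grid_i$. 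The $loop$ rule again collapses all Boolean queries after one chase step, so only connected non-Boolean queries need be treated. I would introduce, for each level, operations $\mathit{cut}_j$ and $\mathit{fuse}_j$ (removing, resp. merging, $I_j$-edges at a maximal variable) and, for each $1\le i<K$, an operation $\mathit{reduce}_i$ that un-applies $grid_i$: it deletes $I_{i+1}(x_{i+1},x),I_i(x_i,x)$ at a maximal $x$ and inserts $I_i(x',x''),I_i(x'',x_{i+1}),I_{i+1}(x',x_i)$ with fresh $x',x''$, returning the four markings of $\{x',x''\}$. The classification of maximal variables (\cref{obs:dav-three-max}) generalizes: a maximal variable of a live query either carries a single incoming edge (a $\mathit{cut}$ case), exactly one $I_{i+1}$- and one $I_i$-edge for consecutive levels (a $\mathit{reduce}_i$ case), or two equally-colored incoming edges (a $\mathit{fuse}$ case); any other incoming configuration matches no birth pattern and is therefore unsatisfiable and may be discarded. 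Completeness and soundness (the analogues of \cref{completness-lemma,soundness-lemma}) then go through by the same arguments applied one level at a time.

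The genuinely new difficulty is termination, and here the naive generalization of the rank breaks. For $\ruleset_d^2$ the rank is $\langle |Q_{\nic{R}}|, \{erk(\alpha,Q)\}_m\rangle$, where the elevation driving $erk$ is carried by the \emph{top} color $R=I_2$; crucially, the single $\mathit{reduce}$ swaps a top-color edge, and since the top color is only counted, not ranked, this swap is invisible to the ordering (\cref{invariant-lemma}(iv)(a)). The obvious lift --- a lexicographic tuple $\langle |Q_{I_K}|, M_{K-1},\dots,M_1\rangle$ with $M_j=\{erk_j(\alpha,Q):\alpha\in Q_{I_j}\}_m$ and $erk_j$ defined as in \cref{elevation,def:variable-rank} but with $(R,G)$ replaced by $(I_{j+1},I_j)$ --- works for $\mathit{cut}_j$, $\mathit{fuse}_j$, and for the \emph{top} reduce $\mathit{reduce}_{K-1}$ (whose swapped $I_K$-edge is again merely counted). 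It fails, however, for a \emph{lower} reduce $\mathit{reduce}_i$ with $i<K-1$: such a reduce swaps an $I_{i+1}$-edge that is now a \emph{cost} edge of level $i+1$, so it perturbs $M_{i+1}$, which sits strictly above $M_i$ in the lexicographic order, and there is no level-$(i+1)$ hike relating the deleted and the inserted edge, because the grid connects them only through $I_i$-edges that level-$(i+1)$ hikes never traverse.

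The remedy I would pursue is a \emph{nested} rank in which the disturbance caused by $\mathit{reduce}_i$ is confined to one component. Rather than a flat tuple of multisets, the contribution of each $I_{i+1}$-edge $\alpha$ would itself bundle the multiset of level-$i$ ranks of the $I_i$-edges that $\alpha$ spawns in the grid hierarchy, and the query rank would be assembled top-down by iterating the pair-then-multiset construction, so that $\mathit{reduce}_i$'s simultaneous change of levels $i$ and $i+1$ lands inside a single nested entry; well-foundedness follows by $(K-1)$-fold application of the fact $(\heartsuit)$. The main work, and the step I expect to be the real obstacle, is then the nested analogue of \cref{invariant-lemma}: showing that $\mathit{reduce}_i$ strictly decreases its own nested component while leaving the dominating higher ones unchanged, i.e. that the edge redistribution it performs moves weight strictly downward in the hierarchy. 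I would establish this with the same hike-surgery computations as in the proof of \cref{invariant-lemma}(iv)(b)--(c) --- the drop from $3\,elev(P_0)$ to $2\,elev(P_0)$ that forces the two fresh $I_i$-edges to inherit strictly smaller level-$i$ ranks than the deleted one --- now carried out inside the level-$i$ cost component, with a bookkeeping argument that the swapped $I_{i+1}$-edge is absorbed without cost by the nesting.

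For part B, I would take $\psi(y,y')$ to be the top-level analogue of $\phi_R^n$, namely the query asserting two $I_K$-paths of length $n$ whose far endpoints are joined by a single $I_{K-1}$-edge, so that $|\psi|$ is linear in $n$. The witness is the instance ${\mathbb I}_1^{T}(a,b)$, an $I_1$-path of length $T$ equal to the tower of exponentials of height $K-1$ evaluated at $n$ (a $(K-1)$-fold exponential). One shows $Ch(\ruleset_d^K,{\mathbb I}_1^{T}(a,b))\models\psi(a,b)$ by induction on the level: $grid_1$ turns the $I_1$-path into $I_2$-structure of roughly logarithmic length, $grid_2$ turns that into $I_3$-structure, and after $K-1$ iterations one obtains $I_K$-structure of depth $n$ matching $\psi$ --- exactly the single doubling of \cref{fig:dav-chase} iterated $K-1$ times. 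Minimality is the connectivity argument used for (ii) in \cref{far-beyond}: any proper subpath disconnects $a$ from $b$, and since $\ruleset_d^K$ is connected they stay in distinct components of the chase, so the connected query $\psi$ cannot be satisfied with $a,b$ in the endpoint roles. By \cref{thm:bdd-th} and the minimality of $rew$, the CQ read off from ${\mathbb I}_1^{T}(a,b)$, of size $T$ and hence $(K-1)$-fold exponential in $|\psi|$, must belong to $rew_{\ruleset_d^K}(\psi(y,y'))$; this is claim (B), while claim (A) is the output of the terminating process exactly as in \cref{davids-example-th}(A).
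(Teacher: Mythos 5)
Your part B (the tower-of-exponentials witness instance, the connectivity/minimality argument forcing a huge disjunct into the rewriting) and your generalization of the markings and of the $3K{-}1$ operations coincide with what the paper does, and you are right that the entire difficulty of part A is concentrated in the termination ordering. The gap is in your remedy for that difficulty. Your ``nested rank'' is never actually constructed: ``the multiset of level-$i$ ranks of the $I_i$-edges that $\alpha$ spawns in the grid hierarchy'' presupposes a canonical assignment of the $I_i$-atoms of a query to its $I_{i+1}$-atoms, and no such assignment exists inside an arbitrary query (an $I_i$-atom need not be attached to any $I_{i+1}$-atom, may be reachable from several, and the operations keep rearranging these attachments). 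Consequently the nested analogue of \cref{invariant-lemma} --- which you yourself flag as ``the real obstacle'' --- cannot even be stated in your framework, let alone proved, so part A of your proposal terminates in a missing key lemma.

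Moreover, the obstacle that the nesting is meant to overcome is an artifact of lifting \cref{qpaths} too literally. You let a level-$j$ hike traverse only $I_{j+1}$- and $I_j$-edges; the paper instead lets an $I_i$-path traverse atoms of \emph{every} relation of $\Sigma_K$, where atoms that are neither $I_i$ (the new red: elevation and condition $(\star)$) nor $I_{i-1}$ (the new green: cost) contribute nothing to elevation or cost. Under that definition your counterexample to the flat ordering vanishes: after $\mathit{reduce}_i$ the deleted edge $I_{i+1}(x_r,x)$ and the inserted edge $I_{i+1}(x',x_g)$ \emph{are} related by a level-$(i{+}2)$ hike surgery --- replace the final step $I_{i+1}(x_r,x)$ of a minimal hike by $I_i^{-1}(x_r,x'')\,I_i^{-1}(x'',x')\,I_{i+1}(x',x_g)$ --- and the two fresh $I_i$-steps are invisible to level-$(i{+}2)$ cost and elevation, so the multiset of ranks of $I_{i+1}$-atoms does not increase. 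The multiset of ranks of $I_i$-atoms then strictly decreases exactly as in \cref{invariant-lemma}(iv), and the paper's interleaved tuple $\langle |Q_K|, qrk_K(Q), |Q_{K-1}|, qrk_{K-1}(Q),\ldots, |Q_2|, qrk_2(Q)\rangle$ places the increased count $|Q_i|$ strictly below that decreasing component, so the whole rank drops lexicographically. In short: once hikes may walk on all colors, the flat lexicographic/multiset well-ordering suffices, and your proposal replaces a one-line change of definition by an unconstructed ordinal gadget whose crucial monotonicity property remains unproven.
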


Claim {\bf B} is relatively easy to prove. For the proof of claim {\bf A} properly marked queries first need to be slightly redefined (but let us skip it here). Then  the 
{\em five operations} need to be generalized in the natural way: we will now have $K$ {\em cut} operations, $K$ {\em fuse} operations, and $K$-$1$ {\em reduce} operations. 

The non-obvious part is how to modify the ranks $erk$ and $qrk$ so that they do their job correctly in the new circumstances. 
For that $I_i$-paths need to be defined (for $1\leq i< K$), analogous to $R$-paths in Definition \ref{qpaths}. But now the condition ($\star$) will apply to atoms of the relation $I_i$
(including $I_i^{\nic{-1}}$). Then $i$-elevation ($elev_i$) will be defined, like in Definition \ref{elevation}. And finally, we will need $cost_i$ of a path, calculated almost like the 
cost of the path in Definition \ref{elevation}: for a path $P = P'E(x,z)$ we have $cost_i(P)=cost(P')+elev_i(P')$ if $E=I_{i-1}$ or $I_{i-1}^{\nic{-1}}$ and
$cost_i(P)=cost(P')$ otherwise. Notice that $I_{i}$ is the new red and $I_{i-1}$ is the new green. Note also that we have a new situation now: $E$ may very well be neither ''green`` now nor ``red''. And that is fine, in such case it neither contributes to the elevation nor to the cost of the path.

Having the function $cost_i$, rank $erk(\alpha)$ of an atom $\alpha$ of the relation $I_{i-1}$ is (like in Section \ref{far-beyond-dwa}) defined as minimal $cost_i$ of
an  $I_i$-path from some marked variable to $\alpha$, and rank $qrk_i(Q)$ is the multiset of all ranks $erk(\alpha)$ of all atoms $\alpha$ of $I_{i-1}$ in $Q$.
Finally, $qrk(Q)$ is the tuple:
\vspace{-1.5mm}
$$\langle |Q_K|,qrk_K(Q), |Q_{K-1}|,qrk_{K-1}(Q),\ldots    |Q_2|,qrk_2(Q)    \rangle \vspace{-1mm}$$
\noindent
where $|Q_i|$ is the number of the atoms of the relation $I_i$ in $Q$. Clearly, the lexicographic ordering on the set of such ranks is a well ordering. 
Now a careful case inspection shows that each of the $3K$-$1$ operations decreases the rank of a query.


\bibliography{references-carral} 
\bibliographystyle{ieeetr}

\newpage

\section{Appendix A: Proof of Theorem \ref{th-binary-local} }\label{proof-for-binary} 

%

Let us now fix a binary signature $\Signature$ and a BDD theory $\RuleSet$ over $\Signature$. By $\DLRuleSet$ we will denote the datalog rules of $\RuleSet$ and by $\EXRuleSet$ its existential rules.

First we will need to distinguish, among all elements of $\EXRuleSet$, {\em detached}\footnote{Such rules are called {\em disconnected} in \cite{disconnected-rules}, however we think that calling those rules {\em detached} might help the reader to distinguish those from rules that have disconnected bodies.} rules that are of the form $\phi(\bar{x}) \rightarrow \exists {y,z} \psi(y,z)$ or $\phi(\bar{x}) \rightarrow \exists {y} \psi(y)$, that is, rules having empty frontier. Note that when firing a detached rule, the newly created atom, has no common terms with the rest of the Chase. Notice that, since we only consider binary schemas, the non-empty frontier of an existential rule always consists of exactly one variable\footnote{Actually, the assumption we really use in the proof of Theorem \ref{th-binary-local} is not that relations are at most binary, but that the existential rules are ``frontier one''. }. Rules from $\EXRuleSet$ which are not detached will be called {\em sensible}

Clearly, whatever fact set $\FactSet$ we consider, the structure $\Chase{\RuleSet, \FactSet}$ is a disjoint union of three sets of atoms. One set consists of the original facts from $\FactSet$. Second contains {\em existential atoms}, that is facts created in the process of the chase, by rules of $\RuleSet_\exists$. The third set consists of atoms that are created by the rules of $\RuleSet_{DL}$ and will be called as {\em datalog atoms}. We will denote the set of existential atoms of $\Chase{\RuleSet, \FactSet}$ together with atoms of $\FactSet$ with $\ChaseEX{\RuleSet, \FactSet}$.

Let us now concentrate on the structure of $\ChaseEX{\RuleSet, \FactSet}$. Notice that there are again two kinds of atoms there: {\em detached} atoms, created by detached rules and {\em sensible} atoms, created by sensible rules. Notice also that our taxonomy of atoms implies a taxonomy of the terms of $\ChaseEX{\RuleSet, \FactSet}$  
(that is the elements of $dom(\Chase{\RuleSet, \FactSet})\setminus dom(\factset)$): there are sensible terms, created by sensible rules and detached terms, created by detached rules. The set of detached terms will be called $det(\Chase{\RuleSet, \FactSet})$.

\begin{observation}\label{forest} The graph whose vertices are the terms from $dom(\Chase{\RuleSet, \FactSet})$ and whose edges are sensible atoms of $\ChaseEX{\RuleSet, \FactSet}$ is a forest. The set of the roots of the trees of this forest is equal to
$dom(\factset)\cup det(\Chase{\RuleSet, \FactSet})$. The out-degree of the vertices of this forest is bounded by the number of existential rules in $\ruleset$.
\end{observation}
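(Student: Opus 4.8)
The plan is to realise the graph as a rooted forest by assigning to every non-root vertex a canonical ``parent edge''. The central tool is Observation~\ref{lem:termparentatom}: every term $t\in dom(\Chase{\RuleSet,\FactSet})\setminus dom(\factset)$ has a unique \emph{birth atom}, the only atom in which $t$ occurs off the frontier. For a sensible term $t$ this birth atom is a sensible atom, and since (over a binary signature) sensible rules are frontier-one, that atom has exactly one frontier term, which I call $parent(t)$. I would orient every sensible atom from its frontier term (the parent) to its unique non-frontier term (which, being freshly created, is a sensible term, the child). With this orientation each sensible term has in-degree exactly $1$, namely its birth atom, whereas facts and detached terms have in-degree $0$: a fact has no birth atom at all, and the birth atom of a detached term is a \emph{detached} atom, never a sensible one, so such a term can occur in a sensible atom only on the frontier.

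For acyclicity I would argue by contradiction, using the creation step $\min\{i : t\in dom(\Step{i}{\RuleSet,\FactSet})\}$ of a term. Whenever a sensible atom is oriented from parent to child, the child is a Skolem term minted exactly when the rule fires, so its creation step is strictly larger than that of its parent. Suppose a cycle existed and let $t$ be a cycle vertex of maximal creation step. The two cycle edges at $t$ are distinct sensible atoms containing $t$; by uniqueness of the birth atom, $t$ can be the non-frontier (child) endpoint of at most one of them, so in at least one cycle edge $t$ is the frontier, whence its partner there is a child of $t$ created strictly later than $t$, contradicting maximality. Hence the graph is a forest. Its roots are exactly the in-degree-$0$ vertices, i.e.\ the non-sensible terms; since $dom(\Chase{\RuleSet,\FactSet})$ is partitioned into facts, detached terms and sensible terms, the set of roots equals $dom(\factset)\cup det(\Chase{\RuleSet,\FactSet})$, as claimed.

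It remains to bound the out-degree, which for a vertex $t$ is the number of distinct sensible atoms having $t$ on the frontier, equivalently the number of children of $t$. Here the semi-oblivious Skolem naming convention of Definition~\ref{skolemization} is decisive: a sensible rule $\rho$ has a single binary head atom with one frontier variable and one existential variable, so firing $\rho$ with the frontier mapped to $t$ always produces the same Skolem term $f_i^\tau(t)$ and the same atom, regardless of how the non-frontier body variables are matched. Thus each sensible rule contributes at most one outgoing edge at $t$, and the out-degree of every vertex is bounded by the number of sensible rules, hence by the number of existential rules of \ruleset.

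The conceptual content is light, but the one step to watch is the out-degree bound: it rests essentially on the Skolem naming convention and would simply be false for the oblivious chase, where a single rule could spawn unboundedly many distinct children from the same frontier term. The acyclicity and the identification of the roots, by contrast, follow mechanically from the uniqueness of birth atoms together with frontier-one-ness of the binary signature.
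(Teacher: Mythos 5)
Your proof is correct. Note that the paper offers no proof of Observation~\ref{forest} at all (it is stated as self-evident), and your argument is precisely the verification it leaves implicit: orienting each sensible atom from its unique frontier term to its fresh Skolem term, using uniqueness of birth atoms (Observation~\ref{lem:termparentatom}) together with creation-step monotonicity for acyclicity and the identification of the roots, and the Skolem naming convention of Definition~\ref{skolemization} for the out-degree bound. One tacit assumption worth making explicit: your cycle argument presupposes two distinct incident edges at the maximal vertex, i.e.\ that no sensible atom is a self-loop; this holds because the non-frontier term $f_i^\tau(t)$ of a sensible atom properly contains its frontier term $t$ as a subterm, so the two endpoints of every edge are distinct.
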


For any $a\in dom(\factset)\cup det(\Chase{\RuleSet, \FactSet})  $  let $\Skeleton(a)$ be the set of all atoms of $\ChaseEX{\RuleSet, \FactSet}$ which are edges of the tree rooted in $a$. Following the naming convention, we name trees that are rooted in detached terms as {\em detached trees}. 


\subsection{First (failed) attempt at the Crucial Lemma}

For any given set of facts $\FactSet$ let {\em parent function} $\ParentFunction_{\RuleSet}$ be any function from $\Chase{\RuleSet, \FactSet} \setminus \FactSet$ to the power set of $\Chase{\RuleSet, \FactSet}$ such that for any atom $\alpha\in \Chase{\RuleSet, \FactSet}$ there exists a rule $\rho$ and a mapping $\sigma$ satisfying:
\begin{itemize}
    \item $\alpha = appl(\rho, \sigma)$,
    \item $\sigma(body(\rho)) = \ParentFunction_{\RuleSet}(\alpha)$.
\end{itemize}

Parent function says which tuple of atom led to the creation of $\alpha$.
Note that there may be more than one such function as $\alpha$ could be created in more than one way during the chase.

Let $\FactSet$ be any set of facts and let $\ParentFunction_{\RuleSet}$ be some parent function. Then we define an {\em ancestor function} $\AncestorFunction_{\RuleSet}$ as follows: 
\begin{itemize}
    \item $\AncestorFunction_{\RuleSet}(\alpha) = \set{\alpha}$ for an atom $\alpha \in \FactSet$,
    \item $\AncestorFunction_{\RuleSet}(\alpha) = \bigcup_{\alpha' \in \ParentFunction_{\RuleSet}(\alpha)} \AncestorFunction_{\RuleSet}(\alpha')$ for other atoms of $\Chase{\RuleSet, \FactSet}$.
\end{itemize}

Intuitively the set $\AncestorFunction_{\RuleSet}(\alpha)$ consists of facts from $\FactSet$ 
which were used during the chase, to prove $\alpha$. Of course there might be more than one ancestor function for any given set of facts as that functions is strictly associated with a particular parent function. This freedom in taking parents, and so in picking ancestors, leads to some problems as we will soon discover.

We would be one step from proving Theorem \ref{th-binary-local} if we had:

\begin{lemma}[Crucial lemma, first attempt, false]\label{false-crucial-lemma}
There is a natural number $M$, depending on rule set $\RuleSet$ 
but not on fact set $\FactSet$, such that for every ancestor function $\AncestorFunction_{\RuleSet}$, for each constant and any detached term $t$ in $\Chase{\RuleSet, \FactSet}$
it holds that:

$$\left\vert \bigcup\nolimits_{\alpha \in \Skeleton(t)} \AncestorFunction_{\RuleSet}(\alpha) \right\vert \leq  M$$

\end{lemma}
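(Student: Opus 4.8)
The plan is to confine all the facts feeding the skeleton of $t$ to a bounded region of $\FactSet$, by combining the forest structure of Observation~\ref{forest} with the locality of $\RuleSet$. In the binary case, Theorem~\ref{th-binary-local} tells us $\RuleSet$ is local, so I would first fix a locality constant $l_\RuleSet$ as in Definition~\ref{def-local-theories}. This guarantees that every single atom $\alpha$ of $\Chase{\RuleSet,\FactSet}$ already lies in $\Chase{\RuleSet,\FactSetAux}$ for some $\FactSetAux\subseteq\FactSet$ with $|\FactSetAux|\leq l_\RuleSet$, so that the number of facts needed to justify any one atom \emph{in isolation} is capped by $l_\RuleSet$. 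I would then peel off the tree $\Skeleton(t)$ level by level: its root is a constant or a detached term, and by Observation~\ref{forest} the out-degree of the forest is bounded by the number of existential rules, so each level of the tree is only boundedly larger than the previous one.

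The heart of the argument would be an induction on the depth of the tree, aiming to show that passing from one level to the next contributes only $O(1)$ genuinely new ancestors. For a child atom $\alpha=appl(\rho,\sigma)$ created by a sensible (frontier-one) rule $\rho$, the frontier maps to the parent term $s$, so at least one atom of $\sigma(body(\rho))$ sits on $s$; using the distancing property of Exercise~\ref{ex-distancing} (a consequence of BDD) I would argue that the remaining body atoms must match within bounded Gaifman distance of $s$, and hence that their ancestor sets overlap heavily with those already accumulated for the portion of $\Skeleton(t)$ near $s$. If each extension introduced only boundedly many new facts, and these new facts eventually ran dry, then summing over depth would give the uniform bound $M$, independent of $\FactSet$.

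The step I expect to be the main obstacle is exactly this last one, and I suspect it is fatal rather than merely technical. The obstacle is the interaction of two features: the body of a sensible rule may contain non-frontier atoms that \emph{join} with parts of the chase lying far from $s$, and the statement quantifies over \emph{every} ancestor function $\AncestorFunction_{\RuleSet}$. Since a given chase atom can typically be derived in many different ways, an adversarial choice of $\ParentFunction_{\RuleSet}$ (and the $\AncestorFunction_{\RuleSet}$ it induces) is free to justify each successive join atom through an ever-fresh derivation that reaches into a previously unused fact of $\FactSet$. Along an infinite branch of $\Skeleton(t)$ this would force $\bigl|\bigcup_{\alpha\in\Skeleton(t)}\AncestorFunction_{\RuleSet}(\alpha)\bigr|$ to grow without bound, so no uniform $M$ could exist. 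In short, locality only supplies a \emph{parsimonious} ancestor function with small support per atom; it controls nothing about wasteful ones. I therefore expect the repair to be weakening ``for every ancestor function'' to ``there exists an ancestor function'', after which the level-by-level accumulation argument above has a genuine chance of closing.
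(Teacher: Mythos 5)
You were right to conclude that this statement cannot be proved: it is false, and the paper itself labels it ``first attempt, false'' and refutes it rather than proving it. Your diagnosis of the fatal point coincides with the paper's counterexample (Example~\ref{roznirodzice}): there, the rule $E(x,y), P(z) \Rightarrow R(z,y)$ has a body atom $P(z)$ that joins with nothing near the skeleton, and the semi-oblivious chase may, adversarially, certify each successive $R$-fact along the infinite branch grown from $E(a_0,a_1)$ using a previously untouched fact $P(b_i)$, so that $\bigl\vert \bigcup_{\alpha \in \Skeleton(a_1)} \AncestorFunction_{\RuleSet}(\alpha) \bigr\vert$ exceeds any prescribed $M$. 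This is exactly your ``ever-fresh derivation'' obstruction; as a refutation, your proposal and the paper agree.

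Where you and the paper part ways is the repair. You propose weakening ``for every ancestor function'' to ``there exists an ancestor function.'' The paper instead keeps the universal quantification but changes the theory: it normalizes $\RuleSet$ into $\NormRuleSet$, rewriting the existential rules so that the disconnected fragment of each body is encapsulated into a single nullary predicate $M_\phi$, and then proves the true Crucial Lemma (Lemma~\ref{lem:crucial}) for $\NormRuleSet$. The normalization defuses your obstruction because a nullary atom is derivable in one chase step (Lemma~\ref{lem:nullary-one-step}), so even an adversarial ancestor function can charge only boundedly many facts to nullary atoms, while the remaining (connected) ancestors are bounded because a connected rule body firing at depth greater than $h$ in the skeleton tree cannot reach $\FactSet$ at all. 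Your existential repair would plausibly suffice for the downstream use (Corollary~\ref{cor:crucialforruleset} only needs \emph{some} small $\FactSet' \subseteq \FactSet$ with $\Skeleton(t) \subseteq \Chase{\RuleSet, \FactSet'}$), but proving that parsimonious ancestor functions exist for $\RuleSet$ itself is essentially what the normalization manufactures, and the paper's route has the additional benefit that the statement over $\NormRuleSet$ transfers back to $\RuleSet$ via $\ChaseEX{\NormRuleSet, \FactSet} = \ChaseEX{\RuleSet, \FactSet}$ (Lemma~\ref{lem:skeleton_equivalence}).

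One caution about the constructive part of your sketch: you invoke Theorem~\ref{th-binary-local} to obtain a locality constant $l_\RuleSet$, but this lemma lives inside the proof of that very theorem, so the appeal to locality is circular; the only hypotheses available here are that $\RuleSet$ is BDD over a binary signature (hence, e.g., Exercise~\ref{ex-distancing} may be used, but Definition~\ref{def-local-theories} may not be assumed).
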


 But that Lemma  is unfortuately not true. For a counterexample see: 

\begin{example}\label{roznirodzice} 
Let $\RuleSet$ consist of two rules:
\begin{itemize}
    \item \tgdl{$E(x,y), R(z,y)$}{$\exists v\; E(y,v)$}
    \item \tgdl{$E(x,y), P(z)$}{$R(z,y)$}
\end{itemize}

Suppose  that $M$ as in the Lemma exists and that $\FactSet$ consists of  atom $E(a_0,a_1)$ and atoms $P(b_i)$ for each $1\leq i \leq M$.

Then  $\Chase{\RuleSet, \FactSet}$ will consist of a infinite number of new  facts $E(a_1,a_2)$, $E(a_2,a_3)$,$E(a_3,a_4)\ldots$. In order to build them, however, some facts about relation $R$ will need to be proven by the second rule using a number of $P$ atoms from
$\FactSet$. And it might happen (Skolem/Semi-Oblivious chase is non-deterministic in that aspect) that the proven $R$-facts will be $R(b_1,a_1)$, $R(b_2,a_2)$, \ldots $R(b_M,a_M)$, meaning that 
$\Skeleton(a_1)$ uses during its creation all the $M+1$ facts of $\FactSet$. But of course this is for irrelevant reasons: the same Chase could be built if $P(b_1)$ was used each time some $P$ was needed. 
\end{example}

\subsection{The normalization of $\RuleSet$}
 In order to circumvent the  problems highlighted by Example \ref{roznirodzice} we will
now transform the rule set $\RuleSet$
into
another rule set $\NormRuleSet$ which, apart from some other useful properties, will satisfy the equality
$(*)\;\;\; \ChaseEX{\NormRuleSet, \FactSet} = \ChaseEX{\RuleSet, \FactSet}$.

First we will  define the signature of $\NormRuleSet$. Let us take a fresh set of nullary predicates $\mathbf{M} = \set{M_\phi \;|\; \phi \text{ is a boolean CQ over } \Signature}$. Then our new signature\footnote{We briefly forget here about our promise that signatures would be finite.} $\Signature'$ is $\Signature \cup \mathbf{M}$.

Two procedures will be used during the normalization: {\em body rewriting} and {\em body separation}.

\begin{definition}[Body rewriting]
Let $\rho$ be some rule with body $\beta({\bar x}, {\bar y})$ over $\Signature$ and a head $\gamma({\bar y})$ 
which is an atom from
 $\Signature'$ possibly preceded with the existential quantifier (or two). Then by $Rew(\rho)$ we denote the  set:
$$\set{\beta'({\bar x}, {\bar y}) \Rightarrow \gamma({\bar y}) \;:\; \beta'({\bar x}, {\bar y}) \in rew_{\RuleSet}(\exists {\bar x} \; \beta({\bar x}, {\bar y}))}$$
\end{definition}
%

While body rewriting can be applied both to existential rules and datalog rules,
the second procedure will only be applied to existential rules. It separates the disconnected
fragment of body of given existential rule and ``encapsulates'' that fragment in
a single nullary predicate from $\mathbf{M}$. That is, given a rule, the procedure  returns a pair of rules. 
One being almost the original rule, but with its body changed to consist of nullary 
predicate and a connected conjunction of atoms. 
The second being a rule whose job is to prove the aforementioned nullary predicate.

\begin{definition}[Body separation]
If $\rho$ is an existential rule of the form $\beta({\bar x}, {\bar y}) \wedge \phi({\bar z}) \Rightarrow \exists \;{\bar u} \gamma({\bar y}, {\bar u})$ such that:
\begin{itemize}
    \item $({\bar x} \cup {\bar y}) \cap {\bar z} = \emptyset$,
    \item $\beta({\bar x}, {\bar y})$ is connected,
\end{itemize}
then
\begin{itemize}
    \item $ sep_{cc}(\rho) = \beta({\bar x}, {\bar y}) \wedge M_\phi \Rightarrow \exists {\bar u}\; \gamma({\bar y}, {\bar u})$
    \item $ sep_{M}(\rho) =  \phi({\bar z}) \Rightarrow M_\phi$
\end{itemize}

If the body of $\rho$ is connected, we assume that  $\phi({\bar z})$ is empty. And we have a nullary predicate $M_\emptyset\in \Sigma'$ for this occasion.
\end{definition}

The normalization algorithm is performed in  three steps:

\medskip
\noindent
\begin{center}
{\sc Normalization Algorithm}
\rule{\linewidth}{0.4pt}
\end{center}

{\sc \underline{Step One}:} $\RuleSet_{I}=\bigcup_{\rho\in \EXRuleSet } Rew(\rho)$\\
%

{\sc \underline{Step two}:} $\RuleSet_{II}=\{sep_{cc}(\rho) : \rho\in \RuleSet_{I}\}$\\

{\sc \underline{Step three}:}
$\RuleSet_{III}=\bigcup_{\rho\in \RuleSet_{I} } Rew(sep_M(\rho))$\\

{\sc \underline{Return}:} $\NormRuleSet=\RuleSet_{II}\cup \RuleSet_{III}$

%

\vspace{-3mm}
\rule{\linewidth}{0.4pt}
\smallskip

The normalization allows us to attack the source of the problem highlighted in the previous section by separating the ``disconnected ancestors" required by existential rules and encapsulating those ancestors within rules producing nullary predicates.

\begin{observation} \label{obs:detachedrulenullbody}
Let $\Rule$ be a rule that creates a detached atom in $\Chase{\NormRuleSet, \FactSet}$ 
for some fact set $\FactSet$. Then  $\Rule$ is a rule from $\ruleset_{II}$ and its body consist of a single nullary atom.
\end{observation}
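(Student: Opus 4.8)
The plan is to first characterise what it takes for a rule to \emph{create} a detached atom, and then to push this characterisation back through the two normalization procedures. A detached atom is, by definition, an atom all of whose terms are fresh Skolem terms; such an atom can only be produced by an \emph{existential} rule whose frontier is empty. Indeed, if $\Rule$ fires via a matching $\sigma$ and $\Rule$ has a nonempty frontier, then some frontier variable $y$ is sent to a term $\sigma(y)$ already present in $\Chase{\NormRuleSet,\FactSet}$, and since $y$ also occurs in the head, $\sigma(y)$ occurs in $appl(\Rule,\sigma)$, contradicting detachedness. So the first step is to record the implication: if $\Rule$ creates a detached atom then $\Rule$ is existential and has empty frontier.

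Next I would classify the rules of $\NormRuleSet=\RuleSet_{II}\cup\RuleSet_{III}$ by their heads. Every rule of $\RuleSet_{III}$ is of the form $Rew(sep_M(\rho))$, and since $sep_M(\rho)=\phi(\bar z)\Rightarrow M_\phi$ has a nullary head and body rewriting leaves the head untouched, all rules of $\RuleSet_{III}$ are datalog rules whose head is a nullary atom $M_\phi$. A datalog rule introduces no fresh terms, hence produces no detached atom. Consequently the rule $\Rule$ from the first step must lie in $\RuleSet_{II}$, i.e.\ $\Rule=sep_{cc}(\rho')$ for some $\rho'\in\RuleSet_{I}$. (Here one also uses that $\RuleSet_{II}$ consists only of existential rules, because $sep_{cc}$ is applied only to existential rules and $\RuleSet_{I}$, being body rewrites of $\EXRuleSet$, inherits the existential heads.)

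The third step is to trace the empty frontier backwards and read off the body. Body rewriting does not change the head, and each disjunct of $rew_{\RuleSet}(\exists\bar x\,\beta(\bar x,\bar y))$ has exactly $\bar y$ as its free variables, so every rule of $\RuleSet_{I}$ has the same frontier as the element of $\EXRuleSet$ it came from; likewise $sep_{cc}$ preserves the head, hence the frontier. Thus $\Rule=sep_{cc}(\rho')$ has empty frontier exactly when $\rho'$ does. Finally, invoking the body-separation convention in the empty-frontier case: when the frontier $\bar y$ is empty there is nothing for the connected component $\beta(\bar x,\bar y)$ to anchor to, so $\beta$ is empty and the whole body is swept into $\phi(\bar z)$ and encapsulated as $M_\phi$. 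Hence $sep_{cc}(\rho')$ reduces to $M_\phi\Rightarrow\exists\bar u\,\gamma(\bar u)$, whose body is the single nullary atom $M_\phi$, which is precisely the claim.

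The step I expect to demand the most care is this last bookkeeping around the frontier under the two procedures, in particular making explicit the convention (implicit in the definition of body separation) that an empty frontier forces $\beta$ to be empty so that the entire body is absorbed into the nullary predicate $M_\phi$. Everything else in the argument is a direct reading of the definitions of $Rew$, $sep_M$, and $sep_{cc}$ together with the observation that datalog rules create no new terms.
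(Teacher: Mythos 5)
Your proof is correct, and it is exactly the definitional unfolding the paper intends: the paper states this as an Observation with no written proof, and the whole substance is your chain of reasoning --- rules of $\RuleSet_{III}$ are datalog with nullary heads and so cannot create detached atoms, hence $\Rule\in\RuleSet_{II}$ and has empty frontier, and body separation applied to an empty-frontier rule sweeps the entire body into the single nullary atom $M_\phi$. The convention you flag as the delicate point (an empty frontier forces $\beta=\emptyset$ in $sep_{cc}$, so that everything is encapsulated in $M_\phi$) is indeed the reading the paper relies on, as confirmed by its later uses of this Observation (\cref{lem:skeleton-Ca}, and the detached-tree case of \cref{lem:crucial}, which requires detached atoms to have no connected ancestors at all).
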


\begin{lemma}\label{lem:skeleton_equivalence}
For any set of facts $\FactSet$ over $\Sigma$:
$$ \ChaseEX{\RuleSet, \FactSet} = \ChaseEX{\NormRuleSet, \FactSet} $$
\end{lemma}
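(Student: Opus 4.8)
The plan is to prove the equality by two inclusions, with the rewriting characterization of \cref{thm:bdd-th} as the engine. First I would reduce the statement to existential atoms. Both $\ChaseEX{\RuleSet, \FactSet}$ and $\ChaseEX{\NormRuleSet, \FactSet}$ contain $\FactSet$, and they differ from $\FactSet$ only in existential atoms over $\Signature$: indeed $\NormRuleSet$ has no rule whose head is a datalog atom over $\Signature$ (the rules of $\RuleSet_{II}$ keep the existential heads of the original existential rules, and those of $\RuleSet_{III}$ produce only the nullary flags $M_\phi$; the datalog rules $\DLRuleSet$ survive only inside the rewriting operator $rew_{\RuleSet}$). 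Hence $\Chase{\NormRuleSet, \FactSet}$ has no $\Signature$-datalog atoms at all, and it suffices to show that the two structures agree on their existential atoms. Here the Skolem naming convention is decisive: an existential atom is fixed completely by the isomorphism type of the head of its creating rule together with the tuple of frontier values, so ``the two chases produce the same existential atom'' reduces to ``the same head fires on the same frontier tuple'', and equalities are literal rather than only up to isomorphism.

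Second I would isolate the engine, the equivalence $(\star)$: for every CQ $\psi(\bar y)$ and tuple $\bar a$,
\[ \big(\exists\, \beta' \in rew_{\RuleSet}(\psi)\ \text{with}\ \ChaseEX{\RuleSet, \FactSet}\models\beta'(\bar a)\big)\iff \Chase{\RuleSet, \FactSet}\models\psi(\bar a). \]
The direction $\Leftarrow$ is immediate from \cref{thm:bdd-th} together with $\FactSet\subseteq\ChaseEX{\RuleSet, \FactSet}$. For $\Rightarrow$, a match of $\beta'$ is finitely witnessed, so it lives in a finite sub-instance $\FactSeta_0\subseteq\ChaseEX{\RuleSet, \FactSet}$; soundness of the disjunct gives $\Chase{\RuleSet, \FactSeta_0\cup\FactSet}\models\psi(\bar a)$, and since $\FactSet\subseteq\FactSeta_0\cup\FactSet\subseteq\Chase{\RuleSet, \FactSet}$, Observation \ref{posrodku} collapses this chase to $\Chase{\RuleSet, \FactSet}$. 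Thus $(\star)$ says that matching a rewritten body against the existential skeleton is the same as matching the original body against the full chase. The same statement applied to the Boolean query $\exists\bar z\,\phi$ will govern, through the rules of $\RuleSet_{III}$, exactly when the flag $M_\phi$ becomes available.

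Third comes the main argument. Fix an existential rule $\rho$ of $\RuleSet$ and a frontier tuple $\bar a$; its firing in $\Chase{\RuleSet, \FactSet}$ is the condition that $\Chase{\RuleSet, \FactSet}$ satisfies the existential closure of the body of $\rho$ with frontier fixed to $\bar a$. By $(\star)$ this is equivalent to some disjunct of $rew_{\RuleSet}$, i.e.\ some rule of $\RuleSet_{I}$, matching in the existential skeleton. Steps Two and Three then re-express a single such disjunct match: its connected, frontier-carrying part is matched directly by the body of the corresponding rule of $\RuleSet_{II}$, while its disconnected part $\phi$ is hidden behind $M_\phi$, whose truth is controlled by $\RuleSet_{III}$ and, by $(\star)$ for $\exists\bar z\,\phi$, reflects satisfaction of $\phi$ in the full chase. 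Because each side evaluates its matches in the existential skeleton of the \emph{other} theory, the argument is circular; I would break it by induction on the stage at which atoms are created, the inductive hypothesis being that the two existential skeletons agree on all atoms born before the current stage. That hypothesis licenses the use of $(\star)$ and of the $M_\phi$-analogue at the current stage, forcing the same existential atom to appear on the other side. Running this in both directions yields $\ChaseEX{\RuleSet, \FactSet}=\ChaseEX{\NormRuleSet, \FactSet}$, and Observation \ref{obs:detachedrulenullbody} keeps the detached/sensible taxonomy consistent across the two structures.

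The step I expect to be the main obstacle is the faithful treatment of disconnected bodies, precisely the phenomenon that broke the naive Crucial Lemma in Example \ref{roznirodzice}. One must check that encapsulating the disconnected fragment $\phi$ into the flag $M_\phi$ neither invents nor loses frontier matches; this is exactly why the rewriting is applied twice (to the whole existential body in Step One and to the separated Boolean part in Step Three), so that $M_\phi$ tracks ``$\phi$ holds in the full chase'' rather than ``$\phi$ holds in $\FactSet$'' or ``in the skeleton''. Aligning the stages of the two chases, which no longer proceed in lockstep since $\NormRuleSet$ compresses whole datalog derivations into single rewritten matches, is where the induction needs the most care, and it is the verification of this equivalence for $\phi$ that carries the real weight of the proof.
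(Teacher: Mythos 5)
You follow the paper's route---restrict attention to the existential skeletons, use \cref{thm:bdd-th} as the engine, and let the nullary flags $M_\phi$ carry the disconnected parts of rewritten bodies---but the one ingredient you leave unspecified, the inductive invariant, is precisely where the proposal fails. The hypothesis ``the two existential skeletons agree on all atoms born before the current stage'' is false, in both directions, so the induction never gets started. Run the rules of \cref{roznirodzice} on $\FactSet=\{E(a_0,a_1),P(b)\}$: the original chase must interleave one datalog step (producing $R(b,f^{k-1}(a_1))$) before each existential step, so the atom $E(f^{k-1}(a_1),f^{k}(a_1))$ is born at stage $2k$; in the chase of $\NormRuleSet$ the datalog reasoning has been compiled into the rewritten body $E(x,y)\wedge P(z)$ (no $R$-atom is ever produced, since no rule of $\NormRuleSet$ has an $R$-head), and the \emph{same} atom---literally the same, by the Skolem convention---is born at stage $k+1$. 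Hence already $\ChaseEXi{\NormRuleSet, \FactSet}{3}\not\subseteq\ChaseEXi{\RuleSet, \FactSet}{3}$, and the discrepancy grows without bound, so no fixed re-indexing restores lockstep. Conversely, for a theory whose existential rule itself has a disconnected body, say $E(x,y),P(z)\Rightarrow\exists v\, E(y,v)$, the original chase creates $E(a_1,f(a_1))$ at stage $1$, while $\NormRuleSet$ must first spend a step deriving $M_P$; so the stage-$1$ skeletons differ the other way. A false hypothesis cannot ``force the same existential atom to appear on the other side'', and you yourself flag stage alignment as the open point---but it is not a matter of care within your setup; the invariant has to be replaced.

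The repair is to abandon symmetry, and this is exactly how the paper organizes the proof. First one proves $\ChaseEXi{\NormRuleSet, \FactSet}{k}\subseteq\ChaseEX{\RuleSet, \FactSet}$ by induction on $k$, with the \emph{full} skeleton on the right-hand side (\cref{lem:AB}); the unbounded right-hand side is what absorbs the speed-up of $\NormRuleSet$. From this inclusion, applied for \emph{all} $k$ rather than only below a current stage, one deduces via \cref{cw7} that every derivable nullary fact is derivable in one step (\cref{lem:nullary-one-step}); only then does the converse inclusion follow, now even with a constant offset, $\ChaseEXi{\RuleSet, \FactSet}{i}\subseteq\ChaseEXi{\NormRuleSet, \FactSet}{i+2}$ (\cref{lem:skeleton-D}). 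A single simultaneous induction can be rescued, but only with the asymmetric invariant ``the stage-$n$ part of each skeleton is contained in the \emph{full} skeleton of the other'', obtaining $M_\phi$ by applying \cref{thm:bdd-th} to the disconnected part matched inside the \emph{original} chase, which yields a match in $\FactSet$ itself; either way, the asymmetry between a stage-bounded left-hand side and an unbounded right-hand side is the missing idea. Everything else in your proposal---the reduction to existential atoms, the literal equality via Skolem naming, the equivalence $(\star)$, and the double application of $rew_\ruleset$---coincides with the paper's argument.
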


\subsection{Proof of Lemma \ref{lem:skeleton_equivalence}}

This entire subsection is devoted to the proof of Lemma \ref{lem:skeleton_equivalence}, which  will follow directly from 
Lemma \ref{lem:AB} and Lemma \ref{lem:skeleton-D}. But first, as a warm-up notice that:


 \begin{exercise}\label{cw3}
Let $\FactSet$ be any instance and let  $\aaaddd{Ch(\RuleSet,\FactSet))}$. 
 Let  $\phi(\bar y)\in rew(\psi(\bar y))$ and suppose
 $Ch(\RuleSet,\FactSet)\models \phi( \bar a)$. Then  $Ch(\RuleSet,\FactSet)\models \psi( \bar a)$.
 \end{exercise}

  \noindent
 {\em Hint:} Recall that $Ch(\RuleSet,Ch(\RuleSet,\FactSet)) = Ch(\RuleSet,\FactSet)$.

 \begin{exercise}\label{cw7}
  Suppose $\phi(\bar y)\in rew(\psi(\bar y))$.
 Let $\FactSet$ be any instance and  $\aaaddd{\FactSet}$. 
 Let
 $Ch(\RuleSet,\FactSet)\models \phi( \bar a)$.  Then there exists  
 query $\phi'(\bar y)\in rew(\psi(\bar y))$   such that 
 $\FactSet\models \phi'(\bar a)$.
 \end{exercise}
 
 \noindent
 {\em Hint:} Notice that $Ch(\RuleSet,Ch(\RuleSet,\FactSet)) = Ch(\RuleSet,\FactSet)$.

\begin{exercise}\label{cw10}
Suppose $(\exists \bar t, \bar u \;\phi(\bar t, \bar y)\wedge  \gamma(\bar u))\in rew_\ruleset(\exists x\beta(\bar x, \bar y))$ for some CQs
$\phi$, $\gamma$ and $\beta$, where the tuples of variables $\bar t\cup \bar y$ and $\bar u$ are disjoint. 
And suppose $\exists \bar v\;\zeta(v) \in rew_\ruleset(\exists u\; \gamma(\bar u)) $. Then there exists a query 
$\exists \bar z\; \varrho(\bar z, \bar y)\in rew_\ruleset(\exists x \beta(\bar x, \bar y))$ which is contained in 
$(\exists \bar t, \bar v \;\phi(\bar t, \bar y)\wedge  \zeta(\bar v))$.
%
%
\end{exercise}

%


Notice  that the only existential  rules in $\NormRuleSet$ are the ones in $\RuleSet_{II}$.
Notice also that the only datalog rules in $\NormRuleSet$ are the ones in $\RuleSet_{III}$ and hence the only 
atoms which are in $\Chase{\NormRuleSet, \FactSet} $ but not in $ \ChaseEX{\NormRuleSet, \FactSet} $ are the nullary
atoms from the set $\mathbf{M}$.

By $\ChaseEXi{\RuleSet, \FactSet}{i}$ we denote the intersection of $\Step{i}{\RuleSet,\FactSet}$ and $\ChaseEX{\RuleSet, \FactSet}$.

\begin{lemma}\label{lem:AB}
For each fact set $\FactSet$ over $\Sigma$ and each  $k\in \mathbb N$:
$$ \ChaseEXi{\RuleSet_{NF}, \FactSet}{k}\subseteq \ChaseEX{\RuleSet, \FactSet}.$$
\end{lemma}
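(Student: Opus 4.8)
The plan is to prove the inclusion by induction on $k$, exploiting two structural facts about $\NormRuleSet$: the only existential rules of $\NormRuleSet$ lie in $\RuleSet_{II}$, and every atom of $\Chase{\NormRuleSet, \FactSet}$ that is not in $\ChaseEX{\NormRuleSet, \FactSet}$ is one of the nullary $M_\phi$ (so every $\Sigma$-atom occurring in $\Step{k}{\NormRuleSet, \FactSet}$ automatically lies in $\ChaseEXi{\NormRuleSet, \FactSet}{k}$). For $k=0$ we have $\ChaseEXi{\NormRuleSet, \FactSet}{0}=\FactSet\subseteq\ChaseEX{\RuleSet, \FactSet}$, so the base case is immediate. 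For the inductive step, fix $\alpha\in\ChaseEXi{\NormRuleSet, \FactSet}{k+1}$; if $\alpha\in\Step{k}{\NormRuleSet, \FactSet}$ we are done by the induction hypothesis, so assume $\alpha$ is produced at step $k+1$. Being a $\Sigma$-atom, it must come from an existential rule $\rho''\in\RuleSet_{II}$, say $\rho''=sep_{cc}(\rho')$ with $\rho'\in Rew(\rho)$ for some original existential rule $\rho=\big(\beta(\bar{x},\bar{y})\Rightarrow\exists\bar{u}\,\gamma(\bar{y},\bar{u})\big)\in\EXRuleSet$. Thus the body of $\rho''$ is $\beta'_{cc}(\bar{x}'',\bar{y})\wedge M_\phi$, where $\beta'=\beta'_{cc}\wedge\phi(\bar{z})\in rew_\RuleSet(\exists\bar{x}\,\beta)$ and $\bar{z}$ is disjoint from $\bar{x}''\cup\bar{y}$; let $\sigma\in\mappings{\rho'', \Step{k}{\NormRuleSet, \FactSet}}$ be the trigger with $\alpha=appl(\rho'',\sigma)$. (The detached case, where $\beta'_{cc}$ and $\bar{y}$ are empty and the body is a single $M_\phi$ by Observation \ref{obs:detachedrulenullbody}, is subsumed uniformly below.)

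Next I would reconstruct, inside $\Chase{\RuleSet, \FactSet}$, a full trigger for the body $\beta'$ of $\rho'$ with the frontier frozen to $\sigma(\bar{y})$. The connected part is direct: $\sigma(\beta'_{cc})\subseteq\Step{k}{\NormRuleSet, \FactSet}$ consists of $\Sigma$-atoms, hence lies in $\ChaseEXi{\NormRuleSet, \FactSet}{k}$ and, by the induction hypothesis, in $\ChaseEX{\RuleSet, \FactSet}$. For the disconnected part, note $M_\phi\in\Step{k}{\NormRuleSet, \FactSet}$ was produced by a datalog rule of $\RuleSet_{III}$, i.e. by some $\phi'(\bar{z}')\Rightarrow M_\phi$ with $\phi'\in rew_\RuleSet(\exists\bar{z}\,\phi)$; its witnessing $\Sigma$-atoms again lie in $\ChaseEXi{\NormRuleSet, \FactSet}{k}\subseteq\ChaseEX{\RuleSet, \FactSet}$, so $\Chase{\RuleSet, \FactSet}\models\exists\bar{z}'\,\phi'$, and Exercise \ref{cw3} (applied to the boolean query $\exists\bar{z}\,\phi$) upgrades this to $\Chase{\RuleSet, \FactSet}\models\exists\bar{z}\,\phi$. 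Since $\bar{z}$ is disjoint from $\bar{x}''\cup\bar{y}$, I can glue $\sigma$ (on $\beta'_{cc}$) with this homomorphism (on $\phi$) into a single homomorphism witnessing that $\beta'$ holds in $\Chase{\RuleSet, \FactSet}$ with $\bar{y}$ mapped to $\sigma(\bar{y})$. A second application of Exercise \ref{cw3}, now to the body query $\exists\bar{x}\,\beta$ of which $\beta'$ is a rewriting disjunct, yields a genuine trigger $\nu$ for the original rule $\rho$ in $\Chase{\RuleSet, \FactSet}$, with $\nu(\bar{y})=\sigma(\bar{y})$.

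Firing $\rho$ with $\nu$ produces the existential atom $appl(\rho,\nu)\in\ChaseEX{\RuleSet, \FactSet}$, and the final step is to check that this atom is literally $\alpha$. This is where the Skolem naming convention does the essential work: by Definition \ref{skolemization} the head skolemization $\sh{\cdot}$ depends only on the head atom and its frontier, which are identical for $\rho$, $\rho'$ and $\rho''$, so since $\nu$ and $\sigma$ agree on $\bar{y}$ we get $f_i^\tau(\nu(\bar{y}))=f_i^\tau(\sigma(\bar{y}))$ and hence $appl(\rho,\nu)=\nu(\sh{\rho})=\sigma(\sh{\rho''})=\alpha$. This closes the induction. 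The main obstacle is precisely this last identification: it is crucial that we work with the semi-oblivious Skolem chase so that equality of produced atoms is literal rather than up-to-isomorphism (cf. Observation \ref{posrodku}), and that the rewriting theorem (Exercise \ref{cw3}) is invoked correctly in two stages — once for the encapsulated disconnected body $\phi$ and once for the whole body $\beta$. The remaining bookkeeping of variable names across $Rew$ and $sep_{cc}$, together with checking that the glued homomorphism respects $(\bar{x}''\cup\bar{y})\cap\bar{z}=\emptyset$, is routine; alternatively, Exercise \ref{cw10} offers a more algebraic route that composes the two rewritings directly inside $rew_\RuleSet(\exists\bar{x}\,\beta)$.
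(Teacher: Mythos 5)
Your proof is correct, and its skeleton coincides with the paper's: induction on $k$; unfold the normalization to recover $\rho''\in\RuleSet_{II}$, $\rho'\in\RuleSet_{I}$ and $\rho\in\EXRuleSet$ sharing one head; apply the induction hypothesis to both parts of the body of $\rho''$; transfer back to $\RuleSet$ via the rewriting exercises; and conclude that the atom produced by firing $\rho$ is \emph{literally} $\alpha$, since $\sh{\rho}=\sh{\rho''}$ and the two triggers agree on the frontier (you make this last identification explicit, which the paper leaves implicit in the sentence ``Since $\Rule$ is a rule of $\RuleSet$ this claim will follow once we can prove~$(\diamondsuit)$''). The genuine divergence is in the composition step. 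The paper stays inside $\ChaseEX{\RuleSet, \FactSet}$: it glues the connected part with the witnesses of $\zeta\in rew_\ruleset(\exists\bar u\,\gamma(\bar u))$ (your $\phi'$), and since that conjunction is not itself a disjunct of $rew_\ruleset(\exists\bar x\,\beta)$, it must invoke \cref{cw10} to obtain a disjunct $\varrho$ contained in it, followed by a single application of \cref{cw3}. You instead pass to the full chase $\Chase{\RuleSet, \FactSet}$ and apply \cref{cw3} twice: once to upgrade $\phi'$ to the encapsulated query $\exists\bar z\,\phi$ itself, and once more, after gluing produces the genuine rewriting disjunct $\beta'=\beta'_{cc}\wedge\phi$, to reach $\exists\bar x\,\beta(\bar x,\sigma(\bar y))$; this renders \cref{cw10} unnecessary (you rightly flag it only as an alternative). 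Your route is sound because nothing in the argument requires the trigger for $\rho$ to be found inside $\ChaseEX{\RuleSet, \FactSet}$: an atom created by firing an existential rule is an existential atom even if its trigger uses datalog atoms, and the Skolem naming convention fixes the created atom's identity from the frontier alone. If anything, your version is tidier at the one point where the paper is loose --- its intermediate claim $(\diamondsuit)$ asserts satisfaction in $\ChaseEX{\RuleSet, \FactSet}$, whereas what \cref{cw3} actually delivers (and all that is needed) is satisfaction in the full chase.
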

 
\begin{proof}
We will show this by induction on $k$. The case for $k=0$ is trivial. Assume the claim is true for some $k\in \mathbb N$. We are going to show it is also true for $k+1$. 

Let $\alpha $
be an atom from $\ChaseEXi{\NormRuleSet, \FactSet}{k+1}$, which is not in $\ChaseEXi{\NormRuleSet, \FactSet}{k}$.
Then $\alpha = appl(\Rule_\alpha, \sigma_\alpha)$ for some:\\

\noindent
{\bf --} $\Rule_\alpha\in \RuleSet_{II}$, the rule which actually created $\alpha$, of the form 
     \hspace*{3mm}$\phi(\bar t, \bar y)\wedge M_\gamma \Rightarrow \exists \bar z\; \alpha_0(\bar y,\bar z)$,\\
     {\bf --} $\sigma_\alpha$ such that  $\sigma_\alpha(\phi(\bar t, \bar y) )\subseteq \ChaseEXi{\NormRuleSet, \FactSet}{k}$, \hfill $(\heartsuit)$\\
 {\bf --}   $M_\gamma$ such that $\ChaseEXi{\NormRuleSet, \FactSet}{k}\models M_\gamma$.\\


\noindent
It follows from the construction of $\NormRuleSet$ that there must exist:
%
%

%
%
\begin{enumerate}
%
%
%

    \item  A rule $\Rule'  \in \RuleSet_{I}$ of the form: 
    $$\phi(\bar t, \bar y)\wedge  \gamma(\bar u)\Rightarrow \exists \bar z\; \alpha_0(\bar y,\bar z)$$
     which, by Step II of Normalization Algorithm, led to the creation of $\Rule_\alpha$,
    such that  the tuple $\bar u$ of variables is disjoint  with $\bar y$ and with $\bar t$, and that $\phi(\bar t, \bar y)$ is a connected query.

    \item A rule $\Rule\in \EXRuleSet$ of the form:
   $$
    \beta(\bar x, \bar y)\Rightarrow \exists \bar z\; \alpha_0(\bar y,\bar z)
    $$
    
    which, by Step I of Normalization Algorithm, led to the creation of $\Rule'$,
such that: 
\begin{equation}\tag{$\clubsuit$}
(\exists \bar t,\bar u\; \phi(\bar t, \bar y)\wedge  \gamma(\bar u))\in rew_\ruleset(\exists \bar x \; \beta(\bar x, \bar y) )
\end{equation}
    
    \item  A rule $\Rule_{M_{\gamma}}$ in $\RuleSet_{III}$ of the form $\zeta(\bar u)\Rightarrow M_\gamma$ such that $\exists {{\bar u}}\; \zeta(\bar u)\in rew_\ruleset(\exists {{\bar u}}\; \gamma(\bar u))$ and  a substitution  $\sigma_\zeta$ from the variables in $\bar u$ to  $dom(\ChaseEXi{\NormRuleSet, \FactSet}{k - 1})$ such that $\sigma_\zeta(\zeta(\bar u))\subseteq \ChaseEXi{\NormRuleSet, \FactSet}{k - 1}$. This is because, for $\Rule_\alpha$ to be applicable in  $\ChaseEXi{\NormRuleSet, \FactSet}{k}$ there must be $\ChaseEXi{\NormRuleSet, \FactSet}{k} \models M_\gamma$, and since $\FactSet\not\models M_\gamma$, a rule able to produce $M_\gamma$ must have earlier been applied.
\end{enumerate} 
 
In order to complete the induction step we need to show that $\alpha\in \ChaseEX{\RuleSet, \FactSet}$. Since $\Rule$ is a rule of $\RuleSet$  this claim will follow once we can prove that:
 \begin{equation} \tag{$\diamondsuit$} \label{eq:soundness1}
  \ChaseEX{\RuleSet, \FactSet}\models \exists \bar x \; \beta(\bar x, \sigma_\alpha(\bar y))   
 \end{equation}

And (\ref{eq:soundness1}) will follow (using~\cref{cw3}) once we can show that for some query $\varrho(\bar y) \in rew_\ruleset(\exists \bar x\; \beta(\bar x, \bar y))$ there is:
\begin{equation} \tag{$\spadesuit$} \label{eq:soundness2}
\ChaseEX{\RuleSet, \FactSet}\models \varrho(\sigma_\alpha(\bar y))  
\end{equation}
Recall that from $(\heartsuit)$  we know that:
$$\ChaseEXi{\NormRuleSet, \FactSet}{k}\models \exists \bar t \; \phi(\bar t, \sigma_\alpha(\bar y)).$$
By hypothesis this implies that:
$$\ChaseEX{\RuleSet, \FactSet}\models \exists \bar t \; \phi(\bar t, \sigma_\alpha(\bar y)).$$ 
The claim  $(\spadesuit)$
would be now proven, using $(\clubsuit)$, if we could now show that also:
$$\ChaseEX{\RuleSet, \FactSet}\models \exists \bar u\; \gamma(\bar u),$$ 

But this may not be the case. All we know  is that: 
$$\ChaseEXi{\NormRuleSet, \FactSet}{k}\models  \exists \bar{u} \;\zeta(\bar{u})$$
and  therefore (using the hypothesis again) that:  $$\ChaseEX{\RuleSet, \FactSet}\models  \exists \bar{u} \;\zeta(\bar{u}).$$
So we get that $\ChaseEX{\RuleSet, \FactSet}\models  \exists \bar{s}, \bar t \; \phi(\bar t, \sigma_\alpha(\bar y))\wedge \zeta(\bar{s})  $.
Now use \cref{cw10} to get (\ref{eq:soundness2}).
\end{proof}

The proof of the following lemma seems overly complicated. Why doesn't it  just follow
 from \cref{cw7}? 
This is because, while we assume that $\ruleset$ is BDD, we never proved that  $\NormRuleSet$ is BDD too. 

\begin{lemma}\label{lem:nullary-one-step}\label{lem:skeleton-C}
Suppose  $\Chase{\NormRuleSet, \FactSet}\models M_\phi$ for some nullary fact $M_\phi$.
Then  $Ch_1(\NormRuleSet, \FactSet)\models M_\phi$.
\end{lemma}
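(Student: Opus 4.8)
The plan is to trace the nullary fact $M_\phi$ back to the $\RuleSet_{III}$-rule that produced it, transport the witness of that rule's body into $\Chase{\RuleSet, \FactSet}$ through \cref{lem:AB}, carry out the rewriting argument in $\RuleSet$ (which, unlike $\NormRuleSet$, is known to be BDD), and finally fire a matching rule in a single step back in $\NormRuleSet$.

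First I would note that $M_\phi\notin\FactSet$ (as $\FactSet$ is over $\Signature$) and that, by Step three of the Normalization Algorithm, the only rules of $\NormRuleSet$ whose head is $M_\phi$ are those of $\RuleSet_{III}$ of the form $\chi(\bar u)\Rightarrow M_\phi$ with $\exists\bar u\,\chi(\bar u)\in rew_\ruleset(\exists\bar z\,\phi(\bar z))$, where $\phi$ is the boolean CQ that indexes $M_\phi$. Hence $\Chase{\NormRuleSet,\FactSet}\models M_\phi$ forces $\Chase{\NormRuleSet,\FactSet}\models\exists\bar u\,\chi(\bar u)$ for one such $\chi$. Since $\chi$ is a query over $\Signature$, it is witnessed entirely inside $\ChaseEX{\NormRuleSet,\FactSet}=\bigcup_k\ChaseEXi{\NormRuleSet,\FactSet}{k}$, which by \cref{lem:AB} is contained in $\ChaseEX{\RuleSet,\FactSet}\subseteq\Chase{\RuleSet,\FactSet}$. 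So $\Chase{\RuleSet,\FactSet}\models\exists\bar u\,\chi(\bar u)$, and since $\chi$ is a disjunct of $rew_\ruleset(\exists\bar z\,\phi(\bar z))$, soundness of rewriting (\cref{cw3}) yields $\Chase{\RuleSet,\FactSet}\models\exists\bar z\,\phi(\bar z)$.

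The last step is where BDD-ness of $\RuleSet$ is essential: applying \cref{thm:bdd-th} (equivalently \cref{cw7}) to the boolean query $\exists\bar z\,\phi(\bar z)$, some disjunct $\chi'(\bar u)\in rew_\ruleset(\exists\bar z\,\phi(\bar z))$ already holds in $\FactSet$, i.e. $\FactSet\models\exists\bar u\,\chi'(\bar u)$. But $\chi'(\bar u)\Rightarrow M_\phi$ is itself one of the $\RuleSet_{III}$-rules producing $M_\phi$, so it fires already on $\FactSet$, giving $M_\phi\in Ch_1(\NormRuleSet,\FactSet)$.

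The main obstacle is exactly the subtlety flagged before the lemma: one may not run a rewriting argument directly inside $\NormRuleSet$, since $\NormRuleSet$ is not known to be BDD. All the content is therefore the round-trip --- \cref{lem:AB} pushes the $\Signature$-witness of $\chi$ into $\Chase{\RuleSet,\FactSet}$, where rewriting reasoning is legitimate, and the rewriting of $\exists\bar z\,\phi(\bar z)$ is pulled back as a one-step $\RuleSet_{III}$-rule. The one point to verify carefully is that the returned disjunct $\chi'$ is genuinely the body of an existing rule of $\RuleSet_{III}$ with head exactly $M_\phi$; this holds because $\RuleSet_{III}$ contains $\chi''\Rightarrow M_\phi$ for every $\chi''\in rew_\ruleset(\exists\bar z\,\phi(\bar z))$, and $M_\phi$ uniquely determines $\phi$ (and hence this rewriting set), independently of which original rule's disconnected fragment generated it.
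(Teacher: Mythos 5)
Your proof is correct and follows essentially the same route as the paper's: identify the body of the $\RuleSet_{III}$-rule that produced $M_\phi$ as a disjunct of $rew_\ruleset(\phi)$, observe its witness lies in $\ChaseEX{\NormRuleSet, \FactSet}$, transfer it to $\Chase{\RuleSet,\FactSet}$ via \cref{lem:AB}, and use the rewriting property of the BDD theory $\RuleSet$ to obtain a disjunct already satisfied by $\FactSet$, which fires a $\RuleSet_{III}$-rule in one step. The only cosmetic difference is that you unfold \cref{cw7} into \cref{cw3} plus the rewriting characterization of \cref{thm:bdd-th}, whereas the paper invokes \cref{cw7} directly.
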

\begin{proof}
Suppose  $\Chase{\NormRuleSet, \FactSet}\models M_\phi$ 
 and let  $\beta({\bar x})$ be the body of the rule from $\ruleset_{III}$ that created $M_\phi$.

Recall that $\beta({\bar x})$ does not contain nullary predicates and 
that $\NormRuleSet$ consists only of existential rules (from $\ruleset_{II}$) and of datalog rules that prove nullary facts 
(from $\ruleset_{III}$). This means that $\ChaseEX{\NormRuleSet, \FactSet} \models \exists{{\bar x}} \; \beta({\bar x})$.

Now we can use \cref{lem:AB} and  get: $\ChaseEX{\RuleSet, \FactSet} \models \exists{{\bar x}}\;\beta({\bar x})$.

Now recall that $\exists{{\bar x}}\;\beta({\bar x})\in rew_\ruleset(\phi)$. From \cref{cw7} we get that there exists another boolean query $\beta_0$ 
such that $\beta_0\in  rew_\ruleset(\phi)$ and that $\FactSet \models \beta_0$. 
And from this we can immediately conclude that $\Step{1}{\NormRuleSet, \FactSet} \models M_\phi$.
\end{proof}

Combining \cref{lem:skeleton-C} and \cref{obs:detachedrulenullbody} we get:
\begin{corollary}\label{lem:skeleton-Ca}
If  $\Chase{\NormRuleSet, \FactSet}\models \alpha$ for some detached atom $\alpha$  then 
$Ch_2(\NormRuleSet, \FactSet)\models \alpha$.

\end{corollary}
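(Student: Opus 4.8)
The plan is simply to chain the two immediately preceding results. First I would fix a rule application that produces $\alpha$, say $\alpha = appl(\Rule, \sigma)$ for some $\Rule \in \NormRuleSet$ and $\sigma \in \mathcal{H}om(\Rule, \Chase{\NormRuleSet, \FactSet})$. Since $\alpha$ is detached, \cref{obs:detachedrulenullbody} applies and tells me that $\Rule$ lies in $\ruleset_{II}$ and that its body is a single nullary atom, say $M_\phi$.

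Because $\sigma \in \mathcal{H}om(\Rule, \Chase{\NormRuleSet, \FactSet})$, the body of $\Rule$ is satisfied in the chase, i.e.\ $\Chase{\NormRuleSet, \FactSet}\models M_\phi$. Here I would invoke \cref{lem:nullary-one-step}, which yields $Ch_1(\NormRuleSet, \FactSet)\models M_\phi$. Now $M_\phi$ is a nullary atom already present at stage one, so the empty matching $\sigma$ is a homomorphism of the (nullary) body of $\Rule$ into $Ch_1(\NormRuleSet, \FactSet)$; one further chase step therefore adds $appl(\Rule, \sigma) = \alpha$, giving $Ch_2(\NormRuleSet, \FactSet)\models \alpha$.

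I expect no genuine obstacle here, since all the difficulty is already absorbed into \cref{lem:nullary-one-step} (which relies on the BDD-ness of $\ruleset$ and on \cref{lem:AB}). The only subtlety worth spelling out is that a detached rule has empty frontier, so under the Skolem naming convention its head is skolemized with constant function symbols and the produced atom depends on $\Rule$ alone, not on how $M_\phi$ is witnessed; this is exactly what makes the re-firing at stage two produce the \emph{very same} atom $\alpha$ (recall that, thanks to \cref{posrodku}, equalities between chase structures are meant literally). The corollary is thus the bookkeeping that upgrades the ``one step to derive the nullary trigger'' bound of \cref{lem:nullary-one-step} into a ``two steps to derive the detached atom'' bound.
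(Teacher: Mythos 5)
Your proof is correct and follows exactly the route the paper intends: the corollary is stated there as an immediate combination of \cref{lem:nullary-one-step} and \cref{obs:detachedrulenullbody}, which is precisely your chain (detached atom comes from a rule in $\RuleSet_{II}$ with a single nullary body atom, that atom is derivable by stage one, and one further application yields $\alpha$ at stage two). Your remark about the empty frontier and the Skolem naming convention guaranteeing that re-firing produces literally the same atom $\alpha$ is a detail the paper leaves implicit, and it is a worthwhile one to spell out.
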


Last lemma we need for the proof of Lemma \ref{lem:skeleton_equivalence} is:

\begin{lemma}\label{lem:skeleton-D}
$\ChaseEXi{\RuleSet, \FactSet}{i} \subseteq \ChaseEXi{\NormRuleSet, \FactSet}{i + 2}$
\end{lemma}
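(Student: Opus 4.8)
The plan is to prove the inclusion by induction on $i$, the key difficulty being that the body of the existential rule creating a given atom in the $\RuleSet$-chase may use \emph{datalog atoms}, which are absent from the skeleton and hence cannot be reproduced by $\NormRuleSet$ (whose only $\Sigma$-atoms are skeleton atoms). The base case $i=0$ is immediate, since $\ChaseEXi{\RuleSet, \FactSet}{0}=\FactSet=\ChaseEXi{\NormRuleSet, \FactSet}{0}\subseteq \ChaseEXi{\NormRuleSet, \FactSet}{2}$.

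For the inductive step fix $\alpha\in\ChaseEXi{\RuleSet, \FactSet}{i+1}$; if $\alpha\in\ChaseEXi{\RuleSet, \FactSet}{i}$ we are done by the hypothesis, so assume $\alpha$ is a (necessarily existential) atom freshly created at step $i+1$ by some rule $\Rule\in\EXRuleSet$ of the form $\beta(\bar x,\bar y)\Rightarrow\exists\bar z\,\alpha_0(\bar y,\bar z)$, via a homomorphism $\sigma$ with $\sigma(\beta(\bar x,\bar y))\subseteq\Step{i}{\RuleSet, \FactSet}$. The crucial device is to \textbf{replay the rewriting from the skeleton rather than from $\FactSet$}. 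Set $\FactSeta=\ChaseEXi{\RuleSet, \FactSet}{i}$; since $\FactSet\subseteq\FactSeta\subseteq\Chase{\RuleSet, \FactSet}$, Observation~\ref{posrodku} gives $\Chase{\RuleSet, \FactSeta}=\Chase{\RuleSet, \FactSet}$. Because datalog rules create no new terms, every non-constant term is born in a skeleton atom (Observation~\ref{lem:termparentatom}); in particular each frontier value $\sigma(\bar y)$ already lies in $dom(\FactSeta)$. Hence $\Chase{\RuleSet, \FactSeta}\models\exists\bar x\,\beta(\bar x,\sigma(\bar y))$, and Theorem~\ref{thm:bdd-th} applied to the instance $\FactSeta$ yields a disjunct $\beta'\in rew_{\RuleSet}(\exists\bar x\,\beta(\bar x,\bar y))$ with $\FactSeta\models\exists\bar x'\,\beta'(\bar x',\sigma(\bar y))$. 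The point is that $\FactSeta$ contains \emph{only} skeleton atoms, so this match uses no datalog atoms.

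For a sensible $\Rule$ (whose frontier is a single variable $y$ in the binary, frontier-one setting) let $\sigma'$ witness $\FactSeta\models\exists\bar x'\,\beta'(\bar x',\sigma(y))$, so $\sigma'(y)=\sigma(y)$. Step~II of the Normalization Algorithm turns the $\RuleSet_{I}$-rule $\beta'\Rightarrow\exists\bar z\,\alpha_0(y,\bar z)$ into the $\RuleSet_{II}$-rule $\phi(\bar t,y)\wedge M_\gamma\Rightarrow\exists\bar z\,\alpha_0(y,\bar z)$, where $\beta'$ splits as $\phi\wedge\gamma$ into its connected component $\phi$ containing $y$ and the disconnected remainder $\gamma$. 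Now $\sigma'(\phi(\bar t,y))\subseteq\FactSeta=\ChaseEXi{\RuleSet, \FactSet}{i}$, so by the induction hypothesis $\sigma'(\phi(\bar t,y))\subseteq\ChaseEXi{\NormRuleSet, \FactSet}{i+2}$; and since $\Chase{\RuleSet, \FactSet}\models\exists\bar u\,\gamma(\bar u)$ the nullary fact $M_\gamma$ is derivable and, by \cref{lem:skeleton-C}, already present in $\Step{1}{\NormRuleSet, \FactSet}$. Thus the $\RuleSet_{II}$-rule is applicable in $\Step{i+2}{\NormRuleSet, \FactSet}$ under $\sigma'$, and since its head $\alpha_0(y,\bar z)$ is unchanged by normalization while Skolemization depends only on the head, the atom produced is $\sigma(\sh{\Rule})=\alpha$; hence $\alpha\in\Step{i+3}{\NormRuleSet, \FactSet}$, and being existential it lies in $\ChaseEXi{\NormRuleSet, \FactSet}{i+3}$. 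The detached case is easier: by Observation~\ref{obs:detachedrulenullbody} the atom is produced in $\NormRuleSet$ by a rule $M_\gamma\Rightarrow\exists\bar u\,\alpha_0(\bar u)$ whose guard $M_\gamma$ is derivable (its $\RuleSet_{III}$-body is a rewriting of the detached body, matched in $\FactSet$ via \cref{cw7}), so $M_\gamma\in\Step{1}{\NormRuleSet, \FactSet}$ and $\alpha\in\Step{2}{\NormRuleSet, \FactSet}\subseteq\Step{i+3}{\NormRuleSet, \FactSet}$ (equivalently, invoke Corollary~\ref{lem:skeleton-Ca}).

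The main obstacle, and the reason the naive attempt fails, is precisely the datalog atoms in $\sigma(\beta)$: they witness the original firing but have no counterpart in the datalog-free normalized skeleton. The whole argument hinges on the observation that re-reading the rewriting with the already-built skeleton $\ChaseEXi{\RuleSet, \FactSet}{i}$ as the raw instance (legitimate by Observation~\ref{posrodku}) forces every matched atom to be a skeleton atom, after which the induction hypothesis applies verbatim; the additive constant $2$ then accounts only for the one-step delay in producing the encapsulating nullary predicates (\cref{lem:skeleton-C}) and, in the detached case, the extra step needed to fire the rule they guard.
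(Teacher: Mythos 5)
Your proof is correct and follows essentially the same route as the paper's: induction on $i$, with the key step of treating the already-built skeleton $\ChaseEXi{\RuleSet, \FactSet}{i}$ as an instance (via \cref{posrodku}), so that \cref{thm:bdd-th} yields a rewriting disjunct of the rule body matched entirely by skeleton atoms, whose connected part transfers into $\ChaseEXi{\NormRuleSet, \FactSet}{i+2}$ by the induction hypothesis and whose disconnected remainder is absorbed into a nullary guard $M_\gamma$ available by step one. The only differences are cosmetic refinements the paper leaves implicit, namely your separate treatment of detached rules and your explicit remark that Skolemization depends only on the head and the frontier values, so the normalized rule reproduces literally the same atom $\alpha$.
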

\begin{proof}
The base of induction $\FactSet \subseteq \ChaseEXi{\NormRuleSet, \FactSet}{2}$ is clearly true. And so let us move to the induction step.

Let $\alpha$ be an atom produced by an existential rule during the $i+1$-th step of $\Chase{\RuleSet, \FactSet}$. Let rule $\rho \in \RuleSet$ and mapping $\sigma$ be such that $\alpha = appl(\rho, \sigma)$ and let $\gamma({\bar x}, {\bar y})$ be a body of $\rho$ where ${\bar y}$ are the frontier variables.

While we know that $\Step{i}{\RuleSet, \FactSet} \models \sigma(\gamma({\bar x}, {\bar y}))$ we are not sure that $\ChaseEXi{\RuleSet, \FactSet}{i} \models \sigma(\gamma({\bar x}, {\bar y}))$: some of the atoms in $\sigma(\gamma({\bar x}, {\bar y}))$ could be produced by datalog rules of $\ruleset$.

But $Ch(\RuleSet, \FactSet)=Ch(\RuleSet,\ChaseEXi{\RuleSet, \FactSet}{i})$, so if $\Step{i}{\RuleSet, \FactSet} \models \sigma(\gamma({\bar x}, {\bar y}))$ then we can be sure that there exists a query $\exists \bar z \beta(\bar z,\bar y)\in rew_\ruleset(\exists \bar x \gamma(\bar x,\bar y))$ such that $\ChaseEXi{\RuleSet, \FactSet}{i} \models \exists \bar z \beta(\bar z,\bar \sigma(y))$ or,
in other words,  $\ChaseEXi{\RuleSet, \FactSet}{i} \models  \beta(\sigma_\beta(\bar z),\bar \sigma(\bar y))$ for some
substitution $\sigma_\beta$.

 Observe that from induction hypothesis we know that $\ChaseEXi{\NormRuleSet, \FactSet}{i + 2} \models 
 \beta(\sigma_\beta(\bar z),\bar \sigma(\bar y))$.

 Let now $\rho'$ be the rule from  $\NormRuleSet$ (or, to be more precise, from $\ruleset_{II}$), such that 
 $head(\rho')=head(\rho)$ and that the body of $\rho'$ is of the form 
 $\beta_0(\bar u,\bar y)\wedge M_\phi$, where   $ \beta(\bar z,\bar y) = \beta_0(\bar u,\bar y)\wedge \phi(\bar v)$ for some 
 $\bar v$ disjoint from $\bar u\cup \bar y$. It follows from the construction of $\NormRuleSet$ that such $\rho'$ exits.

Let us consider the second case. Note that $\ChaseEXi{\NormRuleSet, \FactSet}{i + 2} \models \sigma_\beta(\beta'({\bar x}, {\bar y}))$ and so all we need to show in order to complete our induction step  is that $\Step{i + 2}{\NormRuleSet, \FactSet} \models M_\phi$.

Clearly $\ChaseEXi{\NormRuleSet, \FactSet}{i  + 2} \models \exists {{\bar z}}\phi({\bar z})$ as we know that $\phi \subseteq \beta$. From this we can conclude that $\FactSet \models rew_\ruleset(\phi)$ and thus $M_\phi \in \ChaseEXi{\NormRuleSet, \FactSet}{2}$. 

This ends the proof of Lemma \ref{lem:skeleton-D} and of Lemma \ref{lem:skeleton_equivalence}.
\end{proof}

Note that this implies that sensible and non-nullary atoms of $\Chase{\NormRuleSet, \FactSet}$ form a set of trees having a tree $\Skeleton(t)$ for each term $t$ that is a constant of $\FactSet$ or detached term of $\ChaseEX{\NormRuleSet, \FactSet}$. And that this set of trees is exactly this same set of trees as in the case of rule set $\RuleSet$. Also, note that having \cref{lem:skeleton_equivalence} we get a very important: 

\begin{corollary} \label{lem:dl_over_skeleton}
For any set of facts $\FactSet$:
$$ \Chase{\DLRuleSet, \ChaseEX{\NormRuleSet, \FactSet} \cup \FactSet} = \Chase{\RuleSet, \FactSet} $$\\
\end{corollary}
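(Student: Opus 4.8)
The plan is to reduce the statement to \cref{lem:skeleton_equivalence} and then settle the remaining identity by a straightforward two-way inclusion. First I would note that $\FactSet\subseteq\ChaseEX{\NormRuleSet,\FactSet}$ holds by the very definition of $\ChaseEX{\cdot}$ (which always includes the atoms of the starting instance), so the $\cup\,\FactSet$ in the statement is redundant. Combining this with \cref{lem:skeleton_equivalence}, which gives $\ChaseEX{\NormRuleSet,\FactSet}=\ChaseEX{\RuleSet,\FactSet}$, we obtain the genuine set equality $\ChaseEX{\NormRuleSet,\FactSet}\cup\FactSet=\ChaseEX{\RuleSet,\FactSet}$ (a literal equality, not merely up to isomorphism, thanks to the Skolem naming convention and Observation~\ref{posrodku}). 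Hence it suffices to prove $\Chase{\DLRuleSet,\ChaseEX{\RuleSet,\FactSet}}=\Chase{\RuleSet,\FactSet}$.

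For the inclusion $\subseteq$ I would invoke minimality of the datalog chase. Since the datalog rules create no new terms, $\Chase{\DLRuleSet,\ChaseEX{\RuleSet,\FactSet}}$ is the least set containing $\ChaseEX{\RuleSet,\FactSet}$ and closed under $\DLRuleSet$. But $\Chase{\RuleSet,\FactSet}$ already contains $\ChaseEX{\RuleSet,\FactSet}$ and is closed under all of $\RuleSet$, in particular under $\DLRuleSet$; by minimality we get $\Chase{\DLRuleSet,\ChaseEX{\RuleSet,\FactSet}}\subseteq\Chase{\RuleSet,\FactSet}$.

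For the inclusion $\supseteq$ I would prove by induction on $i$ that $\Step{i}{\RuleSet,\FactSet}\subseteq\Chase{\DLRuleSet,\ChaseEX{\RuleSet,\FactSet}}$. The base case is immediate as $\FactSet\subseteq\ChaseEX{\RuleSet,\FactSet}$. For the step, take an atom $\alpha=appl(\rho,\sigma)$ produced at stage $i+1$. If $\rho\in\EXRuleSet$, then $\alpha$ is by definition an existential atom of $\Chase{\RuleSet,\FactSet}$, hence $\alpha\in\ChaseEX{\RuleSet,\FactSet}$ directly. If $\rho\in\DLRuleSet$, then $\sigma(body(\rho))\subseteq\Step{i}{\RuleSet,\FactSet}$, which by the induction hypothesis lies in $\Chase{\DLRuleSet,\ChaseEX{\RuleSet,\FactSet}}$; as the latter is closed under $\DLRuleSet$, it also contains $\alpha$. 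Taking the union over all $i$ finishes the inclusion.

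The key conceptual point, and the only place where something could go wrong, is the existential case of this induction. One might worry that existential rules may consume datalog atoms in their bodies, so that the existential skeleton cannot be generated \emph{before} the datalog atoms it depends on; this is precisely the obstacle that blocks a naive ``run all existential rules first, then all datalog rules'' argument. What rescues the induction is that $\ChaseEX{\RuleSet,\FactSet}$ is defined as the set of \emph{all} existential atoms of the \emph{complete} chase $\Chase{\RuleSet,\FactSet}$ (together with $\FactSet$), rather than as the skeleton of some finite prefix. Consequently every existential atom, however late it is created and on whatever datalog atoms its derivation relied, is already present in the instance $\ChaseEX{\RuleSet,\FactSet}$ over which we close under $\DLRuleSet$, so the original interleaving of existential and datalog steps never needs to be reconstructed.
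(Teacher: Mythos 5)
Your proof is correct and takes essentially the same route as the paper: the paper obtains this corollary directly from \cref{lem:skeleton_equivalence}, leaving the stratification identity $\Chase{\DLRuleSet, \ChaseEX{\RuleSet, \FactSet}} = \Chase{\RuleSet, \FactSet}$ implicit, and your minimality-plus-induction argument (together with the observation that $\FactSet \subseteq \ChaseEX{\NormRuleSet, \FactSet}$ makes the union redundant) simply makes that omitted step explicit. In particular, your closing remark correctly identifies why the argument works at all: $\ChaseEX{\RuleSet, \FactSet}$ collects the existential atoms of the \emph{complete} chase, so the interleaving of existential and datalog derivations never needs to be replayed.
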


\subsection{The Crucial Lemma}
In this section we state and prove the Crucial Lemma. Recall, that we want to prove, 
that for any term $t$ being constant of $\FactSet$ or detached term of $\ChaseEX{\NormRuleSet, \FactSet}$ the tree $\Skeleton(t)$ rooted in $t$ and consisting of sensible atoms of $\ChaseEX{\NormRuleSet, \FactSet}$ requires only a small subset of $\FactSet$ to be built by the chase $\Chase{\NormRuleSet, \FactSet}$.

First let us distinguish, among the parents and ancestors of some atom in
$\Chase{\NormRuleSet, \FactSet}$, its {\em connected parents} and {\em connected ancestors}:
given some parent function $\ParentFunction_{\NormRuleSet}$ for $\Chase{\NormRuleSet, \FactSet}$ and an atom $\alpha$ of $\Chase{\NormRuleSet, \FactSet}$ we define the set connected parents of $\cparent(\alpha)$ as the set of 
all the non-nullary atoms of
$\ParentFunction_{\NormRuleSet}(\alpha)$. Then for atoms of $\Chase{\NormRuleSet, \FactSet}$ we define their respective sets of connected ancestors as follows:
\begin{itemize}
    \item $\cancestor(\alpha) = \set{\alpha}$ for an atom $\alpha \in \FactSet$,
    \item $\cancestor(\alpha) = \bigcup_{\alpha' \in \cparent(\alpha)} \cancestor(\alpha')$.
\end{itemize}

Before going onward, let us define a number of constants, depending on $\NormRuleSet$. \begin{itemize}
\item Let $k$ be the number of nullary predicates in $\NormRuleSet$.
\item Let $h$ be the maximal number of atoms in the body of a rule from $\NormRuleSet$.
\item Let $n$ be the number of rules in $\NormRuleSet$.
\item Let $N$ be the number of elements of a full $n$-ary tree of depth $h$.
\item Let $M = N h + k h$
\end{itemize}

Now having all prepared we are going to prove the Crucial Lemma:

\begin{lemma}[crucial]\label{lem:crucial}
For every set of facts $\FactSet$, every term $t$ that is constant of $\FactSet$ or detached term of $\ChaseEX{\NormRuleSet, \FactSet}$ and every ancestor function $\AncestorFunction_{\NormRuleSet}$:
$$\left\vert \bigcup\nolimits_{\alpha \in \Skeleton(t)} \AncestorFunction_{\NormRuleSet} (\alpha) \right\vert \leq  M$$
\end{lemma}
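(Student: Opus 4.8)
The plan is to count separately the facts of $\FactSet$ that the ancestors of $\Skeleton(t)$ reach through \emph{connected} (non-nullary) parents and those reached only after passing through a nullary atom, and to show the two counts are at most $Nh$ and $kh$. First I would record two consequences of the normalization that make this split meaningful. By Lemma \ref{lem:skeleton_equivalence} the only atoms of $\Chase{\NormRuleSet,\FactSet}$ outside $\ChaseEX{\NormRuleSet,\FactSet}$ are the nullary atoms from $\mathbf M$, of which there are at most $k$, and every $\Sigma$-atom is either a fact of $\FactSet$ or a sensible/detached atom; so by Observation \ref{forest} the $\Sigma$-atoms sit in a forest of out-degree at most $n$ whose roots are the constants and the detached terms. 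In particular, for any skeleton atom $\alpha$ the set $\cparent(\alpha)$ consists of $\Sigma$-atoms, and a fact of $\FactSet$ can enter the picture only as a connected parent, i.e.\ as an element of some $\sigma(\beta)$ for a rule $\beta({\bar x},{\bar y})\wedge M_\phi\Rightarrow\dots$ of $\RuleSet_{II}$.

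For the connected count the decisive point is that facts of $\FactSet$ mention only constants, whereas the connected body $\beta$ has at most $h$ atoms and is matched as a connected set around the frontier node $u$ of $\alpha$. Hence a fact of $\FactSet$ can be a connected parent of $\alpha$ only when $u$ lies within Gaifman distance $h$ of a constant; since the roots reachable from $u$ inside the forest are exactly the constants of $t$'s component (detached trees being disconnected from all constants), this forces $u$ to sit in the top $h$ levels of $\Skeleton(t)$. The frontier nodes in these top $h$ levels number at most $N$, because the out-degree is at most $n$, and each triggers rule firings whose connected bodies contribute at most $h$ facts of $\FactSet$ apiece. I would then conclude that at most $Nh$ distinct facts of $\FactSet$ ever occur as connected parents, and since every fact reached through the connected-parent relation is reached as such a connected parent at some step of the trace, the connected contribution to $\bigcup_{\alpha\in\Skeleton(t)}\AncestorFunction_{\NormRuleSet}(\alpha)$ is at most $Nh$. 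When $t$ is a detached term this contribution is in fact empty, its tree having no constant at finite distance.

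The nullary count is handled using Lemma \ref{lem:skeleton-C}: at most $k$ distinct nullary atoms $M_\phi$ occur in the whole chase, and each is produced in a single step, so its body can be matched directly inside $\FactSet$ by at most $h$ facts; thus the nullary atoms jointly introduce at most $kh$ further facts. Adding the two bounds gives $Nh+kh=M$, which is exactly what the Crucial Lemma asks for, and the case distinction on whether $t$ is a constant or a detached term is absorbed by the observations above.

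The main obstacle is the confinement argument of the second paragraph, turning the slogan ``$\FactSet$-facts can be touched only near the top of the tree'' into a rigorous statement; this is where the full force of the normalization is used, since it is precisely the separation of disconnected body parts into the nullary predicates that prevents a connected body matched at a deep frontier from secretly reaching back to a far-away fact of $\FactSet$. A secondary point needing care is that the Lemma is claimed for \emph{every} ancestor function: for the nullary contribution one must ensure the $kh$ bound survives an arbitrary choice of parent function, for which I would lean on the one-step derivability of Lemma \ref{lem:skeleton-C} to replace any derivation of $M_\phi$ by a direct one from $\FactSet$.
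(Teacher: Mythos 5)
Your proposal follows essentially the same route as the paper's own proof: the same decomposition of the ancestors of $\Skeleton(t)$ into connected ones (bounded by $Nh$ via the observation that an atom created at depth greater than $h$ cannot have connected parents in $\FactSet$, with the detached-tree case contributing none at all) and those arriving through nullary atoms (bounded by $kh$ via the one-step derivability of Lemma \ref{lem:skeleton-C}). The only differences are cosmetic — the paper counts per atom of $\Skeleton(t)$ rather than per frontier node, and your explicit worry about arbitrary ancestor functions for the nullary part is resolved by exactly the same appeal to one-step derivability that the paper makes implicitly.
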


Notice that we now count ancestors with respect to $\NormRuleSet$ instead of 
ancestors with respect to $\RuleSet$. This is, in fact, the only difference between this lemma and 
the (false) Lemma \ref{false-crucial-lemma}.

\begin{proof}
The atoms of $\Skeleton(t)$ are all produced by the  rules of $\RuleSet_{II}$. Recall that bodies of those  rules consist of one connected $CQ$ and  one nullary atom. The number of atoms that are ancestors of nullary atoms can easily be bound by $k h$ using \cref{lem:nullary-one-step}. 

What is left to be bound is the number of connected ancestors of $\Skeleton(t)$.
Recall that for $t\in dom(\factset)\cup det(\Chase{\NormRuleSet, \FactSet})$ the set of facts
$\Skeleton(t)$ is a tree with $t$ as its root. This gives us a natural notion of depth of atoms in $\Skeleton(t)$, with atoms containing $t$ being at depth one. 

First we will consider the easier case when term $t\in det(\Chase{\RuleSet, \FactSet})$. 
Then $S(t)$ is a detached tree, not connected, by atoms of $\Chase{\NormRuleSet, \FactSet})$ to $\factset$. 
And  the connected parents, and connected ancestors are defined in such a way that every atom is in one connected component of 
$\Chase{\NormRuleSet, \FactSet})$  with all its connected ancestors. So, no atom in  $S(t)$ has any connected ancestors,
and thus the entire  $S(t)$ has in total at most $k h$ ancestors.

Let us now move the case when $t\in dom(\FactSet)$.
Clearly, to be a connected ancestor of someone in $\Skeleton(t)$ an atom in \factset must be a connected parent of someone in $\Skeleton(t)$:

$$\bigcup\nolimits_{\alpha \in \Skeleton(t)} \cancestor(\alpha) = \left(\bigcup\nolimits_{\alpha \in \Skeleton(t)} \cparent(\alpha) \right)  \cap \FactSet,$$

using the above equality one can bound the number of ancestors of atoms in connected ancestor of someone in $\Skeleton(t)$ in the following way:
$$\left|\left(\bigcup\nolimits_{\alpha \in \Skeleton(t)} \cparent(\alpha) \right)  \cap \FactSet\right| \leq \sum_{\alpha\in\Skeleton(t)} \left| \cparent(\alpha) \cap \FactSet \right|.$$

Now, we can easily bound $\left|\cparent(\alpha) \cap \FactSet\right|$ with $h$ for any atom $\alpha$ of $\Chase{\NormRuleSet, \FactSet}$. Finally we are going to show that only a finite (and bounded)  number of atoms of $\Skeleton(t)$ have any connected parents in $\FactSet$. Observe that, if an application of an existential rule $\rho$ created an atom that is at depth greater than $h$   then the connected part of the  body $\rho$, of size at most $h$, could not ``reach'' the atoms of $\FactSet$. From this we get $Nh$ bound for the number of connected ancestors of $\Skeleton(t)$ which gives us result of $M$ being the bound on number of ancestors of $\Skeleton(t)$ for any $t$.
\end{proof}

\subsection{Proving Theorem \ref{th-binary-local}}

Now we can finally prove Theorem \ref{th-binary-local}.
The following is an easy corollary of \cref{lem:crucial} and \cref{lem:dl_over_skeleton}:

\begin{corollary}\label{cor:crucialforruleset}
For any term $t\in dom(\factset)\cup det(\Chase{\NormRuleSet, \FactSet})$  there exists a subset $\FactSet'$ of $\FactSet$ of size at most $M$ such that $\Skeleton(t) \subseteq \Chase{\RuleSet, \FactSet'}$.
\end{corollary}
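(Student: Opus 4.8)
The plan is to take $\FactSet'$ to be exactly the set of facts that the normalized chase actually used to build $\Skeleton(t)$, and to read off its smallness directly from the Crucial Lemma. Concretely, fix any parent function $\ParentFunction_{\NormRuleSet}$ and the induced ancestor function $\AncestorFunction_{\NormRuleSet}$, and set
$$\FactSet' = \bigcup\nolimits_{\alpha \in \Skeleton(t)} \AncestorFunction_{\NormRuleSet}(\alpha).$$
Since every ancestor is, by definition, a fact of $\FactSet$, we have $\FactSet' \subseteq \FactSet$, and \cref{lem:crucial} gives $|\FactSet'| \leq M$ at once. It thus remains only to argue that this $\FactSet'$ already suffices to rebuild the whole tree, i.e. that $\Skeleton(t) \subseteq \Chase{\RuleSet, \FactSet'}$.

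The first key step is to establish the inclusion $\Skeleton(t) \subseteq \ChaseEX{\NormRuleSet, \FactSet'}$. This is where the recursive definition of $\AncestorFunction_{\NormRuleSet}$ pays off: for each $\alpha \in \Skeleton(t)$ the set $\AncestorFunction_{\NormRuleSet}(\alpha)$ is closed under taking parents all the way down to $\FactSet$, so the entire derivation of $\alpha$ through the rules of $\NormRuleSet$ bottoms out inside $\FactSet'$. I would make this precise by induction on the chase step at which an atom of $\Skeleton(t)$, or one of its ancestors, is produced. At each step the parents needed for a rule application from $\RuleSet_{II}$ (one connected conjunct over $\Sigma$ together with one nullary atom $M_\gamma$) are either already present in $\FactSet'$ or, in the nullary case, derivable from $\FactSet'$ in a single step: the witnessing body of the $\RuleSet_{III}$-rule for $M_\gamma$ is among the ancestors we collected, and \cref{lem:nullary-one-step} guarantees that such nullary atoms are produced in one step (while \cref{obs:detachedrulenullbody} pins down that detached atoms are born from a lone nullary atom).

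The second step is to transfer this inclusion from $\NormRuleSet$ back to $\RuleSet$. Applying \cref{lem:skeleton_equivalence} to the instance $\FactSet'$ yields $\ChaseEX{\NormRuleSet, \FactSet'} = \ChaseEX{\RuleSet, \FactSet'} \subseteq \Chase{\RuleSet, \FactSet'}$ (equivalently, one may feed $\ChaseEX{\NormRuleSet, \FactSet'}$ into \cref{lem:dl_over_skeleton}). Combining the two steps gives $\Skeleton(t) \subseteq \Chase{\RuleSet, \FactSet'}$ with $|\FactSet'| \leq M$, which is exactly the claim.

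I expect the main obstacle to be the careful bookkeeping in the first step — in particular, replaying the chase so that the nullary atoms $M_\gamma$ consumed by $\RuleSet_{II}$-rules are available over $\FactSet'$, and ensuring that the Skolem terms (notably a detached root term $t$, whose birth atom is produced from a single nullary atom by \cref{obs:detachedrulenullbody}) are regenerated \emph{verbatim} rather than as fresh isomorphic copies. Here the Skolem naming convention is essential: because each rule application yields a fixed Skolem term, replaying the same applications over $\FactSet'$ produces literally the same atoms of $\Skeleton(t)$, so that once the derivation-closure of $\FactSet'$ is set up correctly the inclusion $\Skeleton(t) \subseteq \Chase{\NormRuleSet, \FactSet'}$ is immediate and the constant term $t$ automatically lands in $dom(\Chase{\NormRuleSet, \FactSet'})$.
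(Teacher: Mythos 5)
Your proposal is correct and takes essentially the same route as the paper, which states this result without proof as an ``easy corollary'' of \cref{lem:crucial} and \cref{lem:dl_over_skeleton}: taking $\FactSet'$ to be the union of the ancestor sets of $\Skeleton(t)$, bounding it by $M$ via the Crucial Lemma, replaying the derivation over $\FactSet'$ (with the Skolem naming convention guaranteeing verbatim regeneration), and transferring back to $\RuleSet$ via \cref{lem:skeleton_equivalence} (equivalently \cref{lem:dl_over_skeleton}) is precisely the argument the paper leaves implicit. No gaps to report.
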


This is almost  Theorem \ref{th-binary-local}, but only for existential atoms of $\Chase{\RuleSet, \FactSet}$. However, observe that:
\begin{observation}\label{obs:dl_ancestors}
There exists a constant $d_\RuleSet$ such that for any set of facts $\FactSet$ and for any atom $\alpha$ of $\Chase{\DLRuleSet, \FactSet}$ there exists a subset $\FactSet'$ of $\FactSet$ such that $|\FactSet'| < d_\RuleSet$ and $\alpha \in \Chase{\RuleSet, \FactSet'}$.
\end{observation}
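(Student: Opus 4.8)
The plan is to reduce this datalog-only claim to the atomic-query rewriting that the BDD hypothesis on $\RuleSet$ already provides. The crucial point is that the datalog chase introduces no fresh terms, so every atom of $\Chase{\DLRuleSet, \FactSet}$ lives entirely over $dom(\FactSet)$; writing $\alpha = R(\bar a)$, we therefore have $\bar a \in dom(\FactSet)^{arity(R)}$, a genuine tuple of constants. This is exactly the feature that distinguishes datalog atoms from the sensible/existential atoms handled by the Crucial Lemma, and it is what lets the generic ``atomic queries are local under BDD'' observation of \cref{local-theories} apply verbatim.

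First I would record a monotonicity step: since $\DLRuleSet \subseteq \RuleSet$, a straightforward induction on the chase stages (using the fixed Skolem naming convention) gives $\Step{i}{\DLRuleSet, \FactSet} \subseteq \Step{i}{\RuleSet, \FactSet}$ for every $i$, hence $\Chase{\DLRuleSet, \FactSet} \subseteq \Chase{\RuleSet, \FactSet}$. In particular $\Chase{\RuleSet, \FactSet} \models R(\bar a)$. I would then apply \cref{thm:bdd-th} to the atomic query $R(\bar y)$ with all variables free: because $\Chase{\RuleSet, \FactSet} \models R(\bar a)$ and $\bar a$ is a tuple over $dom(\FactSet)$, there is a disjunct $\phi(\bar y) \in rew(R(\bar y))$ with $\FactSet \models \phi(\bar a)$. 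The query $\phi$ has at most $\rsize^{at}_\ruleset$ atoms, so a homomorphism witnessing $\FactSet \models \phi(\bar a)$ uses at most $\rsize^{at}_\ruleset$ facts of $\FactSet$; let $\FactSet'$ be exactly those facts, so $|\FactSet'| \leq \rsize^{at}_\ruleset$ and $\FactSet' \models \phi(\bar a)$. Invoking \cref{thm:bdd-th} a second time, now for the instance $\FactSet'$, yields $\Chase{\RuleSet, \FactSet'} \models R(\bar a)$, i.e.\ $\alpha \in \Chase{\RuleSet, \FactSet'}$. Taking $d_\RuleSet = \rsize^{at}_\ruleset + 1$ then gives the strict bound $|\FactSet'| < d_\RuleSet$ required by the statement.

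There is essentially no hard step here, which is the point: once one sees that datalog atoms carry only original constants, the result is immediate from the BDD rewriting of atomic queries. The only things worth checking carefully are (a) that $\bar a$ really consists of constants of $dom(\FactSet)$ — guaranteed by the absence of existentials in $\DLRuleSet$, so no Skolem terms are ever created — and (b) that possible repetitions among the entries of $\bar a$ cause no difficulty, which they do not, since \cref{thm:bdd-th} quantifies over arbitrary tuples $\bar a \in dom(\FactSet)^{|\bar y|}$ and the instantiation $\bar y \mapsto \bar a$ is legitimate regardless of coincidences.
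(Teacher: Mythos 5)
Your proof is correct, but it takes a genuinely different route from the paper's. The paper derives the observation from \cref{soon}: since a datalog atom $\alpha$ mentions only terms of $dom(\FactSet)=dom(\Step{0}{\RuleSet,\FactSet})$, that exercise places $\alpha$ inside $\Step{n_{at}}{\RuleSet,\FactSet}$, and a derivation tree of depth at most $n_{at}$ whose branching is bounded by the maximal body size $h$ touches at most $h^{n_{at}}$ facts of $\FactSet$; those facts form $\FactSet'$. You instead apply the rewriting characterization (\cref{thm:bdd-th}) twice: forward, to extract a disjunct $\phi(\bar y)\in rew(R(\bar y))$ with $\FactSet\models\phi(\bar a)$, whose match image is a set $\FactSet'$ of at most $\rsize^{at}_\ruleset$ facts; and backward, on the instance $\FactSet'$, to conclude $\alpha\in\Chase{\RuleSet,\FactSet'}$. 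Both arguments are short consequences of BDD (indeed \cref{soon} is itself a BDD consequence), but yours never mentions chase steps or derivation/ancestor bookkeeping, and it yields the constant $\rsize^{at}_\ruleset+1$ rather than $h^{n_{at}}$, tying the bound directly to the rewriting sizes that the paper's locality discussion is built around; the paper's version stays within the operational vocabulary (parent functions, derivation depth) it has already set up for the Crucial Lemma. One detail worth making explicit in your write-up: to invoke \cref{thm:bdd-th} on the instance $\FactSet'$ you need $\bar a\in dom(\FactSet')^{|\bar y|}$, which holds because every free variable of $\phi(\bar y)$ occurs in some atom of its body, so the homomorphism witnessing $\FactSet\models\phi(\bar a)$ places all entries of $\bar a$ inside $dom(\FactSet')$; with that noted, ground-atom satisfaction coincides with membership and the conclusion $\alpha\in\Chase{\RuleSet,\FactSet'}$ is immediate.
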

\begin{proof}
From~\cref{soon} the size of $\FactSet'$ can easily be bound by $h^{n_{at}}$ where $h$ is the maximal number of atoms in a rule of $\RuleSet$ and $n_{at}$ is a constant from~\cref{soon}.
\end{proof}

Thus $Md_{\RuleSet}$ is a locality constant for Datalog atoms of $\Chase{\RuleSet, \FactSet}$ this concludes the proof of~\cref{th-binary-local}.

\newpage

\section{Appendix B: Proof of Lemma \ref{soundness-lemma} }\label{soundness-lemma-proof}\label{app-2}

Let $Q=\langle \phi(\bar y) , V  \rangle$ be any live  marked query.

Lemma \ref{soundness-lemma} will follow directly from Lemmas~\ref{lem:dav-sound-cut-green}--\ref{lem:dav-sound-reduce}:

\begin{lemma}\label{lem:dav-sound-cut-green}
For any set of facts $\FactSet$, tuple  $\aaaddd{\factset}$ and $x$ such that $\M' = \cutgreen(\M, x)$ 
(or $\M' = \cutred(\M, x)$) the following holds:
$$\Chase{\RuleSetD, \FactSet} \models \M({\bar a}) \iff \Chase{\RuleSetD, \FactSet} \models \M'({\bar a}).$$
\end{lemma}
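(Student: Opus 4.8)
The statement is an equivalence, so the plan is to prove the two implications separately; the forward direction is routine and all the substance is in the backward one. Throughout, recall from Lemma \ref{obs:dav-three-max}(i) and Definition \ref{cut-op-def} that $x$ is the chosen maximal variable: it is unmarked (i.e.\ $x\notin V$) and occurs in exactly one atom of $q(\M)$, namely $G(z,x)$ in the cut-green case and $R(z,x)$ in the cut-red case; the operation deletes exactly this atom and keeps the marking, so $V(\M')=V$. This is legitimate because $x\notin V$ and, since $\bar y\subseteq V$ by the definition of a marked query, we also have $x\notin\bar y$, so $x$ is existentially quantified and its removal does not touch the free variables.

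For the forward implication, suppose $\Chase{\FactSet}\models \M(\bar a)$ via a homomorphism $h\colon var(\M)\to dom(\Chase{\FactSet})$. Since $x$ occurs only in the deleted atom, the restriction $h|_{var(\M')}$ satisfies every atom of $q(\M')$ and still sends $\bar y$ to $\bar a$. The marking condition transfers verbatim: for every $v\in var(\M')$ we have $h(v)\in dom(\FactSet)$ iff $v\in V=V(\M')$, because we merely discarded the unmarked variable $x$. Hence $h|_{var(\M')}$ witnesses $\Chase{\FactSet}\models \M'(\bar a)$.

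The backward implication is where the theory $\rulesetd$ is genuinely used. Given a witnessing homomorphism $h'\colon var(\M')\to dom(\Chase{\FactSet})$ for $\M'$, the plan is to extend it by choosing an image for $x$. The key observation is that the rule (pins) of Definition \ref{davids-example-def} is applicable to \emph{every} term of the chase (its body is trivially satisfied), so for the term $s=h'(z)$ it produces a fresh green successor $t_G$ and a fresh red successor $t_R$; being Skolem terms, both lie in $dom(\Chase{\FactSet})\setminus dom(\FactSet)$. In the cut-green case I set $h=h'\cup\set{x\mapsto t_G}$, so that $G(h(z),h(x))=G(s,t_G)\in\Chase{\FactSet}$; in the cut-red case I set $h=h'\cup\set{x\mapsto t_R\}$ and use $R(s,t_R)$. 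Because $t_G$ (resp.\ $t_R$) is a Skolem term and $x\notin V$, the marking condition holds at $x$, and it holds at every other variable by inheritance from $h'$; thus $h$ witnesses $\Chase{\FactSet}\models \M(\bar a)$.

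The one point I expect to warrant care is the claim that (pins) supplies a fresh successor of the prescribed colour at $s=h'(z)$ uniformly, irrespective of whether $z$ is marked or unmarked. This is precisely why the operation is colour-specific and why (pins), rather than (grid), is the rule doing the work; it is also why no case analysis on the marking of $z$ is required, since Observation \ref{proper-in-query}(i) places no constraint on the source of an edge whose target $x$ is unmarked.
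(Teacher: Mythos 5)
Your proposal is correct and follows essentially the same route as the paper's proof: the forward direction by restricting the witnessing homomorphism (trivial since $q(\M')\subseteq q(\M)$ and $V(\M')=V(\M)$), and the backward direction by invoking the rule (pins) to obtain a successor of $h'(z)$ of the appropriate colour and extending $h'$ at $x$. Your explicit check that the (pins)-created term is a Skolem term outside $dom(\FactSet)$, so that the marking condition at the unmarked variable $x$ is respected, is a detail the paper leaves implicit, but it is the same argument.
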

\begin{proof} 
Let $G(z,x)$ be the atom removed  from $q(Q)$ by $\cutgreen$.

$(\Rightarrow)$. It follows immediately as $q(\M')$ is a subset of $q(\M)$ and $V(\M') = V(\M)$.

$(\Leftarrow)$. Let $h'$ be a homomorphism witnessing that $\Chase{\RuleSetD, \FactSet} \models \M'({\bar a})$. 
Due to the rule $(pins)$ of $ruleset_d$ there must exist $t\in dom(\Chase{\RuleSetD, \FactSet})$ such that 
$G(h'(z),a)\in \Chase{\RuleSetD, \FactSet}$. Define a new homomorphism $h$ as $h'\cup \pair{x, t}$. Then $h$ is
a homomorphism witnessing that $\Chase{\RuleSetD, \FactSet} \models \M({\bar a})$. 

The proof for $\cutred$ is analogous.
\end{proof}

\begin{lemma}\label{lem:dav-sound-fuse-green}
For any set of facts $\FactSet$, tuple  $\aaaddd{\factset}$  and $x, z, z'$ such that $\M' = \mergegreen(\M, x, z, z')$
(or $\M' = \mergered(\M, x, z, z')$)    the following holds:
$$\Chase{\RuleSetD, \FactSet} \models \M({\bar a}) \iff \Chase{\RuleSetD, \FactSet} \models \M'({\bar a}).$$
\end{lemma}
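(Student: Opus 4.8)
The plan is to prove the two implications separately, putting the real work into the forward direction. Fix the maximal variable $x$ and the two variables $z \neq z'$ with $G(z,x), G(z',x)\in q(\M)$ that witness Lemma \ref{obs:dav-three-max}(iii); the argument for $\mergered$ is verbatim the same with $R$ in place of $G$. Recall that $\M' = \mergegreen(\M,x,z,z')$ is just $\M$ with every occurrence of $z'$ renamed to $z$, so the renaming map $r : var(\M) \to var(\M')$ sending $z'$ to $z$ and fixing all other variables is a (marking-respecting, using Observation \ref{proper-in-query}(iii) to see that $z$ and $z'$ carry the same mark) homomorphism from $q(\M)$ onto $q(\M')$.

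For the implication $(\Leftarrow)$ I would simply compose with $r$. If $h'$ witnesses $\Chase{\RuleSetD,\FactSet}\models \M'(\bar a)$, then $h = h'\circ r$ witnesses $\Chase{\RuleSetD,\FactSet}\models \M(\bar a)$: each atom of $q(\M)$ is sent by $r$ into $q(\M')$ and then by $h'$ into the chase, and the marking condition transfers because $h(z') = h'(z)$ lies in $dom(\FactSet)$ exactly when $z \in V(\M')$, i.e. exactly when $z' \in V(\M)$.

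The substance is in $(\Rightarrow)$. Suppose $h$ witnesses $\Chase{\RuleSetD,\FactSet}\models \M(\bar a)$. To push $h$ down to $\M'$ it suffices to establish the single equality $h(z) = h(z')$: once this holds, $h$ factors through $r$ as a well-defined map $h'$ on $var(\M')$, which is still a homomorphism (every atom of $q(\M')$ is $r(\alpha)$ for some $\alpha\in q(\M)$, and $h'(r(\alpha)) = h(\alpha)$ using $h(z')=h(z)$) and still respects marks, giving $\Chase{\RuleSetD,\FactSet}\models \M'(\bar a)$. To force the equality I would exploit maximality of $x$: since $x$ is maximal it is unmarked, $x\notin V(\M)$, so the marking condition gives $h(x)\notin dom(\FactSet)$; thus $h(x)$ is a chase-created term, and $G(h(z),h(x))$ and $G(h(z'),h(x))$ are two green edges of $\Chase{\RuleSetD,\FactSet}$ ending in it.

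The main obstacle is the one structural fact I must nail down: \emph{a chase-created term is the second argument of at most one green edge, and of at most one red edge.} This is exactly where the shape of $\RuleSetD$ is used: in every head of $\RuleSetD$ the target position of a green (resp.\ red) atom is occupied by a freshly quantified variable, so if $h(x)\notin dom(\FactSet)$ appears as the second argument of some $G(s,h(x))$, this atom must arise from the application that created $h(x)$, and by the Skolem naming convention that application — hence its source $s$ — is uniquely determined by $h(x)$ (this is the green incarnation of Observation \ref{lem:termparentatom}). Applying this to the two edges above yields $h(z) = s = h(z')$, which is all we needed. The only subtle bookkeeping, which I expect to be the delicate point, is the multi-head rule $(grid)$, where the created term sits in both a red and a green head atom; there one reasons on the single ternary birth atom of the single-head reformulation promised after Definition \ref{davids-example-def}, of which $G(s,h(x))$ is a projection, so uniqueness of the birth atom still pins down $s$.
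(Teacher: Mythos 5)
Your proof is correct and follows essentially the same route as the paper's: the backward direction by composing with the marking-preserving identification homomorphism, and the forward direction by showing $h(z)=h(z')$ from the facts that $h(x)\notin dom(\FactSet)$ (since $x$ is unmarked) and that chase-created terms have $G$-in-degree (resp.\ $R$-in-degree) at most one. The only difference is that you spell out, via birth atoms and the Skolem naming convention, the in-degree fact that the paper simply asserts as following ``from the rules of $\ruleset_d$'' --- a welcome bit of extra rigor, not a divergence in method.
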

\begin{proof}
Note that there exists a homomorphism (the one that identifies $z'$ with $z$) from $q(\M)$ to $q(\M')$ which preserves markings of variables. Thus $(\Leftarrow)$ is trivial.

$(\Rightarrow)$. Let $h$ be a homomorphism witnessing that $\Chase{\RuleSetD, \FactSet} \models \M({\bar a})$. 
If we can show that 
 $h(z) = h(z')$ then $h$ will witness that $\Chase{\RuleSetD, \FactSet} \models \M'({\bar a})$ as well. 
 First recall that, since $x\not\in V(Q)$ we can be sure that $h(x)\in dom(\Chase{\RuleSetD, \FactSet} \setminus dom(\factset)$.  
 Then notice that it follows from the rules of $\ruleset_d $ that the  in-degree, with respect to relation $G$,
 of any term of $dom(\Chase{\RuleSetD, \FactSet} \setminus dom(\factset)$  is at most one. Thus the only way for
 $\Chase{\RuleSetD, \FactSet} \models \M({\bar a})$ to happen is that $h(z) = h(z')$ thus $\Chase{\RuleSetD, \FactSet} \models \M'({\bar a})$.
The proof for  $\mergered$ is analogous.
\end{proof}

\begin{lemma}\label{lem:dav-sound-reduce}
For any set of facts $\FactSet$, tuple  $\aaaddd{\factset}$
and $x$ such that $\setM = \reduce(\M, x)$ the following holds:
$$\Chase{\RuleSetD, \FactSet} \models \M({\bar a}) \iff \exists_{\M' \in \setM} \Chase{\RuleSetD, \FactSet} \models \M'({\bar a}).$$
\end{lemma}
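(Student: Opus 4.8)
The plan is to read $\reduce$ as the inverse of a single application of rule (grid), and to push this correspondence through homomorphisms and markings. Before the two directions, I would isolate the one structural fact about $\Chase{\RuleSetD, \FactSet}$ that makes everything work. Inspecting the three rules — each head places a fresh existential term in the target position of every edge it creates — every skolem term receives all of its incoming $R$- and $G$-edges at birth; and since rule (pins) gives its red witness and its green witness distinct names, the only skolem terms carrying both an incoming red edge and an incoming green edge are those created by (loop) or by (grid). This yields the body-recovery fact I will rely on: if $t\notin dom(\factset)$ and $R(r,t),G(g,t)\in \Chase{\RuleSetD, \FactSet}$, then there are terms $p,q$ with $R(p,g),G(p,q),G(q,r)\in \Chase{\RuleSetD, \FactSet}$ (with $p=q=t$ in the degenerate (loop) case). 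These three atoms are exactly what $\reduce$ inserts, namely $R(x',x_g),G(x',x''),G(x'',x_r)$, once we read $x'\mapsto p$ and $x''\mapsto q$.

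For the direction ($\Rightarrow$), I would take a homomorphism $h$ witnessing $\Chase{\RuleSetD, \FactSet}\models\M(\bar a)$. Since $x$ is maximal we have $x\notin V(\M)$, so $h(x)$ is a skolem term, and the atoms $R(x_r,x),G(x_g,x)$ make it a term with an incoming red edge from $r:=h(x_r)$ and an incoming green edge from $g:=h(x_g)$. Applying the body-recovery fact gives $p,q$, and I define $h'$ to agree with $h$ away from $x$ and to send $x'\mapsto p$, $x''\mapsto q$. A direct check shows $h'$ sends the three inserted atoms into $\Chase{\RuleSetD, \FactSet}$ and leaves every inherited atom unchanged, so $h'$ witnesses $q(\M')$. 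It then remains to pick the member of $\setM$ whose marking of $\set{x',x''}$ records precisely which of $p,q$ lies in $dom(\factset)$; the four markings exhaust these possibilities and the markings of the old variables are inherited from $V(\M)$, so the chosen $\M'$ satisfies $\Chase{\RuleSetD, \FactSet}\models\M'(\bar a)$. As a consistency check, Observation \ref{proper-in-chase} forbids $q\in dom(\factset)$ while $p\notin dom(\factset)$, so the improperly marked query $\pair{\phi'(\bar y),V(\M)\cup\set{x''}}$ is never the one selected.

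For ($\Leftarrow$), I would start from any $\M'\in\setM$ and a witnessing $h'$. The images of $R(x',x_g),G(x',x''),G(x'',x_r)$ are precisely an instantiation of the body of (grid), and since $\Chase{\RuleSetD, \FactSet}$ is closed under rule application, firing (grid) on it produces a fresh skolem term $t$ with $R(h'(x_r),t),G(h'(x_g),t)\in\Chase{\RuleSetD, \FactSet}$. Setting $h=h'$ away from $x$ and $h(x)=t$ gives a homomorphism witnessing $q(\M)$; since $t\notin dom(\factset)$ matches $x\notin V(\M)$ and all old markings are inherited, $h$ witnesses $\Chase{\RuleSetD, \FactSet}\models\M(\bar a)$. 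Note that this direction never inspects the marking of $x',x''$, so it holds uniformly for all four members of $\setM$.

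The hard part will be the body-recovery fact, since it is the only place where the specific design of $\RuleSetD$ is used — in particular the role of rule (loop) as a degenerate grid body — and it must be argued from the in-degree and closure properties of the chase recorded in Observation \ref{proper-in-chase} rather than assumed. Everything downstream is routine bookkeeping about homomorphisms and the four markings.
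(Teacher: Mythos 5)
Your proof is correct and follows essentially the same route as the paper's: your ($\Leftarrow$) direction fires (grid) on the images of the three inserted atoms to obtain a witness for $x$, and your ($\Rightarrow$) direction recovers a grid body from the birth atoms of $h(x)$ and then selects the member of $\setM$ whose marking of $x',x''$ records which of the recovered terms lie in $dom(\factset)$, exactly as in the paper. If anything, your body-recovery fact is slightly more careful than the paper's corresponding step, which asserts that $h(x)$ ``could be created only by rule (grid) as it has an in-degree of two'' --- an assertion that overlooks the (loop)-created term (also of in-degree two, via self-loops) and implicitly needs connectedness of the query together with non-emptiness of $V(\M)$ to rule it out, whereas you simply absorb that degenerate case by taking $p=q=t$.
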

\begin{proof}
Let $x'$, $x''$, $x_r$, and $x_g$ be variables of $q(\M)$ such that $R(x_r, x), G(x_g, x) \in q(\M)$ and that $R(x',x_g), G(x', x''), G(x'', x_r) \in q(\M')$.

$(\Leftarrow)$. Let $\M' \in \setM$ be such that $\Chase{\RuleSetD, \FactSet} \models \M'({\bar a})$ and let $h'$ be a homomorphism witnessing that. We need to
show that there exists a homomorphism $h$ witnessing $\Chase{\RuleSetD, \FactSet} \models \M({\bar a})$.

Since $R(x',x_g), G(x', x''), G(x'', x_r)$ are atoms of $q(Q)$, and $h'$ is a homomorphism, we know that 
$R(h'(x_g), h'(x')), G(h'(x'), h'(x'')), G(h'(x''), h'(x_r))$ are atoms of $\Chase{\RuleSetD, \FactSet}$. But, since $(grid)$ is a rule of $\ruleset_d$,
this implies that there exists an element $t\in \Chase{\RuleSetD, \FactSet}$, such that $G(h'(x_g), h'(t)) $ and  $R(h'(x_r), h'(t)) $
are also in  $\Chase{\RuleSetD, \FactSet}$.

Define $h$ as:
\begin{itemize}
    \item $h(u) = h'(u)$ for $u \in var(\M) \setminus \set{x}$.
    \item $h(x) = t$.
\end{itemize}

$(\Rightarrow)$. Let $h$ be a homomorphism witnessing that $\Chase{\RuleSetD, \FactSet} \models \M({\bar a})$. We will show that there exist $\M' \in \setM$ and a homomorphism $h'$ such that $h'$ is witnessing $\Chase{\RuleSetD, \FactSet} \models \M'({\bar a})$. Again recall that $x$ is an unmarked variable and so let us find parents of $h(x)$ in $\Chase{\RuleSetD, \FactSet}$. Note that $h(x)$ could be created only by rule $(grid)$ as it has an in-degree of two. Let $\sigma$ be such that $appl((grid), \sigma) = R(h(x_r), h(x)), G(h(x_g), h(x))$.
We set $h' = h \setminus \set{x, h(x)} \cup \set{\pair{x', \sigma(x)}, \pair{x'', \sigma(y)}}$.
There are four possible picks for $\M'$ from $\setM$. While homomorphism $h'$ works for any of them as elements of $\setM$ differ only by markings, we need to make sure that the marking of $\M'$ agrees with $h'$. Obviously for every variable of $\M$ this is the case. However we have two new variables to consider namely $x'$ and $x''$. Thus we need to take $\M'$ such that it satisfies: 
\begin{itemize}
    \item $h'(x')$ is a constant of $\FactSet$ if and only if $x' \in V(\M')$.
    \item $h'(x'')$ is a constant of $\FactSet$ if and only if $x'' \in V(\M')$.
\end{itemize} 
This is trivially possible from~\cref{reduce-def}.
\end{proof}

\end{document}